\newcommand{\vsa}{\vspace{.1in}}
\newcommand{\vsb}{\vspace{.2in}}
\newcommand{\vsn}{\vspace{-.1in}}
\newcommand{\name}[1]{\lceil #1 \rceil}
\newcommand{\coname}[1]{\lfloor #1 \rfloor}
\newcommand{\opp}{\mathsf{op}}
\newcommand{\rarr}{\rightarrow}
\newcommand{\lrarr}{\longrightarrow}
\newcommand{\lsem}{\llbracket}
\newcommand{\rsem}{\rrbracket}
\newcommand{\Curryr}{\Lambda_{\mathsf{r}}}
\newcommand{\Rapp}{\mathsf{RApp}}
\newcommand{\Bcomb}{\mathbf{B}}
\newcommand{\nn}{\mathbf{n}}
\newcommand{\mm}{\mathbf{m}}
\newcommand{\pp}{\mathbf{p}}
\newcommand{\qq}{\mathbf{q}}
\newcommand{\kk}{\mathbf{k}}
\newcommand{\zz}{\mathbf{0}}
\newcommand{\beq}{\begin{equation}}
\newcommand{\eeq}{\end{equation}\par\noindent}
\newcommand{\PP}{\mathcal{P}}
\newcommand{\CC}{\mathcal{C}}
\newcommand{\DD}{\mathcal{D}}
\newcommand{\II}{\mathrm{I}}
\newcommand{\dd}{\llcorner}
\newcommand{\ddd}{\lrcorner}
\newcommand{\uu}{\ulcorner}
\newcommand{\uuu}{\urcorner}
\newcommand{\Nat}{\mathbb{N}}
\newcommand{\ie}{\textit{i.e.}~}
\newcommand{\isoarrow}{\stackrel{\cong}{\longrightarrow}}
\newcommand{\id}{\mathsf{id}}
\newcommand{\consis}{\uparrow}
\newcommand{\incomp}{\;\#\;}
\newcommand{\Inv}[1]{\mathsf{Inv}(#1)}
\newcommand{\Node}[2]{\mathsf{N}(#1,#2)}
\newcommand{\TL}{\mathcal{T}}
\newcommand{\TLM}[1]{\mathcal{M}_{#1}}
\newcommand{\TLA}[2]{\mathcal{A}_{#2}(#1)}
\newcommand{\TLC}{\mathbf{TL}}
\newcommand{\comp}[2]{#2 \odot #1}
\newcommand{\Idrel}[1]{1_{#1}}
\newcommand{\Cyc}[2]{\chi (#1,#2)}
\newcommand{\Trace}[1]{\mathsf{Tr}(#1)}
\title{Temperley-Lieb Algebra: From  Knot Theory to Logic and Computation via  Quantum Mechanics}
\author{Samson Abramsky}
\date{12th March 2008}
\begin{document}

\maketitle

\section{Introduction}

Our aim in this paper is to trace some of the surprising and beautiful connections which are beginning to emerge between a number of apparently disparate topics.

\subsection{Knot Theory} 
Vaughan Jones' discovery of his new polynomial invariant of knots in 1984 \cite{Jon} triggered a spate of mathematical developments  relating knot theory, topological quantum field theory, and statistical physics \textit{inter alia} \cite{Wit,Kauffmanbook}. A central r\^ole, both in the initial work by Jones and in the subsequent developments, was played by what has come to be known as the \emph{Temperley-Lieb algebra}.\footnote{The original work of Temperley and Lieb \cite{TemLie} was in discrete lattice models of statistical physics. In finding exact solutions for a certain class of systems, they had identified the same relations which Jones, quite independently, found later in his work.} 

\subsection{Categorical Quantum Mechanics} 
Recently, motivated by the needs of Quantum Information and Computation, Abramsky and Coecke have recast \emph{the foundations of Quantum Mechanics itself}, in the more abstract language of category theory. The key contribution is the paper  \cite{AbrCoe2},  which  develops an axiomatic presentation of quantum mechanics in the general setting of \emph{strongly compact closed categories}, which  is adequate for the needs of Quantum Information and Computation. Moreover, this categorical axiomatics can be presented in terms of  a \emph{diagrammatic calculus} which is both intuitive and effective, and  can replace low-level computation with matrices by much more conceptual  reasoning. This diagrammatic calculus can be seen as a proof system for a logic \cite{AD04}, leading to a radically new perspective on what the right logical formulation for Quantum Mechanics should be.

This line of work has a direct connection to the Temperley-Lieb algebra, which can be put in a categorical framework, in which it can be described essentially as \emph{the free pivotal dagger category on one self-dual generator} \cite{FY}.\footnote{Strictly speaking, the full Temperley-Lieb category over a ring $R$ is the free $R$-linear enrichment of this free pivotal dagger category.} Here  \emph{pivotal dagger category} is a non-symmetric (``planar'') version of \emph{(strongly or dagger) compact closed category} --- the key notion in the Abramsky-Coecke axiomatics. 

\subsection{Logic and Computation}
The Temperley-Lieb algebra itself has some direct and striking connections to basic ideas in Logic and Computation, which offer an intriguing and promising bridge between these \textit{prima facie} very different areas.
We shall focus in particular on the following two topics:
\begin{itemize}
\item The Temperley-Lieb algebra has always hitherto been presented as a \emph{quotient} of some sort:  either algebraically by generators and relations as in Jones' original presentation \cite{Jon}, or as a diagram algebra modulo planar isotopy as in Kauffman's presentation \cite{Kau}. We shall use tools from \emph{Geometry of Interaction} \cite{GoI}, a dynamical interpretation of proofs under Cut Elimination developed as an off-shoot of Linear Logic \cite{LL}, to give a \emph{direct description} of the Temperley-Lieb category --- a \emph{fully abstract presentation}, in Computer Science terminology \cite{Fullabs}. This also brings something new to the Geometry of Interaction, since we are led to develop a planar version of it, and to verify that the interpretation of Cut-Elimination (the ``Execution Formula'' \cite{GoI}, or ``composition by  feedback'' \cite{NFGoI,Retrac}) preserves planarity.

\item We shall also show how the Temperley-Lieb algebra provides a natural setting in which computation can be performed diagrammatically as \emph{geometric simplification} --- ``yanking lines straight''. We shall introduce a ``planar $\lambda$-calculus'' for this purpose, and show how it can be interpreted in the Temperley-Lieb category.
\end{itemize}

\subsection{Outline of the Paper}
We briefly summarize the further contents of this paper.
In Section~2 we introduce the Temperley-Lieb algebras, emphasizing Kauffman's diagrammatic formulation. We also  briefly outline how the Temperley-Lieb algebra figures in the construction  of the Jones polynomial.
In Section~3 we describe the Temperley-Lieb category, which provides a more structured perspective on the Temperley-Lieb algebras. In Section~4, we discuss some features of this category, which have apparently not  been considered previously, namely a characterization of monics and epics, leading to results on image factorization and splitting of idempotents.
In Section~5, we briefly discuss the connections with the Abramsky-Coecke categorical formulation of Quantum Mechanics, and raise some issues and questions about the possible relationship betwen planar, braided and symmetric settings for Quantum Information and Computation.
In Section~6 we develop a planar version of Geometry of Interaction, and the direct ``fully abstract'' presentation of the Temperley-Lieb category. In Section~7 we discuss the planar $\lambda$-calculus and its interpretation in the Temperley-Lieb category.  We conclude in Section~8 with some further directions.

\textbf{Note to the Reader}
Since this paper aims at indicating cross-currents between several fields, it has been written in a somewhat expansive style, and an attempt has been made to explain the context of the various ideas we will discuss. We hope it will be accessible to readers with a variety of backgrounds.

\section{The Temperley-Lieb Algebra}
Our starting point is the Temperley-Lieb algebra, which has played a central role in the discovery by 
Vaughan Jones  of his new polynomial invariant of knots and links \cite{Jon}, and  in the subsequent developments over the past two decades relating knot theory, topological quantum field theory, and statistical physics \cite{Kauffmanbook}.

Jones' approach was algebraic: in his work, the Temperley-Lieb algebra was originally presented, rather forbiddingly, in terms of abstract generators and relations.
It was recast in beautifully elementary and conceptual terms by Louis Kauffman as a \emph{planar diagram algebra} \cite{Kau}. We begin with the algebraic presentation.

\subsection{Temperley-Lieb algebra: generators and relations}
We fix a ring $R$; in applications to knot polynomials, this is taken to be a ring of Laurent polynomials $\mathbb{C}[X, X^{-1}]$.
Given a choice of \emph{parameter} $\tau \in R$ and a \emph{dimension} $n \in \Nat$, we define the Temperley-Lieb algebra $\TLA{\tau}{n}$ to be the unital, associative $R$-linear algebra with generators
\[ U_{1} , \ldots , U_{n-1} \]
and relations
\[ \begin{array}{rclr}
U_{i}U_{j}U_{i} & = & U_{i} & \quad |i-j| = 1\\
U_{i}^{2} & = & \tau \cdot U_{i} & \\
U_{i}U_{j} & = & U_{j}U_{i} & |i-j| > 1
\end{array}
\]

Note that the only relations used in defining the algebra are multiplicative ones. This suggests that we can obtain the algebra $\TLA{\tau}{n}$ by presenting the multiplicative monoid $\TLM{n}$, and then obtaining $\TLA{\tau}{n}$ as the \emph{monoid algebra} of formal $R$-linear combinations $\sum_{i} r_{i} \cdot a_{i}$ over $\TLM{n}$, with the multiplication in $\TLA{\tau}{n}$ defined as the bilinear extension of the monoid multiplication in $\TLM{n}$:
\[ (\sum_{i} r_{i} \cdot a_{i})(\sum_{j} s_{j} \cdot b_{j}) = \sum_{i,j} (r_{i}s_{j})\cdot (a_{i}b_{j}) . \]
We define $\TLM{n}$ as the monoid with generators
\[ \delta , U_{1} , \ldots , U_{n-1} \]
and relations
\[ \begin{array}{rclr}
U_{i}U_{j}U_{i} & = & U_{i} & \quad |i-j| = 1\\
U_{i}^{2} & = & \delta U_{i} & \\
U_{i}U_{j} & = & U_{j}U_{i} & |i-j| > 1\\
\delta U_{i} & = & U_{i}\delta &
\end{array}
\]
We can then obtain $\TLA{\tau}{n}$ as the monoid algebra over $\TLM{n}$, subject to the identification 
\[ \delta = \tau \cdot 1 . \]

\subsection{Diagram Monoids}
These formal algebraic ideas are brought to vivid geometric life by Kauffman's interpretation of the monoids $\TLM{n}$ as \emph{diagram monoids}.

We start with two parallel rows of $n$ dots (geometrically, the dots are points in the plane).
The general form of an element of the  monoid  is obtained by ``joining up the dots'' pairwise in a smooth, planar fashion, where the arc connecting each pair of dots must lie within the rectangle framing the two parallel rows of dots. Such diagrams are identified up to planar isotopy, \ie continuous deformation within the portion of the plane bounded by the framing rectangle..

Thus the generators $U_{1}, \ldots , U_{n-1}$ can be drawn as follows:
\vspace{.1in}
\begin{center}
\psset{unit=1in,cornersize=absolute,dimen=middle}%
\begin{pspicture}(-0,-0.973848)(2.782422,0.026152)%
\newgray{fillval}{0.3}
\pscircle[fillstyle=solid,fillcolor=fillval](0.015458,0){0.015458}
\pscircle[fillstyle=solid,fillcolor=fillval](0.200953,0){0.015458}
\pscircle[fillstyle=solid,fillcolor=fillval](0.386447,0){0.015458}
\rput(0.571942,0){$\cdots$}
\pscircle[fillstyle=solid,fillcolor=fillval](0.757437,0){0.015458}
\pscircle[fillstyle=solid,fillcolor=fillval](0.015458,-0.649232){0.015458}
\pscircle[fillstyle=solid,fillcolor=fillval](0.200953,-0.649232){0.015458}
\pscircle[fillstyle=solid,fillcolor=fillval](0.386447,-0.649232){0.015458}
\rput(0.571942,-0.649232){$\cdots$}
\pscircle[fillstyle=solid,fillcolor=fillval](0.757437,-0.649232){0.015458}
\uput{0.5ex}[u](0.015458,0.015458){$1$}
\uput{0.5ex}[u](0.200953,0.015458){$2$}
\uput{0.5ex}[u](0.386447,0.015458){$3$}
\uput{0.5ex}[u](0.757437,0.015458){$n$}
\uput{0.5ex}[d](0.015458,-0.66469){$1$}
\uput{0.5ex}[d](0.200953,-0.66469){$2$}
\uput{0.5ex}[d](0.386447,-0.66469){$3$}
\uput{0.5ex}[d](0.757437,-0.66469){$n$}
\psarc(0.108205,-0.015458){0.092747}{-179.999999}{-0.000001}
\psarcn(0.108205,-0.633774){0.092747}{180}{0}
\psline(0.386447,-0.015458)(0.386447,-0.633774)
\psline(0.757437,-0.015458)(0.757437,-0.633774)
\rput(0.386447,-0.973848){$U_1$}
\rput(1.391211,-0.309158){$\cdots$}
\pscircle[fillstyle=solid,fillcolor=fillval](2.024985,0){0.015458}
\rput(2.210479,0){$\cdots$}
\pscircle[fillstyle=solid,fillcolor=fillval](2.395974,0){0.015458}
\pscircle[fillstyle=solid,fillcolor=fillval](2.581469,0){0.015458}
\pscircle[fillstyle=solid,fillcolor=fillval](2.766964,0){0.015458}
\pscircle[fillstyle=solid,fillcolor=fillval](2.024985,-0.649232){0.015458}
\rput(2.210479,-0.649232){$\cdots$}
\pscircle[fillstyle=solid,fillcolor=fillval](2.395974,-0.649232){0.015458}
\pscircle[fillstyle=solid,fillcolor=fillval](2.581469,-0.649232){0.015458}
\pscircle[fillstyle=solid,fillcolor=fillval](2.766964,-0.649232){0.015458}
\uput{0.5ex}[u](2.024985,0.015458){$1$}
\uput{0.5ex}[u](2.766964,0.015458){$n$}
\uput{0.5ex}[d](2.024985,-0.66469){$1$}
\uput{0.5ex}[d](2.766964,-0.66469){$n$}
\psarc(2.674216,-0.015458){0.092747}{180}{360}
\psarcn(2.674216,-0.633774){0.092747}{180}{0}
\psline(2.024985,-0.015458)(2.024985,-0.633774)
\psline(2.395974,-0.015458)(2.395974,-0.633774)
\rput(2.395974,-0.973848){$U_{n-1}$}
\end{pspicture}%
\end{center}
The generator $\delta$ corresponds to a loop $\bigcirc$ --- all such loops are identified up to isotopy.

We refer to arcs connecting dots in the top row as \emph{cups}, those connecting dots in the bottom row as \emph{caps}, and those connecting a dot in the top row to a dot in the bottom row as \emph{through lines}.

Multiplication $xy$ is defined by identifying the bottom row of $x$ with the top row of $y$, and composing paths. In general loops may be formed --- these are ``scalars'', which can float freely across these figures. The relations can be illustrated as follows:
\vsa
\begin{center}
\psset{unit=1in,cornersize=absolute}%
\begin{pspicture}(0,-0.989796)(3.306122,0.010204)
\psset{linewidth=1pt}%
\newgray{fillval}{0.3}
\pscircle[fillstyle=solid,fillcolor=fillval](0.010204,0){0.017123}
\pscircle[fillstyle=solid,fillcolor=fillval](0.132653,0){0.017123}
\pscircle[fillstyle=solid,fillcolor=fillval](0.255102,0){0.017123}
\pscircle[fillstyle=solid,fillcolor=fillval](0.010204,-0.22449){0.017123}
\pscircle[fillstyle=solid,fillcolor=fillval](0.132653,-0.22449){0.017123}
\pscircle[fillstyle=solid,fillcolor=fillval](0.255102,-0.22449){0.017123}
\pscircle[fillstyle=solid,fillcolor=fillval](0.010204,-0.44898){0.017123}
\pscircle[fillstyle=solid,fillcolor=fillval](0.132653,-0.44898){0.017123}
\pscircle[fillstyle=solid,fillcolor=fillval](0.255102,-0.44898){0.017123}
\pscircle[fillstyle=solid,fillcolor=fillval](0.010204,-0.673469){0.017123}
\pscircle[fillstyle=solid,fillcolor=fillval](0.132653,-0.673469){0.017123}
\pscircle[fillstyle=solid,fillcolor=fillval](0.255102,-0.673469){0.017123}
\psarc(0.071429,-0.010204){0.061224}{-179.999999}{-0.000001}
\psline(0.255102,-0.010204)(0.255102,-0.214286)
\psarcn(0.071429,-0.214286){0.061224}{180}{0}
\psline(0.010204,-0.234694)(0.010204,-0.438776)
\psarc(0.193878,-0.234694){0.061224}{180}{360}
\psarcn(0.193878,-0.438776){0.061224}{180}{0}
\psarc(0.071429,-0.459184){0.061224}{180}{360}
\psline(0.255102,-0.459184)(0.255102,-0.663265)
\psarcn(0.071429,-0.663265){0.061224}{180}{0}
\rput(0.571429,-0.346939){{\huge $=$}}
\pscircle[fillstyle=solid,fillcolor=fillval](0.887755,0){0.017123}
\pscircle[fillstyle=solid,fillcolor=fillval](1.010204,0){0.017123}
\pscircle[fillstyle=solid,fillcolor=fillval](1.132653,0){0.017123}
\pscircle[fillstyle=solid,fillcolor=fillval](0.887755,-0.673469){0.017123}
\pscircle[fillstyle=solid,fillcolor=fillval](1.010204,-0.673469){0.017123}
\pscircle[fillstyle=solid,fillcolor=fillval](1.132653,-0.673469){0.017123}
\psarc(0.94898,-0.010204){0.061224}{-179.999998}{-0.000002}
\psline(1.132653,-0.010204)(1.132653,-0.663265)
\psarcn(0.94898,-0.663265){0.061224}{179.999998}{0.000002}
\rput(0.561224,-0.989796){$U_{1}U_{2}U_{1} = U_{1}$}
\pscircle[fillstyle=solid,fillcolor=fillval](1.969388,0){0.017123}
\pscircle[fillstyle=solid,fillcolor=fillval](2.091837,0){0.017123}
\pscircle[fillstyle=solid,fillcolor=fillval](2.214286,0){0.017123}
\pscircle[fillstyle=solid,fillcolor=fillval](1.969388,-0.22449){0.017123}
\pscircle[fillstyle=solid,fillcolor=fillval](2.091837,-0.22449){0.017123}
\pscircle[fillstyle=solid,fillcolor=fillval](2.214286,-0.22449){0.017123}
\pscircle[fillstyle=solid,fillcolor=fillval](1.969388,-0.44898){0.017123}
\pscircle[fillstyle=solid,fillcolor=fillval](2.091837,-0.44898){0.017123}
\pscircle[fillstyle=solid,fillcolor=fillval](2.214286,-0.44898){0.017123}
\psarc(2.030612,-0.010204){0.061224}{180}{360}
\psline(2.214286,-0.010204)(2.214286,-0.214286)
\psarcn(2.030612,-0.214286){0.061224}{180}{0}
\psarc(2.030612,-0.234694){0.061224}{180}{360}
\psline(2.214286,-0.234694)(2.214286,-0.438776)
\psarcn(2.030612,-0.438776){0.061224}{180}{0}
\rput(2.632653,-0.22449){{\huge $=$}}
\pscircle[fillstyle=solid,fillcolor=fillval](3.05102,0){0.017123}
\pscircle[fillstyle=solid,fillcolor=fillval](3.173469,0){0.017123}
\pscircle[fillstyle=solid,fillcolor=fillval](3.295918,0){0.017123}
\pscircle[fillstyle=solid,fillcolor=fillval](3.05102,-0.44898){0.017123}
\pscircle[fillstyle=solid,fillcolor=fillval](3.173469,-0.44898){0.017123}
\pscircle[fillstyle=solid,fillcolor=fillval](3.295918,-0.44898){0.017123}
\psarc(3.112245,-0.010204){0.061224}{180}{360}
\psline(3.295918,-0.010204)(3.295918,-0.438776)
\psarcn(3.112245,-0.438776){0.061224}{180}{0}
\pscircle(2.938776,-0.22449){0.057939}
\rput(2.622449,-0.969388){$U_{1}^2 = \delta U_{1}$}
\end{pspicture}%

\end{center}
\vsa
\begin{center}
\psset{unit=1in,cornersize=absolute,dimen=middle}%
\begin{pspicture}(0,-0.789474)(1.642105,0.010526)%
\psset{linewidth=1pt}%
\newgray{fillval}{0.3}
\pscircle[fillstyle=solid,fillcolor=fillval](0.010526,0){0.010526}
\pscircle[fillstyle=solid,fillcolor=fillval](0.136842,0){0.010526}
\pscircle[fillstyle=solid,fillcolor=fillval](0.263158,0){0.010526}
\pscircle[fillstyle=solid,fillcolor=fillval](0.389474,0){0.010526}
\pscircle[fillstyle=solid,fillcolor=fillval](0.010526,-0.231579){0.010526}
\pscircle[fillstyle=solid,fillcolor=fillval](0.136842,-0.231579){0.010526}
\pscircle[fillstyle=solid,fillcolor=fillval](0.263158,-0.231579){0.010526}
\pscircle[fillstyle=solid,fillcolor=fillval](0.389474,-0.231579){0.010526}
\pscircle[fillstyle=solid,fillcolor=fillval](0.010526,-0.463158){0.010526}
\pscircle[fillstyle=solid,fillcolor=fillval](0.136842,-0.463158){0.010526}
\pscircle[fillstyle=solid,fillcolor=fillval](0.263158,-0.463158){0.010526}
\pscircle[fillstyle=solid,fillcolor=fillval](0.389474,-0.463158){0.010526}
\psarc(0.073684,-0.010526){0.063158}{-179.999999}{-0.000001}
\psline(0.263158,-0.010526)(0.263158,-0.221053)
\psline(0.389474,-0.010526)(0.389474,-0.242105)
\psarcn(0.073684,-0.221053){0.063158}{180}{0}
\psarc(0.326316,-0.242105){0.063158}{180}{360}
\psline(0.010526,-0.242105)(0.010526,-0.452632)
\psline(0.136842,-0.242105)(0.136842,-0.452632)
\psarcn(0.326316,-0.452632){0.063158}{180}{0}
\rput(0.821053,-0.231579){{\huge $=$}}
\pscircle[fillstyle=solid,fillcolor=fillval](1.252632,0){0.010526}
\pscircle[fillstyle=solid,fillcolor=fillval](1.378947,0){0.010526}
\pscircle[fillstyle=solid,fillcolor=fillval](1.505263,0){0.010526}
\pscircle[fillstyle=solid,fillcolor=fillval](1.631579,0){0.010526}
\pscircle[fillstyle=solid,fillcolor=fillval](1.252632,-0.231579){0.010526}
\pscircle[fillstyle=solid,fillcolor=fillval](1.378947,-0.231579){0.010526}
\pscircle[fillstyle=solid,fillcolor=fillval](1.505263,-0.231579){0.010526}
\pscircle[fillstyle=solid,fillcolor=fillval](1.631579,-0.231579){0.010526}
\pscircle[fillstyle=solid,fillcolor=fillval](1.252632,-0.463158){0.010526}
\pscircle[fillstyle=solid,fillcolor=fillval](1.378947,-0.463158){0.010526}
\pscircle[fillstyle=solid,fillcolor=fillval](1.505263,-0.463158){0.010526}
\pscircle[fillstyle=solid,fillcolor=fillval](1.631579,-0.463158){0.010526}
\psarc(1.568421,-0.010526){0.063158}{-179.999998}{-0.000002}
\psline(1.252632,-0.010526)(1.252632,-0.221053)
\psline(1.378947,-0.010526)(1.378947,-0.221053)
\psarcn(1.568421,-0.221053){0.063158}{179.999998}{0.000002}
\psarc(1.315789,-0.242105){0.063158}{-179.999998}{-0.000002}
\psline(1.505263,-0.242105)(1.505263,-0.452632)
\psline(1.631579,-0.242105)(1.631579,-0.452632)
\psarcn(1.315789,-0.452632){0.063158}{179.999998}{0.000002}
\rput(0.810526,-0.789474){$U_{1}U_{3} = U_{3} U_{1}$}
\end{pspicture}%
\end{center}

\subsection{Expressiveness of the Generators}
The fact that all planar diagrams can be expressed as products of generators is not entirely obvious. For proofs, see \cite{Kau,Dos}. As an illustrative example, consider the planar diagrams in $\TLM{3}$. Apart from the generators $U_{1}, U_{2}$, and ignoring loops, there are three:
\begin{center}
\psset{unit=1in,cornersize=absolute,dimen=middle}%
\begin{pspicture}(0,-0.542929)(2.5,0.012626)%
\psset{linewidth=1pt}%
\newgray{fillval}{0.3}
\pscircle[fillstyle=solid,fillcolor=fillval](0.012626,0){0.012626}
\pscircle[fillstyle=solid,fillcolor=fillval](0.164141,0){0.012626}
\pscircle[fillstyle=solid,fillcolor=fillval](0.315657,0){0.012626}
\pscircle[fillstyle=solid,fillcolor=fillval](0.012626,-0.530303){0.012626}
\pscircle[fillstyle=solid,fillcolor=fillval](0.164141,-0.530303){0.012626}
\pscircle[fillstyle=solid,fillcolor=fillval](0.315657,-0.530303){0.012626}
\psline(0.012626,-0.012626)(0.012626,-0.517677)
\psline(0.164141,-0.012626)(0.164141,-0.517677)
\psline(0.315657,-0.012626)(0.315657,-0.517677)
\pscircle[fillstyle=solid,fillcolor=fillval](1.098485,0){0.012626}
\pscircle[fillstyle=solid,fillcolor=fillval](1.25,0){0.012626}
\pscircle[fillstyle=solid,fillcolor=fillval](1.401515,0){0.012626}
\pscircle[fillstyle=solid,fillcolor=fillval](1.098485,-0.530303){0.012626}
\pscircle[fillstyle=solid,fillcolor=fillval](1.25,-0.530303){0.012626}
\pscircle[fillstyle=solid,fillcolor=fillval](1.401515,-0.530303){0.012626}
\psarc(1.325758,-0.012626){0.075758}{-179.999998}{-0.000002}
\psarcn(1.174242,-0.517677){0.075758}{179.999998}{0.000002}
\psbezier(1.098485,-0.012626)(1.098485,-0.03367)(1.098485,-0.054714)(1.098485,-0.075758)
(1.098485,-0.117845)(1.14899,-0.180976)(1.25,-0.265152)
(1.35101,-0.349327)(1.401515,-0.412458)(1.401515,-0.454545)
(1.401515,-0.475589)(1.401515,-0.496633)(1.401515,-0.517677)
\pscircle[fillstyle=solid,fillcolor=fillval](2.184343,0){0.012626}
\pscircle[fillstyle=solid,fillcolor=fillval](2.335859,0){0.012626}
\pscircle[fillstyle=solid,fillcolor=fillval](2.487374,0){0.012626}
\pscircle[fillstyle=solid,fillcolor=fillval](2.184343,-0.530303){0.012626}
\pscircle[fillstyle=solid,fillcolor=fillval](2.335859,-0.530303){0.012626}
\pscircle[fillstyle=solid,fillcolor=fillval](2.487374,-0.530303){0.012626}
\psarc(2.260101,-0.012626){0.075758}{180}{360}
\psarcn(2.411616,-0.517677){0.075758}{180}{0}
\psbezier(2.487374,-0.012626)(2.487374,-0.03367)(2.487374,-0.054714)(2.487374,-0.075758)
(2.487374,-0.117845)(2.436869,-0.180976)(2.335859,-0.265152)
(2.234848,-0.349327)(2.184343,-0.412458)(2.184343,-0.454545)
(2.184343,-0.475589)(2.184343,-0.496633)(2.184343,-0.517677)
\end{pspicture}%
\end{center}
The first is the identity for the monoid; we refer to the other two as the \emph{left wave} and \emph{right wave} respectively. The left wave can be expressed as the product $U_{2}U_{1}$:
\begin{center}
\psset{unit=1in,cornersize=absolute,dimen=middle}%
\begin{pspicture}(0,-0.733696)(1.826087,0.016304)%
\psset{linewidth=1pt}%
\newgray{fillval}{0.3}
\pscircle[fillstyle=solid,fillcolor=fillval](0.016304,0){0.016304}
\pscircle[fillstyle=solid,fillcolor=fillval](0.211957,0){0.016304}
\pscircle[fillstyle=solid,fillcolor=fillval](0.407609,0){0.016304}
\pscircle[fillstyle=solid,fillcolor=fillval](0.016304,-0.684783){0.016304}
\pscircle[fillstyle=solid,fillcolor=fillval](0.211957,-0.684783){0.016304}
\pscircle[fillstyle=solid,fillcolor=fillval](0.407609,-0.684783){0.016304}
\psarc(0.309783,-0.016304){0.097826}{180}{360}
\psarcn(0.11413,-0.668478){0.097826}{180}{0}
\psbezier(0.016304,-0.016304)(0.016304,-0.043478)(0.016304,-0.070652)(0.016304,-0.097826)
(0.016304,-0.152174)(0.081522,-0.233696)(0.211957,-0.342391)
(0.342391,-0.451087)(0.407609,-0.532609)(0.407609,-0.586957)
(0.407609,-0.61413)(0.407609,-0.641304)(0.407609,-0.668478)
\rput(0.913043,-0.358696){{\huge $=$}}
\pscircle[fillstyle=solid,fillcolor=fillval](1.418478,0){0.016304}
\pscircle[fillstyle=solid,fillcolor=fillval](1.61413,0){0.016304}
\pscircle[fillstyle=solid,fillcolor=fillval](1.809783,0){0.016304}
\pscircle[fillstyle=solid,fillcolor=fillval](1.418478,-0.358696){0.016304}
\pscircle[fillstyle=solid,fillcolor=fillval](1.61413,-0.358696){0.016304}
\pscircle[fillstyle=solid,fillcolor=fillval](1.809783,-0.358696){0.016304}
\pscircle[fillstyle=solid,fillcolor=fillval](1.418478,-0.717391){0.016304}
\pscircle[fillstyle=solid,fillcolor=fillval](1.61413,-0.717391){0.016304}
\pscircle[fillstyle=solid,fillcolor=fillval](1.809783,-0.717391){0.016304}
\psline(1.418478,-0.016304)(1.418478,-0.342391)
\psarc(1.711957,-0.016304){0.097826}{-179.999998}{-0.000002}
\psarcn(1.711957,-0.342391){0.097826}{179.999998}{0.000002}
\psarc(1.516304,-0.375){0.097826}{-179.999998}{-0.000002}
\psarcn(1.516304,-0.701087){0.097826}{179.999998}{0.000002}
\psline(1.809783,-0.375)(1.809783,-0.701087)
\end{pspicture}%
\end{center}
The right wave has a similar expression.

Once we are in dimension four or higher, we can have \emph{nested cups and caps}. These can be built using waves, as illustrated by the following:
\begin{center}
\psset{unit=1in,cornersize=absolute,dimen=middle}%
\begin{pspicture}(0,-0.733696)(2.217391,0.016304)%
\psset{linewidth=1pt}%
\newgray{fillval}{0.3}
\pscircle[fillstyle=solid,fillcolor=fillval](0.016304,0){0.016304}
\pscircle[fillstyle=solid,fillcolor=fillval](0.211957,0){0.016304}
\pscircle[fillstyle=solid,fillcolor=fillval](0.407609,0){0.016304}
\pscircle[fillstyle=solid,fillcolor=fillval](0.603261,0){0.016304}
\pscircle[fillstyle=solid,fillcolor=fillval](0.016304,-0.684783){0.016304}
\pscircle[fillstyle=solid,fillcolor=fillval](0.211957,-0.684783){0.016304}
\pscircle[fillstyle=solid,fillcolor=fillval](0.407609,-0.684783){0.016304}
\pscircle[fillstyle=solid,fillcolor=fillval](0.603261,-0.684783){0.016304}
\psarc(0.309783,-0.016304){0.097826}{180}{360}
\psarc(0.309783,-0.016304){0.293478}{180}{360}
\psarcn(0.309783,-0.668478){0.097826}{180}{0}
\psarcn(0.309783,-0.668478){0.293478}{180}{0}
\rput(1.108696,-0.358696){{\huge $=$}}
\pscircle[fillstyle=solid,fillcolor=fillval](1.61413,0){0.016304}
\pscircle[fillstyle=solid,fillcolor=fillval](1.809783,0){0.016304}
\pscircle[fillstyle=solid,fillcolor=fillval](2.005435,0){0.016304}
\pscircle[fillstyle=solid,fillcolor=fillval](2.201087,0){0.016304}
\pscircle[fillstyle=solid,fillcolor=fillval](1.61413,-0.358696){0.016304}
\pscircle[fillstyle=solid,fillcolor=fillval](1.809783,-0.358696){0.016304}
\pscircle[fillstyle=solid,fillcolor=fillval](2.005435,-0.358696){0.016304}
\pscircle[fillstyle=solid,fillcolor=fillval](2.201087,-0.358696){0.016304}
\pscircle[fillstyle=solid,fillcolor=fillval](1.61413,-0.717391){0.016304}
\pscircle[fillstyle=solid,fillcolor=fillval](1.809783,-0.717391){0.016304}
\pscircle[fillstyle=solid,fillcolor=fillval](2.005435,-0.717391){0.016304}
\pscircle[fillstyle=solid,fillcolor=fillval](2.201087,-0.717391){0.016304}
\psbezier(1.61413,-0.016304)(1.61413,-0.029891)(1.61413,-0.043478)(1.61413,-0.057065)
(1.61413,-0.084239)(1.679348,-0.125)(1.809783,-0.179348)
(1.940217,-0.233696)(2.005435,-0.274457)(2.005435,-0.30163)
(2.005435,-0.315217)(2.005435,-0.328804)(2.005435,-0.342391)
\psline(2.201087,-0.016304)(2.201087,-0.342391)
\psarc(1.907609,-0.016304){0.097826}{-179.999998}{-0.000002}
\psarcn(1.711957,-0.342391){0.097826}{179.999998}{0.000002}
\psarc(2.103261,-0.375){0.097826}{-179.999998}{-0.000002}
\psarcn(1.907609,-0.701087){0.097826}{179.999998}{0.000002}
\psline(1.61413,-0.375)(1.61413,-0.701087)
\psbezier(1.809783,-0.375)(1.809783,-0.388587)(1.809783,-0.402174)(1.809783,-0.415761)
(1.809783,-0.442935)(1.875,-0.483696)(2.005435,-0.538043)
(2.13587,-0.592391)(2.201087,-0.633152)(2.201087,-0.660326)
(2.201087,-0.673913)(2.201087,-0.6875)(2.201087,-0.701087)
\end{pspicture}%
\end{center}

\subsection{The Trace}
There is a natural  \emph{trace function} on the Temperley-Lieb algebra, which can be defined diagrammatically on $\TLM{n}$ by connecting each dot in the top row to the corresponding dot in the bottom row, using auxiliary cups and cups. This always yields a diagram isotopic to a number of loops --- hence to a \emph{scalar}, as expected. This trace can then be extended linearly to $\TLA{\tau}{n}$.

We illustrate this firstly by taking the trace of a wave---which is equal to a single loop:
\begin{center}
\psset{unit=1in,cornersize=absolute,dimen=middle}%
\begin{pspicture}(0,-0.819149)(2,0.361702)%
\psset{linewidth=1pt}%
\newgray{fillval}{0.3}
\pscircle[fillstyle=solid,fillcolor=fillval](0.015957,0){0.015957}
\pscircle[fillstyle=solid,fillcolor=fillval](0.154255,0){0.015957}
\pscircle[fillstyle=solid,fillcolor=fillval](0.292553,0){0.015957}
\pscircle[fillstyle=solid,fillcolor=fillval](0.015957,-0.457447){0.015957}
\pscircle[fillstyle=solid,fillcolor=fillval](0.154255,-0.457447){0.015957}
\pscircle[fillstyle=solid,fillcolor=fillval](0.292553,-0.457447){0.015957}
\psarc(0.223404,-0.015957){0.069149}{180}{360}
\psarcn(0.085106,-0.441489){0.069149}{180}{0}
\psbezier(0.015957,-0.015957)(0.015957,-0.033688)(0.015957,-0.051418)(0.015957,-0.069149)
(0.015957,-0.10461)(0.062057,-0.157801)(0.154255,-0.228723)
(0.246454,-0.299645)(0.292553,-0.352837)(0.292553,-0.388298)
(0.292553,-0.406028)(0.292553,-0.423759)(0.292553,-0.441489)
\psarc(0.361702,-0.473404){0.069149}{180}{360}
\psarc(0.361702,-0.473404){0.207447}{180}{360}
\psarc(0.361702,-0.473404){0.345745}{180}{360}
\psarcn(0.361702,0.015957){0.345745}{180}{0}
\psarcn(0.361702,0.015957){0.207447}{180}{0}
\psarcn(0.361702,0.015957){0.069149}{180}{0}
\psline(0.430851,0.015957)(0.430851,-0.473404)
\psline(0.569149,0.015957)(0.569149,-0.473404)
\psline(0.707447,0.015957)(0.707447,-0.473404)
\rput(1.148936,-0.244681){{\huge $=$}}
\pscircle(1.787234,-0.244681){0.212766}
\end{pspicture}
The Ear is a Circle
\end{center}
Our second example illustrates the important general point that \emph{the trace of the identity in $\TLM{n}$ is $\delta^{n}$}:
\begin{center}
\psset{unit=1in,cornersize=absolute,dimen=middle}%
\begin{pspicture}(0,-0.746124)(2.5,0.329457)%
\psset{linewidth=1pt}%
\newgray{fillval}{0.3}
\pscircle[fillstyle=solid,fillcolor=fillval](0.014535,0){0.014535}
\pscircle[fillstyle=solid,fillcolor=fillval](0.140504,0){0.014535}
\pscircle[fillstyle=solid,fillcolor=fillval](0.266473,0){0.014535}
\pscircle[fillstyle=solid,fillcolor=fillval](0.014535,-0.416667){0.014535}
\pscircle[fillstyle=solid,fillcolor=fillval](0.140504,-0.416667){0.014535}
\pscircle[fillstyle=solid,fillcolor=fillval](0.266473,-0.416667){0.014535}
\psline(0.014535,-0.014535)(0.014535,-0.402132)
\psline(0.140504,-0.014535)(0.140504,-0.402132)
\psline(0.266473,-0.014535)(0.266473,-0.402132)
\psarc(0.329457,-0.431202){0.062984}{180}{360}
\psarc(0.329457,-0.431202){0.188953}{180}{360}
\psarc(0.329457,-0.431202){0.314922}{180}{360}
\psarcn(0.329457,0.014535){0.314922}{180}{0}
\psarcn(0.329457,0.014535){0.188953}{180}{0}
\psarcn(0.329457,0.014535){0.062984}{180}{0}
\psline(0.392442,0.014535)(0.392442,-0.431202)
\psline(0.518411,0.014535)(0.518411,-0.431202)
\psline(0.64438,0.014535)(0.64438,-0.431202)
\rput(1.046512,-0.222868){{\huge $=$}}
\pscircle(1.579457,-0.222868){0.145349}
\pscircle(1.967054,-0.222868){0.145349}
\pscircle(2.354651,-0.222868){0.145349}
\end{pspicture}%
\end{center}

\subsection{The Connection to Knots}
How does this connect to knots? Again, a key conceptual insight is due to Kauffman, who saw how to recast the Jones polynomial in elementary combinatorial form in terms of his
\emph{bracket polynomial}.
The basic idea of the bracket polynomial is expressed by the following equation:
\begin{center}
\psset{unit=1in,cornersize=absolute,dimen=middle}%
\begin{pspicture}(0,-0.2)(4,0.2)%
\psset{linewidth=2pt}%
\psline(0,0)(0.1,0.2)
\psline(0,0)(0.1,-0.2)
\psline(0.8,0)(0.7,0.2)
\psline(0.8,0)(0.7,-0.2)
\psset{linewidth=1pt}%
\psline(0.2,0.2)(0.6,-0.2)
\psline(0.2,-0.2)(0.38,-0.02)
\psline(0.6,0.2)(0.42,0.02)
\psset{linewidth=2pt}%
\psline(1.6,0)(1.7,0.2)
\psline(1.6,0)(1.7,-0.2)
\psline(2.4,0)(2.3,0.2)
\psline(2.4,0)(2.3,-0.2)
\psset{linewidth=1pt}%
\psbezier(1.8,0.2)(1.96,0.088)(2.04,0.088)(2.2,0.2)
\psbezier(1.8,-0.2)(1.96,-0.088)(2.04,-0.088)(2.2,-0.2)
\psset{linewidth=2pt}%
\psline(3.2,0)(3.3,0.2)
\psline(3.2,0)(3.3,-0.2)
\psline(4,0)(3.9,0.2)
\psline(4,0)(3.9,-0.2)
\psset{linewidth=1pt}%
\psbezier(3.4,0.2)(3.46,0.08)(3.46,-0.08)(3.4,-0.2)
\psbezier(3.8,0.2)(3.74,0.08)(3.74,-0.08)(3.8,-0.2)
\rput(1.2,0){$=$}
\rput(2.8,0){$+$}
\uput{0.5ex}[l](1.6,0){\llap{$A$}}
\uput{0.5ex}[l](3.2,0){\llap{$B$}}
\end{pspicture}%
\end{center}
Each over-crossing in a knot or link is evaluated to a weighted sum of the two possible planar smoothings. With suitable choices for the coefficients $A$ and $B$ (as Laurent polynomials), this is invariant under the second and third Reidemeister moves. With an ingenious choice of normalizing factor, it becomes invariant under the first Reidemeister move --- and yields the Jones polynomial!
What this means algebraically is that the braid group $\mathcal{B}_{n}$ has a representation in the Temperley-Lieb algebra $\TLA{\tau}{n}$ --- the above bracket evaluation shows how the generators $\beta_{i}$ of the braid group are mapped into the Temperley-Lieb algebra:
\[ \beta_{i} \;\; \mapsto \;\; A \cdot U_{i} + B \cdot 1 . \]
Every knot arises as the closure (\ie the diagrammatic trace) of a braid; the invariant arises by mapping the \emph{open braid} into the Temperley-Lieb algebra, and taking the trace there.

This is just the beginning of a huge swathe of further developments, including:
Topological Quantum Field Theories \cite{Wit},
Quantum Groups \cite{Kassel},
Quantum Statistical mechanics \cite{Kauffmanbook},
Diagram Algebras and Representation Theory \cite{HalvRam}, and more.

\section{The Temperley-Lieb Category}

We can expose more structure by gathering all the Temperley-Lieb algebras into a single category. We begin with the category $\DD$ which plays a similar role with respect to the diagram monoids $\TLM{n}$.

The objects of $\DD$ are the natural numbers. An arrow $\nn \rarr \mm$ is given by
\begin{itemize}
\item a number $k \in \Nat$ of loops
\item a diagram which joins the top row of $n$ dots and the bottom row of $m$ dots up pairwise, in the same smooth planar fashion as we have already specified for the diagram monoids. As before, diagrams are identified up to planar isotopy.
\end{itemize}
Composition of arrows $f : \nn \rarr \mm$  and $g : \mm \rarr \pp$ is defined by identifying the bottom row of $m$ dots for $f$ with the top row of $m$ dots for $g$, and composing paths. The loops in the resulting arrow are those of $f$ and of $g$, together with any formed by the process of composing paths.

Clearly we recover each $\TLM{n}$ as the endomorphism monoid $\DD (\nn,\nn)$.
Moreover, we can define the Temperley-Lieb category $\TL$ over a ring $R$ as the free $R$-linear category generated by $\DD$, with a construction which generalizes that of the monoid algebra: the objects of $\TL$ are the same as those of $\DD$, and arrows are $R$-linear combinations of arrows of $\DD$, with composition defined by bilinear extension from that in $\DD$:
\[ (\sum_{i} r_{i} \cdot g_{i}) \circ (\sum_{j} s_{j} \cdot f_{j}) = \sum_{i,j} (r_{i}s_{j})\cdot (g_{i} \circ f_{j}). \]
If we fix a parameter $\tau \in R$, then we obtain the category $\TL_{\tau}$ by the identification of the loop $\bigcirc$ in $\DD$ with the scalar $\tau$ in $\TL$.\footnote{The full justification of this step requires the identification of $\DD$ as a free pivotal category, as discussed below.} We then recover the Temperley-Lieb algebras as
\[ \TLA{\tau}{n} = \TL_{\tau}(\nn,\nn) . \]
New possibilities also arise in $\DD$. In particular, we get the \emph{pure cap}
\begin{center}
\psset{unit=1in,cornersize=absolute,dimen=middle}%
\begin{pspicture}(0,-0.03125)(0.4375,0.21875)%
\psset{linewidth=1pt}%
\newgray{fillval}{0.3}
\pscircle[fillstyle=solid,fillcolor=fillval](0.03125,0){0.03125}
\pscircle[fillstyle=solid,fillcolor=fillval](0.40625,0){0.03125}
\psarcn(0.21875,0.03125){0.1875}{179.999999}{0.000001}
\end{pspicture}%

\end{center}
as (the unique) arrow $\zz \rarr \mathbf{2}$, and similarly the \emph{pure cup} as the unique arrow $\mathbf{2} \rarr \zz$. More generally, for each $n$ we have
arrows $\eta_{\nn} : \zz \rarr \nn + \nn$, and
$\epsilon_{\nn} : \nn + \nn \rarr \zz$:
\begin{center}
\begin{minipage}[t]{1in}
\psset{unit=1in,cornersize=absolute,dimen=middle}%
\begin{pspicture}(0,-0.025876)(0.948249,0.474124)%
\psset{linewidth=1pt}%
\newgray{fillval}{0.3}
\pscircle[fillstyle=solid,fillcolor=fillval](0.015294,0){0.015294}
\pscircle[fillstyle=solid,fillcolor=fillval](0.382358,0){0.015294}
\pscircle[fillstyle=solid,fillcolor=fillval](0.56589,0){0.015294}
\pscircle[fillstyle=solid,fillcolor=fillval](0.932954,0){0.015294}
\psarcn(0.474124,0.015294){0.091766}{-180}{-360}
\psarcn(0.474124,0.015294){0.45883}{-180}{-360}
\rput(0.198826,0){$\ldots$}
\rput(0.749422,0){$\ldots$}
\uput{0.5ex}[d](0.015294,-0.015294){$1$}
\uput{0.5ex}[d](0.932954,-0.015294){$2n$}
\end{pspicture}%
\end{minipage}
\ $\qquad\qquad\qquad$ \
\raisebox{-2ex}{
\begin{minipage}[t]{1in}
\psset{unit=1in,cornersize=absolute,dimen=middle}%
\begin{pspicture}(0,-0.474124)(0.948249,0.025876)%
\psset{linewidth=1pt}%
\newgray{fillval}{0.3}
\pscircle[fillstyle=solid,fillcolor=fillval](0.015294,0){0.015294}
\pscircle[fillstyle=solid,fillcolor=fillval](0.382358,0){0.015294}
\pscircle[fillstyle=solid,fillcolor=fillval](0.56589,0){0.015294}
\pscircle[fillstyle=solid,fillcolor=fillval](0.932954,0){0.015294}
\psarc(0.474124,-0.015294){0.091766}{180}{360}
\psarc(0.474124,-0.015294){0.45883}{180}{360}
\rput(0.198826,0){$\ldots$}
\rput(0.749422,0){$\ldots$}
\uput{0.5ex}[u](0.015294,0.015294){$1$}
\uput{0.5ex}[u](0.932954,0.015294){$2n$}
\end{pspicture}%
\end{minipage}
}
\end{center}
We refer to the arrows $\eta_{\nn}$ as \emph{units}, and the arrows $\epsilon_{\nn}$ as \emph{counits}.

The category $\DD$ has a natural \emph{strict monoidal structure}. On objects, we define $\nn \otimes \mm = \nn + \mm$, with unit given by $I = \zz$. The tensor product of morphisms
\[  \frac{f : \nn \rarr \mm \quad g : \pp \rarr \qq}{f \otimes g : \nn + \pp \rarr \pp + \qq} \]
is given by juxtaposition of diagrams in the evident fashion, with (multiset) union of loops.
Thus we can write the units and counits as arrows
\[ \eta_{\nn} : I \rarr \nn \otimes \nn, \qquad \epsilon_{\nn} : \nn \otimes \nn \rarr I . \]
These units and counits satisfy important identities, which we illustrate diagrammatically
\begin{center}
\psset{unit=1in,cornersize=absolute,dimen=middle}%
\begin{pspicture}(0,-0.733696)(2.184783,0.016304)%
\psset{linewidth=1pt}%
\newgray{fillval}{0.3}
\pscircle[fillstyle=solid,fillcolor=fillval](0.016304,0){0.016304}
\pscircle[fillstyle=solid,fillcolor=fillval](0.016304,-0.358696){0.016304}
\pscircle[fillstyle=solid,fillcolor=fillval](0.211957,-0.358696){0.016304}
\pscircle[fillstyle=solid,fillcolor=fillval](0.407609,-0.358696){0.016304}
\pscircle[fillstyle=solid,fillcolor=fillval](0.407609,-0.717391){0.016304}
\pscircle[fillstyle=solid,fillcolor=fillval](1.092391,-0.717391){0.016304}
\pscircle[fillstyle=solid,fillcolor=fillval](1.777174,-0.717391){0.016304}
\pscircle[fillstyle=solid,fillcolor=fillval](1.777174,-0.358696){0.016304}
\pscircle[fillstyle=solid,fillcolor=fillval](1.972826,-0.358696){0.016304}
\pscircle[fillstyle=solid,fillcolor=fillval](2.168478,-0.358696){0.016304}
\pscircle[fillstyle=solid,fillcolor=fillval](1.092391,-0){0.016304}
\pscircle[fillstyle=solid,fillcolor=fillval](2.168478,-0){0.016304}
\psarcn(0.309783,-0.342391){0.097826}{180}{0}
\psarcn(1.875,-0.342391){0.097826}{179.999998}{0.000002}
\psarc(0.11413,-0.375){0.097826}{-179.999999}{-0.000001}
\psarc(2.070652,-0.375){0.097826}{-179.999998}{-0.000002}
\psline(0.016304,-0.016304)(0.016304,-0.342391)
\psline(0.407609,-0.375)(0.407609,-0.701087)
\psline(1.092391,-0.016304)(1.092391,-0.701087)
\psline(1.777174,-0.375)(1.777174,-0.701087)
\psline(2.168478,-0.016304)(2.168478,-0.342391)
\rput(0.75,-0.358696){$=$}
\rput(1.402174,-0.358696){$=$}
\end{pspicture}%
\end{center}
and write algebraically as
\begin{equation}
\label{triangeq}
(\epsilon_{\nn} \otimes 1_{\nn}) \circ (1_{\nn} \otimes \eta_{\nn}) = 1_{\nn} = (1_{\nn} \otimes \epsilon_{\nn}) \circ (\eta_{\nn} \otimes 1_{\nn}) . 
\end{equation}

\subsection{Pivotal Categories}

From these observations, we see that $\DD$ is a \emph{strict pivotal category} \cite{FY}, in which the duality on objects is trivial: $A = A^{*}$.
We recall that a strict pivotal category is a strict monoidal category $(\CC, \otimes, I)$ with an assignment $A \mapsto A^{*}$ on objects satisfying
\[ A^{**} = A, \qquad (A \otimes B)^{*} = B^{*} \otimes A^{*}, \qquad I^{*} = I , \]
and for each object $A$, arrows 
\[  \eta_{A} : I \rarr A^{*} \otimes A, \qquad \epsilon_{A} : A \otimes A^{*} \rarr I  \]
satisfying the triangular identities:
\begin{equation}
\label{triangeq2}
(\epsilon_{A} \otimes 1_{A}) \circ (1_{A} \otimes \eta_{A}) = 1_{A}, \qquad
(1_{A^{*}} \otimes \epsilon_{A}) \circ (\eta_{A} \otimes 1_{A^{*}})  = 1_{A^{*}}. 
\end{equation}
In addition, the following coherence equations are required to hold:
\[  \eta_{I} = 1_{I}, \qquad \eta_{A \otimes B} = (1_{B^{*}} \otimes \eta_{A} \otimes 1_{B}) \circ \eta_{B} , \]
and, for $f : A \rarr B$:
\[  
\begin{diagram}[3em]
& & A^{*} \otimes A \otimes B^{*} & \rTo^{1 \otimes f \otimes 1} &  A^{*} \otimes B \otimes B^{*} & & \\
& \ruTo^{\eta_{A} \otimes 1} &  & & & \rdTo^{1 \otimes \epsilon_{B}} & \\
B^{*} & & & & & & A^{*} \\
& \rdTo_{1 \otimes \eta_{A^{*}}} &  & & & \ruTo_{\epsilon_{B^{*}}\otimes 1} & \\
& &  B^{*}  \otimes A \otimes A^{*} & \rTo_{1 \otimes f \otimes 1} &  B^{*} \otimes B \otimes A^{*} & & 
\end{diagram}
\]
This last equation is illustrated diagrammatically by
\begin{center}
\psset{unit=1in,cornersize=absolute,dimen=middle}%
\begin{pspicture}(-0.625,-1.235795)(1.221591,0.014205)%
\psset{linewidth=1pt}%
\newgray{fillval}{0.3}
\pscircle[fillstyle=solid,fillcolor=fillval](0.014205,0){0.014205}
\pscircle[fillstyle=solid,fillcolor=fillval](0.610795,0){0.014205}
\pscircle[fillstyle=solid,fillcolor=fillval](0.014205,-0.3125){0.014205}
\pscircle[fillstyle=solid,fillcolor=fillval](-0.298295,-0.3125){0.014205}
\pscircle[fillstyle=solid,fillcolor=fillval](-0.610795,-0.3125){0.014205}
\pscircle[fillstyle=solid,fillcolor=fillval](0.610795,-0.3125){0.014205}
\pscircle[fillstyle=solid,fillcolor=fillval](0.923295,-0.3125){0.014205}
\pscircle[fillstyle=solid,fillcolor=fillval](1.207386,-0.3125){0.014205}
\pscircle[fillstyle=solid,fillcolor=fillval](0.014205,-0.909091){0.014205}
\pscircle[fillstyle=solid,fillcolor=fillval](-0.298295,-0.909091){0.014205}
\pscircle[fillstyle=solid,fillcolor=fillval](-0.610795,-0.909091){0.014205}
\pscircle[fillstyle=solid,fillcolor=fillval](0.610795,-0.909091){0.014205}
\pscircle[fillstyle=solid,fillcolor=fillval](0.923295,-0.909091){0.014205}
\pscircle[fillstyle=solid,fillcolor=fillval](1.207386,-0.909091){0.014205}
\pscircle[fillstyle=solid,fillcolor=fillval](-0.610795,-1.221591){0.014205}
\pscircle[fillstyle=solid,fillcolor=fillval](1.207386,-1.221591){0.014205}
\psframe(-0.440341,-0.752841)(-0.15625,-0.46875)
\rput(-0.298295,-0.610795){$f$}
\psframe(0.78125,-0.752841)(1.065341,-0.46875)
\rput(0.923295,-0.610795){$f$}
\psarcn(-0.454545,-0.298295){0.15625}{180}{0}
\psarcn(1.065341,-0.298295){0.142045}{180}{0}
\psarc(-0.142045,-0.923295){0.15625}{180}{360}
\psarc(0.767045,-0.923295){0.15625}{180}{360}
\psline(0.014205,-0.014205)(0.014205,-0.298295)
\psline(-0.610795,-0.326705)(-0.610795,-0.894886)
\psline(-0.610795,-0.923295)(-0.610795,-1.207386)
\psline(-0.298295,-0.326705)(-0.298295,-0.46875)
\psline(-0.298295,-0.752841)(-0.298295,-0.894886)
\psline(0.014205,-0.326705)(0.014205,-0.894886)
\psline(0.610795,-0.014205)(0.610795,-0.298295)
\psline(0.610795,-0.326705)(0.610795,-0.894886)
\psline(0.923295,-0.326705)(0.923295,-0.46875)
\psline(0.923295,-0.752841)(0.923295,-0.894886)
\psline(1.207386,-0.326705)(1.207386,-0.894886)
\psline(1.207386,-0.923295)(1.207386,-1.207386)
\rput(0.3125,-0.610795){$=$}
\end{pspicture}%
\end{center}
We extend $()^{*}$ to a contravariant involutive functor:
\[ \frac{f : A \rarr B}{f^{*} : B^{*} \rarr A^{*}} \qquad f^{*} = (1 \otimes \epsilon_{A}) \circ (1 \otimes f \otimes 1) \circ (\eta_{A} \otimes 1) \]
which indeed satisfies
\[ 1^{*} = 1, \qquad (g \circ f)^{*} = f^{*} \circ g^{*}, \qquad f^{**} = f , \]
the last equation being illustrated diagrammatically by

\vspace{.1in}
\begin{center}
\psset{unit=1in,cornersize=absolute,dimen=middle}%
\begin{pspicture}(0,-0.767296)(3.5,0.767296)%
\psframe[fillstyle=solid,fillcolor=yellow,linecolor=blue](0.75,-0.25)(1.25,0.25)
\rput(1,0){$f$}
\psbezier(1,0.25)(1,0.65)(0.95,0.75)(0.75,0.75)
(0.55,0.75)(0.5,0.65)(0.5,0.25)
(0.5,-0.15)(0.45,-0.25)(0.25,-0.25)
(0.05,-0.25)(0,-0.15)(0,0.25)
\psbezier(1,-0.25)(1,-0.65)(1.05,-0.75)(1.25,-0.75)
(1.45,-0.75)(1.5,-0.65)(1.5,-0.25)
(1.5,0.15)(1.55,0.25)(1.75,0.25)
(1.95,0.25)(2,0.15)(2,-0.25)
\rput(0,0.35){$A$}
\rput(0.9,-0.35){$B$}
\rput(0.9,0.35){$A$}
\rput(2,-0.35){$B$}
\rput(0.4,0.35){$A^*$}
\rput(1.6,-0.35){$B^*$}
\rput(2.5,0){$=$}
\psframe[fillstyle=solid,fillcolor=yellow,linecolor=blue](3,-0.25)(3.5,0.25)
\rput(3.25,0){$f$}
\psline(3.25,0.25)(3.25,0.75)
\uput{0.5ex}[u](3.25,0.75){$A$}
\psline(3.25,-0.25)(3.25,-0.75)
\uput{0.5ex}[d](3.25,-0.75){$B$}
\end{pspicture}%
\end{center}

\vspace{.1in}
These axioms have powerful consequences. In particular, $\CC$ is \emph{monoidal closed}, with internal hom given by $A^{*} \otimes B$, and the adjunction:
\[ \CC (A \otimes B, C) \simeq \CC(B, A^{*} \otimes C) :: f \mapsto (1 \otimes f) \circ (\eta_{A} \otimes 1) . \]
This means that a restricted form of $\lambda$-calculus can interpreted in such categories --- a point we shall return to in Section~7.

A \emph{trace function} can be defined in pivotal categories, which takes an endmorphism $f : A \rarr A$ to a \emph{scalar} in $\CC (I, I)$:
\[ \Trace{f} = \epsilon_{A} \circ (f \otimes 1) \circ \eta_{A^{*}} . \]
 It satisfies:
\[ \Trace{g \circ f} = \Trace{f \circ g} . \]
In $\DD$, this definition yields exactly the diagrammatic trace  we discussed previously.

We have the following important characterization of the diagrammatic category $\DD$:
\begin{proposition}
\label{freepivot}
$\DD$ is the free pivotal category over one self-dual generator; that is, freely generated over the one-object one-arrow category, with object $A$ say, subject to the equation $A = A^{*}$.
\end{proposition}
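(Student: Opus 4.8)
The plan is to verify the universal property directly: for every strict pivotal category $\CC$ and every object $X$ of $\CC$ with $X = X^{*}$, there should be a unique strict pivotal functor $F : \DD \rarr \CC$ sending the generating object $\mathbf{1}$ (the object called $A$ in the statement) to $X$. Here ``strict pivotal functor'' means a strict monoidal functor that commutes with $()^{*}$ and carries the chosen units and counits of $\DD$ to those of $\CC$. Since the duality on $\DD$ is trivial, $\nn^{*} = \nn$, and $X = X^{*}$, strictness forces the object part to be $F(\nn) = X^{\otimes n}$; the work is to define $F$ on arrows, show it is well defined, and show it is the only possibility.

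For uniqueness I would first record the generation fact, which is the content of Section~2.3 extended from $\TLM{n}$ to $\DD$ (see \cite{Kau,Dos}): every arrow of $\DD$ is built from the identity $1_{\mathbf{1}}$, the unit $\eta_{\mathbf{1}} : I \rarr \mathbf{1} \otimes \mathbf{1}$, and the counit $\epsilon_{\mathbf{1}} : \mathbf{1} \otimes \mathbf{1} \rarr I$ by finitely many applications of $\otimes$ and $\circ$, a loop being the scalar $\epsilon_{\mathbf{1}} \circ \eta_{\mathbf{1}} = \Trace{1_{\mathbf{1}}} \in \DD(I,I)$. Concretely one proves this by a normal-form argument on non-crossing perfect matchings: peeling off an innermost top arc (a cup) or bottom arc (a cap) exhibits any diagram $\nn \rarr \mm$ as a composite of elementary arrows $1_{\kk} \otimes \eta_{\mathbf{1}} \otimes 1_{\kk'}$ and $1_{\kk} \otimes \epsilon_{\mathbf{1}} \otimes 1_{\kk'}$, which can moreover be arranged with all the $\eta$-layers above all the $\epsilon$-layers. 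Because $F$ must preserve $\otimes$, $\circ$, $\eta_{\mathbf{1}}$ and $\epsilon_{\mathbf{1}}$, its value on any such composite is completely determined, so at most one $F$ exists.

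For existence, define $F$ on the generating data as above (with the $\eta_{\nn}, \epsilon_{\nn}$ of $\DD$ sent to the $\eta, \epsilon$ of $\CC$ at $X^{\otimes n}$, built up via the pivotal coherence equation for $\eta_{A \otimes B}$) and extend along $\otimes$ and $\circ$. The only thing to verify is well-definedness: two expressions in $1_{\mathbf{1}}, \eta_{\mathbf{1}}, \epsilon_{\mathbf{1}}$ that denote the same arrow of $\DD$ --- i.e. the same planar diagram up to planar isotopy, with the same loop count --- must have equal images. Equivalently, every equation holding between such expressions in $\DD$ should already be derivable from the strict pivotal axioms (the triangular identities (\ref{triangeq2}), the coherence equations for $\eta_{A \otimes B}$ and for $f^{*}$, and bifunctoriality of $\otimes$). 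I would prove this by pushing the normal form through the axioms: first, using only pivotal-axiom rewrites, bring any expression into the canonical ``$\eta$-layers on top, parallel strands in the middle, $\epsilon$-layers on the bottom'' form --- the triangular identities are exactly the ``yanking'' moves that remove zig-zags, and the coherence equations let elementary pieces slide past one another --- and second, observe that a canonical form is a complete invariant, since from it one simply reads off which top and bottom points are matched and how many loops there are. Any pivotal $\CC$ validates all the rewrites used, so $F$ respects them and is well defined; that it is then a strict pivotal functor is immediate from the construction.

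The main obstacle is exactly this last coherence step: proving that planar isotopy of the relevant string diagrams is \emph{generated} by the pivotal moves, with no leftover relations --- a theorem in the spirit of Freyd--Yetter and Joyal--Street \cite{FY}. What makes the single self-dual generator case manageable is that orientations then carry no information, so the free pivotal category on it has morphisms that are literally non-crossing perfect matchings together with a loop count, which is the definition of $\DD$; the delicate points are the termination and (modulo-axiom) confluence of the normalization, and the check that no planar-isotopy identification is overlooked. One could instead quote the general combinatorial description of free autonomous/pivotal categories and merely observe that it specializes to $\DD$, but carrying out the normal-form argument by hand is the more self-contained route.
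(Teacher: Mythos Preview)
Your outline is a sound and standard route to freeness: generation by $1_{\mathbf{1}}, \eta_{\mathbf{1}}, \epsilon_{\mathbf{1}}$ under $\otimes$ and $\circ$ gives uniqueness, and a normal-form/coherence argument gives well-definedness and hence existence. You have correctly isolated the real content as the coherence step --- that planar isotopy of these diagrams is generated by the pivotal axioms --- and your proposed ``$\eta$-layers, through lines, $\epsilon$-layers'' canonical form is the right invariant to aim for.

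It is worth noting, however, that the paper does \emph{not} actually prove this proposition. After the statement it simply records that the result was mentioned (but not proved) in \cite{FY}, points to \cite{Dos}, and says that the methods of \cite{Abr05} can be adapted ``using the ideas we shall develop in Section~6''. So the intended route in the paper is different from yours: rather than a direct normal-form argument inside $\DD$, one builds the concrete combinatorial category $\TLC$ of Section~6 (planar fixed-point-free involutions with composition given by the execution formula), proves $\TLC \cong \DD$, and then verifies the universal property for $\TLC$ in the style of \cite{Abr05}. That approach trades the topological coherence step for discrete combinatorics (the planarity conditions (PL1)--(PL2) and closure of planar maps under the execution formula), which is arguably easier to make fully rigorous. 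Your approach is more self-contained and conceptually transparent, but the burden of the Freyd--Yetter/Joyal--Street style coherence argument is genuinely on you; if you want to avoid it, going through $\TLC$ as the paper suggests is the cleaner alternative.
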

This was mentioned (although not proved) in \cite{FY}; see also \cite{Dos}.
The methods in \cite{Abr05} can be adapted to prove this result, using the ideas we shall develop in Section~6.

The idea of ``identifying the loop with the scalar $\tau$'' in passing from $\DD$ to the full Temperley-Lieb category $\TL_{\tau}$ can be made precise using the construction given in \cite{Abr05} of gluing a specified ring $R$ of scalars onto a free compact closed category, along a given map from the loops in the generating category to $R$. In this case, there is a single loop in the generating category, and we send it to $\tau$.

\subsection{Pivotal Dagger Categories}
We now mention a strengthening of the axioms for pivotal categories, corresponding to the notion of \emph{strongly compact closed} or \emph{dagger compact closed} category which has proved to be important in the categorical approach to Quantum Mechanics \cite{AbrCoe2,AbrCoe3}. Again we give the strict version for simplicity. We assume that the strict monoidal category $(\CC, \otimes, I)$ comes equipped with an identity-on-objects, contravariant involutive functor $()^{\dagger}$ such that $\epsilon_{A} = \eta_{A^{*}}^{\dagger}$.
The idea is that $f^{\dagger}$ abstracts from the \emph{adjoint} of a linear map, and allows the extra structure arising from the use of \emph{complex} Hilbert spaces in Quantum Mechanics to be expressed in the abstract setting.

Note that there is a clear diagrammatic distinction between the dual $f^{*}$ and the adjoint $f^{\dagger}$. The dual corresponds to $180^{\circ}$ rotation in the plane:

\vsb
\begin{center}
\raisebox{-5ex}{
\psset{unit=1in,cornersize=absolute,dimen=middle}%
\begin{pspicture}(0,-0.335692)(1.75,0.335692)%
\psframe[fillstyle=solid,fillcolor=yellow,linecolor=blue](0,-0.109375)(0.65625,0.109375)
\rput(0.328125,0){$f$}
\psline(0.164063,0.109375)(0.164063,0.328125)
\psline(0.492188,0.109375)(0.492188,0.328125)
\rput(0.328125,0.328125){$\cdots$}
\uput{0.5ex}[u](0.164063,0.328125){$A_1$}
\uput{0.5ex}[u](0.492188,0.328125){$A_n$}
\psline(0.164063,-0.109375)(0.164063,-0.328125)
\psline(0.492188,-0.109375)(0.492188,-0.328125)
\rput(0.328125,-0.328125){$\cdots$}
\uput{0.5ex}[d](0.164063,-0.328125){$B_1$}
\uput{0.5ex}[d](0.492188,-0.328125){$B_m$}
\psframe[fillstyle=solid,fillcolor=yellow,linecolor=blue](1.09375,-0.109375)(1.75,0.109375)
\rput(1.421875,0){$f^*$}
\psline(1.257813,0.109375)(1.257813,0.328125)
\psline(1.585938,0.109375)(1.585938,0.328125)
\rput(1.421875,0.328125){$\cdots$}
\uput{0.5ex}[u](1.257813,0.328125){$B_{m}^{*}$}
\uput{0.5ex}[u](1.585938,0.328125){$B_{1}^{*}$}
\psline(1.257813,-0.109375)(1.257813,-0.328125)
\psline(1.585938,-0.109375)(1.585938,-0.328125)
\rput(1.421875,-0.328125){$\cdots$}
\uput{0.5ex}[d](1.257813,-0.328125){$A_n^*$}
\uput{0.5ex}[d](1.585938,-0.328125){$A_1^*$}
\rput(0.875,0){$\rightsquigarrow$}
\end{pspicture}%
}
\end{center}

\vsb
\noindent while the adjoint is reflection in the $x$-axis:

\vsb
\begin{center}
\raisebox{-5ex}{
\psset{unit=1in,cornersize=absolute,dimen=middle}%
\begin{pspicture}(0,-0.335692)(1.75,0.335692)%
\psframe[fillstyle=solid,fillcolor=yellow,linecolor=blue](0,-0.109375)(0.65625,0.109375)
\rput(0.328125,0){$f$}
\psline(0.164063,0.109375)(0.164063,0.328125)
\psline(0.492188,0.109375)(0.492188,0.328125)
\rput(0.328125,0.328125){$\cdots$}
\uput{0.5ex}[u](0.164063,0.328125){$A_1$}
\uput{0.5ex}[u](0.492188,0.328125){$A_n$}
\psline(0.164063,-0.109375)(0.164063,-0.328125)
\psline(0.492188,-0.109375)(0.492188,-0.328125)
\rput(0.328125,-0.328125){$\cdots$}
\uput{0.5ex}[d](0.164063,-0.328125){$B_1$}
\uput{0.5ex}[d](0.492188,-0.328125){$B_m$}
\psframe[fillstyle=solid,fillcolor=yellow,linecolor=blue](1.09375,-0.109375)(1.75,0.109375)
\rput(1.421875,0){$f^{\dagger}$}
\psline(1.257813,0.109375)(1.257813,0.328125)
\psline(1.585938,0.109375)(1.585938,0.328125)
\rput(1.421875,0.328125){$\cdots$}
\uput{0.5ex}[u](1.257813,0.328125){$B_{1}$}
\uput{0.5ex}[u](1.585938,0.328125){$B_{n}$}
\psline(1.257813,-0.109375)(1.257813,-0.328125)
\psline(1.585938,-0.109375)(1.585938,-0.328125)
\rput(1.421875,-0.328125){$\cdots$}
\uput{0.5ex}[d](1.257813,-0.328125){$A_1$}
\uput{0.5ex}[d](1.585938,-0.328125){$A_n$}
\rput(0.875,0){$\rightsquigarrow$}
\end{pspicture}%
}
\end{center}

\vsb
\noindent For example in $\DD$, if we consider the left and right wave morphisms $L$ and $R$:
\begin{center}
\psset{unit=1in,cornersize=absolute,dimen=middle}%
\begin{pspicture}(0,-0.767857)(2,0.017857)%
\psset{linewidth=1pt}%
\newgray{fillval}{0.3}
\pscircle[fillstyle=solid,fillcolor=fillval](0.017857,0){0.017857}
\pscircle[fillstyle=solid,fillcolor=fillval](0.232143,0){0.017857}
\pscircle[fillstyle=solid,fillcolor=fillval](0.446429,0){0.017857}
\pscircle[fillstyle=solid,fillcolor=fillval](0.017857,-0.75){0.017857}
\pscircle[fillstyle=solid,fillcolor=fillval](0.232143,-0.75){0.017857}
\pscircle[fillstyle=solid,fillcolor=fillval](0.446429,-0.75){0.017857}
\psarc(0.339286,-0.017857){0.107143}{180}{360}
\psarcn(0.125,-0.732143){0.107143}{180}{0}
\psbezier(0.017857,-0.017857)(0.017857,-0.047619)(0.017857,-0.077381)(0.017857,-0.107143)
(0.017857,-0.166667)(0.089286,-0.255952)(0.232143,-0.375)
(0.375,-0.494048)(0.446429,-0.583333)(0.446429,-0.642857)
(0.446429,-0.672619)(0.446429,-0.702381)(0.446429,-0.732143)
\pscircle[fillstyle=solid,fillcolor=fillval](1.553571,0){0.017857}
\pscircle[fillstyle=solid,fillcolor=fillval](1.767857,0){0.017857}
\pscircle[fillstyle=solid,fillcolor=fillval](1.982143,0){0.017857}
\pscircle[fillstyle=solid,fillcolor=fillval](1.553571,-0.75){0.017857}
\pscircle[fillstyle=solid,fillcolor=fillval](1.767857,-0.75){0.017857}
\pscircle[fillstyle=solid,fillcolor=fillval](1.982143,-0.75){0.017857}
\psarc(1.660714,-0.017857){0.107143}{-179.999998}{-0.000002}
\psarcn(1.875,-0.732143){0.107143}{179.999998}{0.000002}
\psbezier(1.982143,-0.017857)(1.982143,-0.047619)(1.982143,-0.077381)(1.982143,-0.107143)
(1.982143,-0.166667)(1.910714,-0.255952)(1.767857,-0.375)
(1.625,-0.494048)(1.553571,-0.583333)(1.553571,-0.642857)
(1.553571,-0.672619)(1.553571,-0.702381)(1.553571,-0.732143)
\end{pspicture}%
\end{center}
then we have 
\[ L^{*} = L, \quad  L^{\dagger} = R, \quad R^{*} = R, \quad R^{\dagger} = L . \]

\noindent Using the adjoint, we can define a \emph{covariant} functor
\[ \frac{f : A \rarr B}{f_{*} : A^{*} \rarr B^{*}} \qquad f \mapsto f^{*\dagger} . \]
We have
\[ (f^{*})_{*} = f^{\dagger} = (f_{*})^{*} . \]
In terms of complex matrices, $f^{*}$ is transpose, while $f_{*}$ is complex conjugation. Diagrammatically, $f_{*}$ is ``reflection in the $y$-axis''. 
\vsa
\begin{center}
\psset{unit=1in,cornersize=absolute,dimen=middle}%
\begin{pspicture}(0,-0.317302)(3,0.013393)%
\psset{linewidth=1pt}%
\newgray{fillval}{0.3}
\pscircle[fillstyle=solid,fillcolor=fillval](0.013393,0){0.013393}
\pscircle[fillstyle=solid,fillcolor=fillval](0.174107,0){0.013393}
\pscircle[fillstyle=solid,fillcolor=fillval](0.334821,0){0.013393}
\pscircle[fillstyle=solid,fillcolor=fillval](0.897321,0){0.013393}
\pscircle[fillstyle=solid,fillcolor=fillval](0.334821,-0.294643){0.013393}
\pscircle[fillstyle=solid,fillcolor=fillval](0.897321,-0.294643){0.013393}
\pscircle[fillstyle=solid,fillcolor=fillval](1.058036,-0.294643){0.013393}
\pscircle[fillstyle=solid,fillcolor=fillval](1.21875,-0.294643){0.013393}
\pscircle[fillstyle=solid,fillcolor=fillval](1.78125,-0.294643){0.013393}
\pscircle[fillstyle=solid,fillcolor=fillval](1.941964,-0.294643){0.013393}
\pscircle[fillstyle=solid,fillcolor=fillval](2.102679,-0.294643){0.013393}
\pscircle[fillstyle=solid,fillcolor=fillval](2.102679,0){0.013393}
\pscircle[fillstyle=solid,fillcolor=fillval](2.665179,0){0.013393}
\pscircle[fillstyle=solid,fillcolor=fillval](2.825893,0){0.013393}
\pscircle[fillstyle=solid,fillcolor=fillval](2.986607,0){0.013393}
\pscircle[fillstyle=solid,fillcolor=fillval](2.665179,-0.294643){0.013393}
\psarc(0.09375,-0.013393){0.080357}{-179.999999}{-0.000001}
\psarc(2.90625,-0.013393){0.080357}{180}{360}
\psarcn(1.138393,-0.28125){0.080357}{179.999998}{0.000002}
\psarcn(1.861607,-0.28125){0.080357}{179.999998}{0.000002}
\psline(0.334821,-0.013393)(0.334821,-0.28125)
\psline(0.897321,-0.013393)(0.897321,-0.28125)
\psline(2.102679,-0.013393)(2.102679,-0.28125)
\psline(2.665179,-0.013393)(2.665179,-0.28125)
\uput{0.5ex}[d](0.174107,-0.308036){\raisebox{-3ex}{$f$}}
\uput{0.5ex}[d](1.058036,-0.308036){\raisebox{-3ex}{$f^*$}}
\uput{0.5ex}[d](1.941964,-0.308036){\raisebox{-3ex}{$f^{\dagger}$}}
\uput{0.5ex}[d](2.825893,-0.308036){\raisebox{-3ex}{$f_{*}$}}
\end{pspicture}%
\end{center}

\vsb
We have the following refinement of Proposition~\ref{freepivot}, by similar methods to those used for free strongly compact closed categories in \cite{Abr05}.
\begin{proposition}
\label{freepivotdagprop}
$\DD$ is the free pivotal dagger category over one self-dual generator.
\end{proposition}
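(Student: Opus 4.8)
The plan is to piggyback on Proposition~\ref{freepivot}: we already know that $\DD$, together with its distinguished object $\mathbf{1}$ (which satisfies $\mathbf{1} = \mathbf{1}^{*}$), is the free pivotal category on a self-dual generator, so the only new content is the dagger, and I claim it costs almost nothing. First I would verify that ``reflection in the $x$-axis'' --- interchanging the top and bottom rows of a diagram while keeping the loop count --- is a well-defined, identity-on-objects, contravariant, strict monoidal involution $()^{\dagger}$ on $\DD$. Well-definedness on planar-isotopy classes and $f^{\dagger\dagger} = f$ are immediate; contravariance $(g \circ f)^{\dagger} = f^{\dagger}\circ g^{\dagger}$ and monoidality $(f\otimes g)^{\dagger} = f^{\dagger}\otimes g^{\dagger}$ are read off directly from the definitions of composition and tensor. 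Finally $\epsilon_{\nn} = \eta_{\nn}^{\dagger}$ holds because the duality in $\DD$ is trivial (so $\eta_{\nn^{*}} = \eta_{\nn}$) and reflecting a cup diagram yields the corresponding cap diagram. Hence $\DD$ is a strict pivotal dagger category.

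Next I would isolate the structural fact that drives the freeness: as a pivotal category, $\DD$ has \emph{no generating morphisms at all} --- the object $\mathbf{1}$ alone generates everything. Precisely, every arrow of $\DD$ is obtained from $\eta_{\mathbf{1}}$, $\epsilon_{\mathbf{1}}$ and $1_{\mathbf{1}}$ by $\otimes$ and $\circ$: the generators $U_{i}$ are such composites, the higher units and counits $\eta_{\nn},\epsilon_{\nn}$ reduce to iterated level-$1$ ones via the pivotal coherence equations, the pure cups and caps are precisely the $\eta_{\nn},\epsilon_{\nn}$, and a free loop is $\epsilon_{\mathbf{1}}\circ\eta_{\mathbf{1}} : I \rarr I$. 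This is exactly the expressiveness statement of Section~2.3, upgraded from $\TLM{n}$ to $\DD$, and it is also what the planar Geometry of Interaction presentation of Section~6 makes precise; it is the one genuinely substantive ingredient, and it is already subsumed in the proof of Proposition~\ref{freepivot}.

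With these two points in hand the universal property follows formally. Let $\CC$ be a strict pivotal dagger category and $X \in \CC$ a self-dual object. By Proposition~\ref{freepivot} there is a unique strict pivotal functor $F : \DD \rarr \CC$ with $F(\mathbf{1}) = X$, and I only need $F$ to commute with $\dagger$. On both sides $()^{\dagger}$ commutes with $\otimes$ and $\circ$, fixes identities, and (since $\epsilon_{A} = \eta_{A^{*}}^{\dagger}$, whence $\eta_{A}^{\dagger} = \epsilon_{A^{*}}$ and $\epsilon_{A}^{\dagger} = \eta_{A^{*}}$) sends units to counits and back; and $F$, being a strict pivotal functor, preserves $\otimes$, $\circ$, identities, units and counits. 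So, writing any $f$ in $\DD$ as a $\otimes$-$\circ$-composite of $\eta_{\mathbf{1}}, \epsilon_{\mathbf{1}}, 1_{\mathbf{1}}$, one pushes $\dagger$ down to the generators on the $\DD$ side, applies $F$ termwise, and reassembles on the $\CC$ side to conclude $F(f^{\dagger}) = F(f)^{\dagger}$ --- the outcome being independent of the chosen expression because $()^{\dagger}$ is already a well-defined functor on $\DD$. Uniqueness of $F$ among strict pivotal dagger functors is immediate, since such a functor is in particular a strict pivotal functor and Proposition~\ref{freepivot} already determines that uniquely.

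The main obstacle is thus not in the dagger --- its preservation is a formality once the pivotal case is settled --- but in the generation/normal-form result underpinning Proposition~\ref{freepivot}: establishing that the planar GoI model of Section~6 computes composition (with correct loop bookkeeping) and is isomorphic to $\DD$, with reflection in the $x$-axis corresponding to the abstract dagger under that isomorphism. This is precisely the adaptation of the techniques of \cite{Abr05} referred to above, and it is the step whose verification I would expect to absorb essentially all of the effort.
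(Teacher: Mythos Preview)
Your proposal is correct and is in the same spirit as the paper's treatment: the paper gives no proof at all beyond the one-line remark that the result follows ``by similar methods to those used for free strongly compact closed categories in \cite{Abr05}'', so your bootstrapping argument from Proposition~\ref{freepivot} is a legitimate and natural way to supply the details. If anything, your route is slightly more economical than what the paper's phrasing suggests (redoing the freeness argument from scratch in the dagger setting): you observe that because $\DD$ is generated as a pivotal category by $\eta_{\mathbf{1}}, \epsilon_{\mathbf{1}}, 1_{\mathbf{1}}$ alone, and because on both sides $\dagger$ is determined on these by $\epsilon_{A} = \eta_{A^{*}}^{\dagger}$ together with self-duality, compatibility of the unique pivotal functor with $\dagger$ is forced by a straightforward induction on $\otimes$-$\circ$-expressions. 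Your identification of the real work --- the generation/normal-form content underlying Proposition~\ref{freepivot}, handled via the planar GoI presentation of Section~6 and the methods of \cite{Abr05} --- matches exactly where the paper locates it.
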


\section{Factorization and Idempotents}
We now consider some structural properties of the category $\DD$ which we have not found elsewhere in the literature.\footnote{The idea of considering these properties arose from a discussion with Louis Kauffman, who showed the author a direct diagrammatic characterization of idempotents in $\DD$, which has subsequently appeared in \cite{KauII}.}

We begin with a pleasingly simple diagrammatic characterization of \emph{monics} and \emph{epics} in $\DD$.

\begin{proposition}
\label{emchar}
An arrow in $\DD$ is monic iff it has no cups; it is epic iff it has no caps.
\end{proposition}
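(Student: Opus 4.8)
The plan is to prove each half by a symmetric argument, using the duality $()^{*}$ (which reverses caps and cups, since it is $180^\circ$ rotation) to reduce the epic case to the monic case; so I will concentrate on the claim that an arrow $f : \nn \to \mm$ in $\DD$ is monic iff it has no cups (arcs joining two dots in the top row of $f$, \ie two dots in the domain $\nn$). One direction is easy: if $f$ has a cup joining dots $i$ and $j$ in the top row, then I exhibit two distinct arrows $g, h : \kk \to \nn$ with $f \circ g = f \circ h$. Indeed, composing with $f$ closes off the arc between positions $i$ and $j$, so any difference between $g$ and $h$ that is ``absorbed'' into that cup becomes invisible. Concretely, take $g$ to be $\id_{\nn}$ with the through-lines at $i$ and $j$ replaced by a single cup joining $i$ to $j$ together with a counit (cap) absorbing the corresponding domain dots, i.e. $g = \id$ everywhere except a $U$-like gadget at $\{i,j\}$; take $h$ to differ from $g$ by composing that gadget with a loop, or more simply take $g$ and $h$ to differ by how a stray cup/cap pair at positions $i,j$ is routed — both become equal after post-composing with $f$ because the cup in $f$ forces the two strands together. (One must be a little careful to pick $g \neq h$ that genuinely differ in $\DD$, e.g. differing in loop count after composition only via $f$'s cup; the cleanest choice is $g$ and $h$ agreeing off $\{i,j\}$ and restricting to the two non-isotopic planar diagrams $\mathbf 2 \to \mathbf 2$ that agree after capping the top — but since $f$'s cup is exactly such a cap, $f\circ g = f \circ h$.)

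For the converse — $f$ has no cups implies $f$ is monic — suppose $f \circ g = f \circ h$ for $g, h : \kk \to \nn$. Since $f$ has no cups, every dot in the domain row of $f$ is the endpoint either of a through-line (going to the codomain $\mm$) or of nothing forbidden else; in fact "no cups" means the domain row of $f$ consists entirely of through-line endpoints — all $n$ top dots of $f$ are connected to bottom dots. The key observation is then that post-composing with such an $f$ is \emph{injective on connectivity data and on loop count}: because $f$ contributes no arc within its own top row, the pairing that $f\circ g$ induces on the top row of $g$ (the domain $\kk$) together with any other free endpoints is determined by following $g$'s strands, then $f$'s through-lines, with no new identifications among $g$'s endpoints created by $f$, and no new loops created beyond those already counted (a loop forms in the composite only where a cap of the upper diagram meets a cup of the lower; $f$ has no caps-meeting issue of the relevant kind for the relevant side — here I use "no cups in $f$" to rule out loops formed by $f$'s cups closing off $g$'s caps). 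Hence from the connectivity and loop count of $f\circ g = f\circ h$ one can read back the connectivity and loop count of $g$ and of $h$, forcing $g = h$.

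The cleanest way to organize the converse is to use the characterization of $\DD$ as the free pivotal dagger category (Proposition~\ref{freepivotdagprop}), or more elementarily to factor $f$: an arrow with no cups is, up to the monoidal structure and the units/counits, a composite of through-line "permutations" (which in the planar setting are forced to be identity-like, since there are no crossings) tensored with counits $\epsilon$ applied only on the \emph{codomain} side — equivalently $f = (\id_{\mm} \otimes \text{caps}) \circ (\text{through lines})$ in a suitable normal form, i.e. $f$ is a split monic, being a section of the evident arrow that re-opens those caps. A split monic is monic, which gives the result immediately. So the real content is a normal-form lemma: \emph{every cup-free arrow of $\DD$ is a split monomorphism}, proved by pushing all the caps of $f$ down past the through-lines using planar isotopy and the triangular identities \eqref{triangeq}. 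The epic case is dual under $()^{*}$.

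\textbf{Main obstacle.} I expect the delicate point to be the converse direction's loop-bookkeeping: one must verify that composing with a cup-free $f$ creates no loops and identifies no pair of domain endpoints of $g$, so that $g$ is fully recoverable from $f\circ g$ — and one must handle planar isotopy carefully so that "no cups" is genuinely isotopy-invariant (it is, since a cup is characterized intrinsically as an arc with both endpoints in the domain row, a property stable under deformation within the framing rectangle). Packaging this as "cup-free $\Rightarrow$ split mono" via a normal form is what makes it clean; getting that normal form right, with the correct use of the triangular identities to slide caps, is the step most likely to need care.
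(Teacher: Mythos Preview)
Your direct approach to the converse --- that when $f$ has no cups one can recover $g$ from $f \circ g$ because no new identifications among $g$'s endpoints and no new loops are created --- is correct and is exactly what the paper does. The paper phrases it as: erase the caps of $f$ from $f \circ g$ to read off the diagram of $g$, and subtract the loop-count of $f$ to read off the loop-count of $g$.

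However, your proposed ``cleanest way'' via split monomorphisms is wrong. Cup-free arrows in $\DD$ are \emph{not} in general split monic. The simplest counterexample is the pure cap $\eta_{\mathbf{1}} : \mathbf{0} \to \mathbf{2}$: it has no cups, but any candidate retraction $r : \mathbf{2} \to \mathbf{0}$ must be $\delta^{k}$ times the pure cup, and then $r \circ \eta_{\mathbf{1}} = \delta^{k+1} \neq 1_{\mathbf{0}}$, since composing a cap with a cup always produces a loop. More generally, ``re-opening'' the caps of a cup-free $f$ with matching cups creates one loop per cap, so the composite is $\delta^{(\text{number of caps})} \cdot 1_{\nn}$, not the identity. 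So the normal-form lemma you propose is false, and you should stick with the direct recovery argument.

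For the ``has a cup $\Rightarrow$ not monic'' direction, your witnesses are too vague to constitute a proof. The paper's argument is crisp: by planarity one may assume the cup joins adjacent dots $i$ and $i{+}1$, and then
\[
f \circ (\delta \cdot 1_{\nn}) \;=\; f \circ (1 \otimes U_{i} \otimes 1),
\]
since composing the cap of $U_{i}$ against the cup of $f$ produces one loop and restores the same cup, while $\delta \cdot 1_{\nn}$ and $1 \otimes U_{i} \otimes 1$ are manifestly distinct in $\DD$. You should replace your hand-waving about ``differing by routing'' with this explicit pair.
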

\begin{proof}
Suppose that $f : \nn \rarr \mm$ has no cups. Thus all dots in $\nn$ are connected by through lines to dots in $\mm$. Now consider a composition $f \circ g$. No loops can be formed by this composition; hence we can recover $g$ from $f \circ g$ by erasing the caps of $f$. Moreover, the number of loops in $f \circ g$ will simply be the sum of the loops in $f$ and $g$, so we can recover the loops of $g$ by subtracting the loops of $f$ from the composition. It follows that
\[ f \circ g = f \circ h \;\; \Longrightarrow \;\; g = h, \]
\ie that $f$ is monic, as required.

For the converse, suppose that $f$ has a cup, which we can assume to be connecting dots $i$ and $i+1$ in the top row. (Note that if $i < j$ are connected by a cup, then by planarity, every $k$ with $i < k < j$ must also be connected in a cup to some $l$ with $i < l < j$.) Then $f \circ \delta\cdot1 = f \circ (1 \otimes U_{i} \otimes 1)$, so $f$ is not monic. Diagrammatically, this says that we can either form a loop using the cup of $f$, or simply add a loop which is attached to an identity morphism.

The characterization of epics is entirely similar.
\end{proof}

This immediately yields an ``image factorization'' structure for $\DD$.
\begin{proposition}
\label{emfact}
Every arrow in $\DD$ has an epi-mono factorization. 
\end{proposition}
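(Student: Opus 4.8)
The plan is to read the factorization straight off the geometry, using Proposition~\ref{emchar} to certify that the two pieces are epic and monic. Given an arrow $f : \nn \rarr \mm$ of $\DD$ carrying $k$ loops, let $p$ be the number of through lines in its underlying diagram. I claim $f$ factors through the object $\pp$ as $f = m \circ e$, where $e : \nn \rarr \pp$ keeps the cups of $f$ but replaces its through lines by straight descents onto a fresh row of $p$ dots, and $m : \pp \rarr \mm$ keeps the caps of $f$ and feeds those $p$ dots into the bottom endpoints of the through lines of $f$.

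First I would pin down $e$. List the top dots of $f$ lying on through lines as $t_1 < \cdots < t_p$; these are exactly the top dots not covered by cups. In $e$, connect each $t_a$ to the $a$-th dot of the new bottom row, keep every cup of $f$ unchanged, and attach all $k$ loops of $f$ to $e$. Since the cups of $f$ form a non-crossing (nested) family on $\{1,\dots,n\}$ and the $t_a$ are their complement, these straight descents cross neither one another nor any cup, so $e$ is a legitimate planar diagram; and $e$ has no caps, so $e$ is epic by Proposition~\ref{emchar}. Symmetrically, listing the bottom dots of $f$ lying on through lines as $b_1 < \cdots < b_p$, define $m : \pp \rarr \mm$ by connecting its $a$-th top dot to $b_a$, keeping every cap of $f$, and with no loops; then $m$ is planar and has no cups, hence is monic.

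The heart of the argument is to verify $m \circ e = f$, and this is where planarity does the work: because through lines cannot cross, the order of the through lines of $f$ by top endpoint agrees with their order by bottom endpoint, so the through line emanating from $t_a$ terminates at $b_a$. Composing $e$ and $m$ glues the shared middle row dot by dot; at the $a$-th dot the descent from $t_a$ in $e$ meets the strand to $b_a$ in $m$, reconstituting exactly the through line $t_a \rightsquigarrow b_a$ of $f$. No loop is created by the composition, since every middle-row dot carries a through line on each side --- no cup of $e$ can meet a cap of $m$ there, as $e$ has no caps and $m$ has no cups. Hence $m \circ e$ has precisely the cups, caps and through lines of $f$ together with its $k$ loops, so $m \circ e = f$, as required.

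The only delicate point is the planarity bookkeeping: one must know that the isotopy class of $f$ determines its cups, caps, through lines and loop count unambiguously, and that reassembling these data as above produces a diagram \emph{isotopic} to $f$, not merely combinatorially matching it. This is routine given the normal-form description of Temperley-Lieb diagrams --- every diagram is isotopic to one in which all cups lie above all caps with straight through lines between --- and could alternatively be handled by invoking that normal form at the outset and literally cutting the picture along a horizontal line between the cups and the caps. I would also note in passing that the factorization is not unique (the $k$ loops may be split between $e$ and $m$ arbitrarily), but that $\pp$ is forced, since $p$ is the least $\kk$ through which $f$ can factor as epi followed by mono; this is the expected ``image'' factorization and is what we use in the sequel for splitting idempotents.
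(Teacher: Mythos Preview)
Your proof is correct and follows essentially the same approach as the paper: the paper phrases the construction as ``erase the caps'' to get $e$ and ``erase the cups'' to get $m$, factoring through the object whose size is the number of through lines, and likewise notes the freedom in distributing the loops. Your version is simply more explicit about why the pieces are planar and why the composition reconstitutes $f$ (in particular the order-preservation of through-line endpoints), points the paper leaves to the reader.
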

\begin{proof}
Given an arrow $f :  \nn \rarr \mm$, suppose it has $p$ cups and $q$ caps. Then we obtain arrows
$e : \nn \rarr (\mm - 2\qq)$ by erasing the caps, and $m : (\nn - 2\pp) \rarr \mm$ by erasing the cups. By Proposition~\ref{emchar}, $e$ is epic and $m$ monic. Moreover, the number of dots in the top and bottom rows connected by through lines must be the same. Hence 
\[ (\mm - 2\qq) = \kk = (\nn - 2\pp) , \]
and we can compose $e$ and $m$ to recover $f$. Note that by planarity, once we have assigned cups and caps, there is no choice about the correspondence between top and bottom row dots by through lines.

This factorization is ``essentially'' unique. However, we are free to split the $l$ loops of $f$ between $e$ and $m$ in any way we wish, so there is a distinct factorization $\delta^{a}\cdot m \circ \delta^{b}\cdot e$ for all $a, b \in \Nat$ with $a + b = l$.
\end{proof}
We illustrate the epi-mono factorization for the left wave:
\begin{center}
\psset{unit=1in,cornersize=absolute,dimen=middle}%
\begin{pspicture}(0,-0.767857)(2,0.017857)%
\psset{linewidth=1pt}%
\newgray{fillval}{0.3}
\pscircle[fillstyle=solid,fillcolor=fillval](0.017857,0){0.017857}
\pscircle[fillstyle=solid,fillcolor=fillval](0.232143,0){0.017857}
\pscircle[fillstyle=solid,fillcolor=fillval](0.446429,0){0.017857}
\pscircle[fillstyle=solid,fillcolor=fillval](0.017857,-0.75){0.017857}
\pscircle[fillstyle=solid,fillcolor=fillval](0.232143,-0.75){0.017857}
\pscircle[fillstyle=solid,fillcolor=fillval](0.446429,-0.75){0.017857}
\psarc(0.339286,-0.017857){0.107143}{180}{360}
\psarcn(0.125,-0.732143){0.107143}{180}{0}
\psbezier(0.017857,-0.017857)(0.017857,-0.047619)(0.017857,-0.077381)(0.017857,-0.107143)
(0.017857,-0.166667)(0.089286,-0.255952)(0.232143,-0.375)
(0.375,-0.494048)(0.446429,-0.583333)(0.446429,-0.642857)
(0.446429,-0.672619)(0.446429,-0.702381)(0.446429,-0.732143)
\pscircle[fillstyle=solid,fillcolor=fillval](1.553571,0){0.017857}
\pscircle[fillstyle=solid,fillcolor=fillval](1.767857,0){0.017857}
\pscircle[fillstyle=solid,fillcolor=fillval](1.982143,0){0.017857}
\pscircle[fillstyle=solid,fillcolor=fillval](1.553571,-0.392857){0.017857}
\pscircle[fillstyle=solid,fillcolor=fillval](1.553571,-0.75){0.017857}
\pscircle[fillstyle=solid,fillcolor=fillval](1.767857,-0.75){0.017857}
\pscircle[fillstyle=solid,fillcolor=fillval](1.982143,-0.75){0.017857}
\psarc(1.875,-0.017857){0.107143}{-179.999998}{-0.000002}
\psarcn(1.660714,-0.732143){0.107143}{179.999998}{0.000002}
\psline(1.553571,-0.017857)(1.553571,-0.375)
\psbezier(1.553571,-0.410714)(1.553571,-0.425595)(1.553571,-0.440476)(1.553571,-0.455357)
(1.553571,-0.485119)(1.625,-0.52381)(1.767857,-0.571429)
(1.910714,-0.619048)(1.982143,-0.657738)(1.982143,-0.6875)
(1.982143,-0.702381)(1.982143,-0.717262)(1.982143,-0.732143)
\rput(1,-0.392857){$=$}
\end{pspicture}%
\end{center}

We recall that an \emph{idempotent} in a category is an arrow $i : A \rarr A$ such that $i^{2} = i$. We say that an idempotent $i$ \emph{splits} if there are arrows $r : A \rarr B$ and $s : B \rarr A$ such that
\[ i = s \circ r, \qquad r \circ s = 1_{B}. \]

\begin{proposition}
All idempotents split in $\DD$.
\end{proposition}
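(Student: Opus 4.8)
The plan is to read off a splitting directly from the epi-mono factorization of Proposition~\ref{emfact}, using the general fact that a monic cancels on the left and an epic cancels on the right.

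First I would note that an idempotent $i : \nn \rarr \nn$ in $\DD$ is necessarily \emph{loop-free}: if $i$ carries $\ell$ loops then $i \circ i$ carries $2\ell + c$ loops, where $c \geq 0$ is the number of loops created when composing the paths of the two copies of $i$, so the equation $i^{2} = i$ forces $2\ell + c = \ell$ and hence $\ell = c = 0$. Thus $i$ is a plain planar diagram and composing it with itself creates no loops. Now apply Proposition~\ref{emfact}: since $i$ is loop-free its essentially unique factorization carries no loop decorations, so $i = m \circ e$ with $e : \nn \rarr \kk$ obtained by erasing the caps of $i$ and $m : \kk \rarr \nn$ obtained by erasing the cups, where $\kk$ is the number of through lines of $i$. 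By Proposition~\ref{emchar}, $e$ is epic (it has no caps) and $m$ is monic (it has no cups).

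Now the idempotent law finishes the argument. From $i^{2} = i$ we obtain $m \circ (e \circ m) \circ e = m \circ e$; cancelling the monic $m$ on the left gives $(e \circ m) \circ e = e$, and then cancelling the epic $e$ on the right gives $e \circ m = 1_{\kk}$. Hence, taking $r = e : \nn \rarr \kk$ and $s = m : \kk \rarr \nn$, we have $s \circ r = i$ and $r \circ s = 1_{\kk}$, so $i$ splits through the object $\kk$.

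I do not anticipate a genuine obstacle. The only point that needs care is the loop bookkeeping: it is precisely what forces $\ell = 0$, which in turn guarantees that we are using the clean factorization $m \circ e$ of Proposition~\ref{emfact} rather than one of the decorated variants $\delta^{a} \cdot m \circ \delta^{b} \cdot e$ with $a + b = \ell > 0$ (for the latter the back-composite would be $\delta^{\ell} \cdot 1_{\kk} \neq 1_{\kk}$). Beyond that, the splitting is forced by a purely formal manipulation, so this is essentially just the standard observation that a category in which every idempotent admits an epi-mono factorization has split idempotents.
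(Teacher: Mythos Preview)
Your proof is correct and follows essentially the same route as the paper: factor $i = m \circ e$ via Proposition~\ref{emfact}, then cancel the monic $m$ and the epic $e$ from $m \circ e \circ m \circ e = m \circ e$ to obtain $e \circ m = 1_{\kk}$. The paper omits your preliminary loop-count observation and simply lets the cancellation argument do all the work, but your explicit bookkeeping (showing $\ell = c = 0$) is a harmless and arguably clarifying addition.
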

\begin{proof}
Let $i : \nn \rarr \nn$ be an idempotent in $\DD$. By Proposition~\ref{emfact}, $i = m \circ e$, where $e : \nn \rarr \kk$ is epic and $m : \kk \rarr \nn$ is monic. Now 
\[ m \circ e \circ m \circ e = m \circ e . \]
Since $m$ is monic, this implies that $e \circ m \circ e = e = 1 \circ e$. Since $e$ is epic, this implies that $e \circ m = 1$.
\end{proof}

\section{Categorical Quantum Mechanics}
We now relate our discussion to the Abramsky-Coecke programme of Categorical Quantum Mechanics.

This approach is very different to previous work on the Computer Science side of this interdisciplinary area, which has focussed on quantum algorithms and complexity. The focus has rather been on developing \emph{high-level methods} for Quantum Information and Computation (QIC)---languages, logics, calculi, type systems etc.---analogous to those which have proved so essential in classical computing \cite{HLM}. This has led to nothing less than a recasting of \emph{the foundations of Quantum Mechanics itself}, in the more abstract language of category theory. The key contribution is the paper with Coecke \cite{AbrCoe2}, in which we develop an axiomatic presentation of quantum mechanics in the general setting of \emph{strongly compact closed categories}, which  is adequate for all the needs of QIC.

Specifically, we show that we can recover the key quantum mechanical notions of \em  inner-product, unitarity, full and partial trace, Hilbert-Schmidt inner-product and map-state duality, projection,  positivity, measurement\em, and \em Born rule \em (which provides the quantum \em probabilities\em),  axiomatically at this high level of abstraction and generality. Moreover,  we can  derive the correctness of protocols such as quantum teleportation, entanglement swapping and logic-gate teleportation  \cite{BBC,Gottesman,Swap} in a transparent and very conceptual fashion.  Also, while at this level of abstraction there is no underlying field of complex numbers, there \emph{is} still an intrinsic notion of `scalar', and we can still make sense of \em dual vs.~adjoint \em \cite{AbrCoe2,AbrCoe3}, and  \emph{global phase and elimination thereof} \cite{deLL}. Peter Selinger recovered \em mixed state, complete positivity \em and  \em Jamiolkowski map-state duality \em \cite{Selinger}.
Recently,  in collaboration with Dusko Pavlovic and Eric Paquette,  \em decoherence, generalized measurements \em and \em Naimark's theorem \em have been recovered \cite{CoePav,Paquette}.

Moreover, this formalism has two important additional features. Firstly,  it goes \emph{beyond} the standard Hilbert-space formalism, in that it is able to capture classical as well as quantum information flows, and the interaction between them, \emph{within the formalism}. For example, we can capture the idea that the result of a measurement is used to determine a further stage of quantum evolution, as e.g.~in the teleportation protocol \cite{BBC}, where a unitary correction must be performed after a measurement; or also in measurement-based quantum computation \cite{Briegel,Briegel2}. Secondly, this categorical axiomatics can be presented in terms of  a \emph{diagrammatic calculus} which is extremely intuitive, and potentially can replace  low-level computation with matrices by much more conceptual --- and  automatable --- reasoning. Moreover, this diagrammatic calculus can be seen as a proof system for a logic, leading to a radically new perspective on what the right logical formulation for Quantum Mechanics should be. This latter topic is initiated in \cite{AD04}, and developed further in the forthcoming thesis of Ross Duncan.

\subsection{Outline of the approach}
We now give some further details of the approach.
The general setting is that of \emph{strongly (or dagger) compact closed categories}, which are the symmetric version of the pivotal dagger categories we encountered in Section~3. Thus, in addition to the structure mentioned there, we have a symmetry natural isomorphism
\[ \sigma_{A,B} : A \otimes B \simeq B \otimes A . \]
See \cite{AbrCoe3} for an extended discussion.
An important feature of the Abramsky-Coecke approach is the use of an intuitive \emph{graphical calculus}, which is essentially the diagrammatic formalism we have seen in the Temperley-Lieb setting, extended with more general basic types and arrows. The key point is that this formalism admits a very direct \emph{physical interpretation} in Quantum Mechanics.

In the graphical calculus we depict physical processes by boxes, and we label the inputs and outputs of these boxes by \em types \em which indicate the kind of system on which these boxes act,  e.g. one qubit, several qubits, classical data, etc. Sequential composition (in time) is depicted by connecting matching outputs and inputs by wires, and parallel composition (tensor) by locating entities side by side e.g.
\[
1_A:A\to A\  \quad f:A\to B\  \quad g\circ f\ \ \quad 1_A\otimes 1_B\  \quad f\otimes 1_C\  \quad f\otimes g\  \quad (f\otimes g)\circ h
\]
for $g:B\to C$ and $h:E\to A\otimes B$ are respectively depicted as:

\medskip\noindent\begin{minipage}[b]{1\linewidth}
\centering{\epsfig{figure=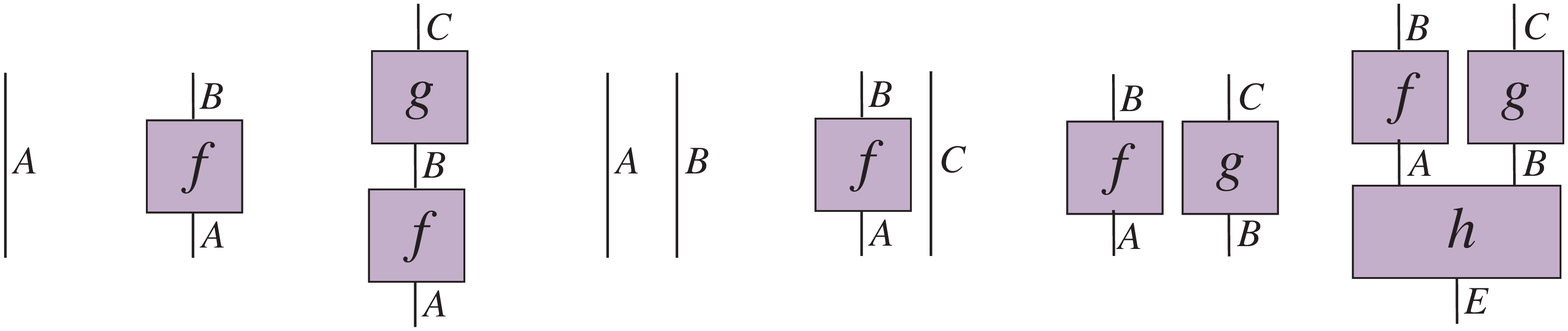,width=300pt}}      
\end{minipage}

\noindent
(The convention in these diagrams is that the `upward' vertical direction represents progress of time.)
A special role is played by boxes with either no input or no output,  called \em states \em and \em costates \em respectively (cf.~Dirac's kets and bras  \cite{Dirac}) which we depict by triangles. Finally, we also need to consider diamonds which arise by post-composing a state with a matching costate (cf.~inner-product or Dirac's bra-ket):

\smallskip\noindent\begin{minipage}[b]{1\linewidth}
\centering{\epsfig{figure=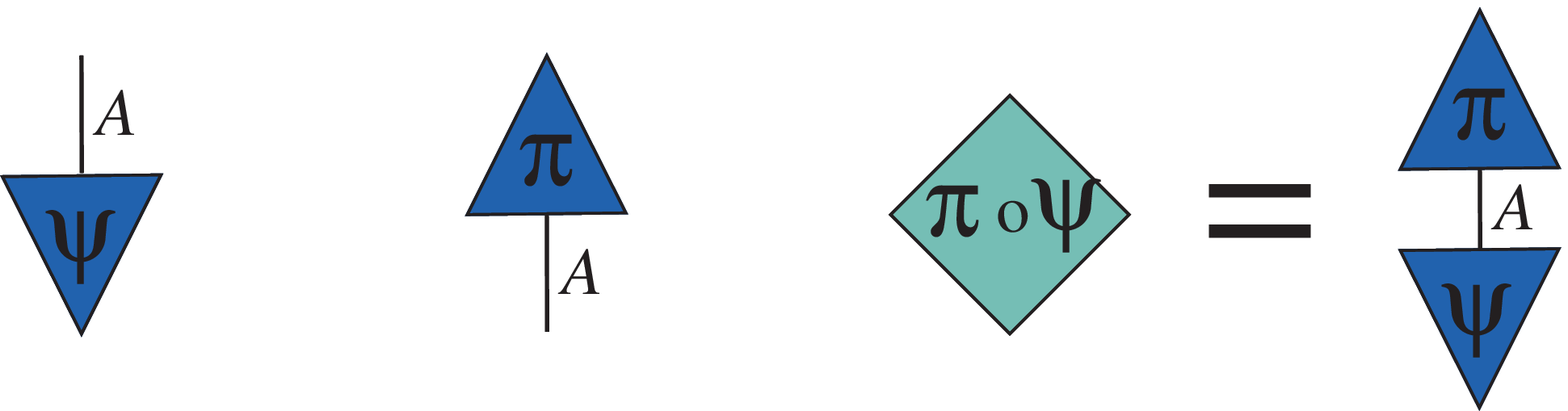,width=180pt}}     
\end{minipage}

\noindent
that is, algebraically,
\[
\psi:\II\to A\quad\qquad \pi:A\to \II \quad\qquad \pi\circ\psi:\II\to\II
\]
where $\II$ is the \emph{tensor unit}:  $A\otimes\II\simeq A\simeq \II\otimes A$.
Extra structure is represented by (i) assigning a direction to the wires, where reversal of this direction is denoted by $A\mapsto A^*$, (ii) allowing reversal of boxes (cf.~the \em adjoint \em for vector spaces), and, (iii) assuming that for each type $A$ there exists  a special bipartite \em Bell-state \em and its adjoint \em Bell-costate\em:

\smallskip\noindent\begin{minipage}[b]{1\linewidth}
\centering{\epsfig{figure=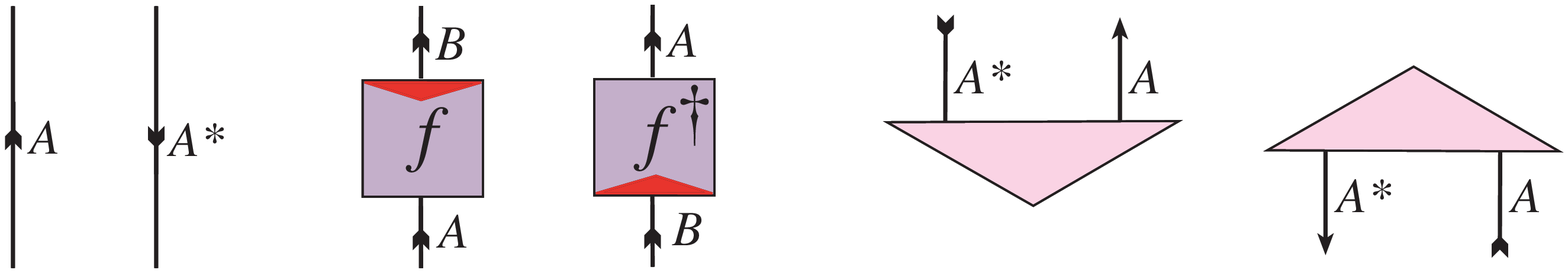,width=285pt}}     
\end{minipage}

\noindent
that is, algebraically,
\[
A\quad\ A^*\quad\ f:A\to B\quad f^\dagger:B\to A\quad\  \eta_A:\II\to A^*\otimes A
\quad\  \eta^\dagger_A:A^*\otimes A\to\II.
\]
Hence, bras and kets are adjoint and the inner product has the form ${(-)^\dagger\circ(-)}$ on states.  Essentially the sole \em axiom \em we impose is:

\smallskip\noindent
\begin{minipage}[b]{1\linewidth}
\centering{\epsfig{figure=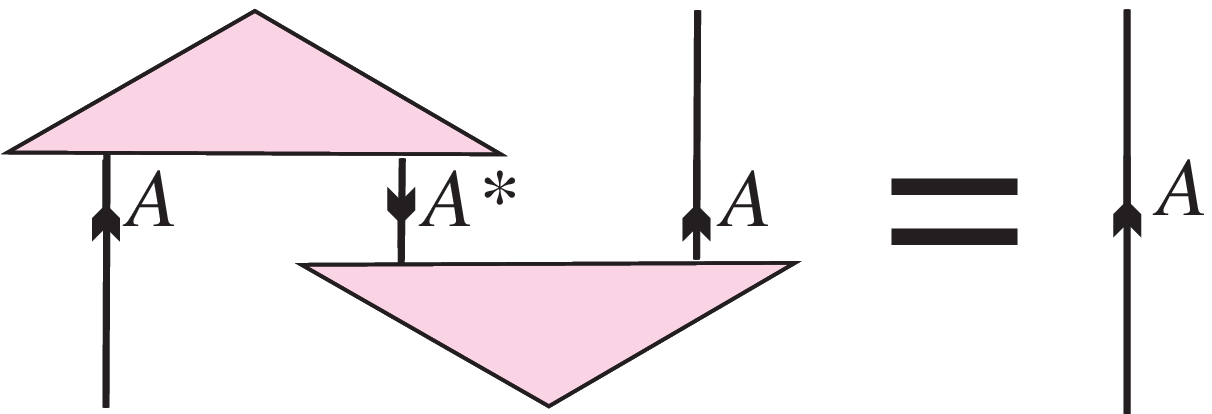,width=125pt}}      
\end{minipage}

\noindent
that is, algebraically,
\[
(\eta^\dagger_{A^*}\otimes 1_A)\circ(1_A\otimes \eta_A)=1_A\,.
\]
If we extend the graphical notation of Bell-(co)states to: 

\smallskip\noindent
\begin{minipage}[b]{1\linewidth}
\centering{\epsfig{figure=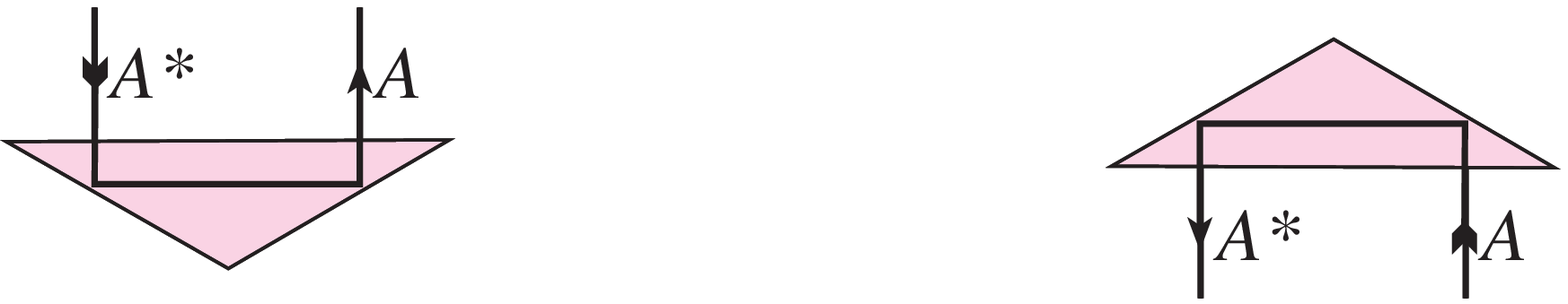,width=200pt}\quad\quad\ }     
\end{minipage}

\noindent
we obtain a clear graphical interpretation for the axiom:
\begin{center}
\begin{minipage}[b]{1\linewidth}
\centering{\fbox{\qquad\ \epsfig{figure=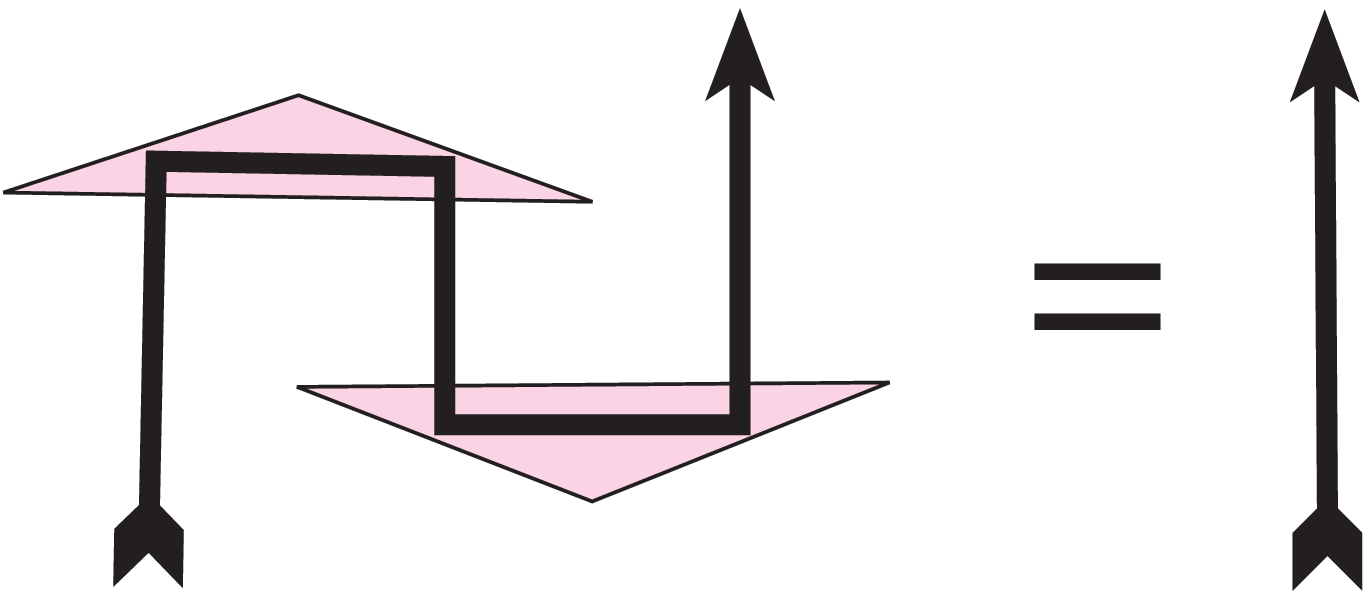,width=130pt}\quad\hfill{\bf (1)}}}
\end{minipage}
\end{center}

\medskip\noindent
which now tells us that we are allowed to \em yank  the black line straight \em:

\smallskip\noindent
\begin{minipage}[b]{1\linewidth}
\centering{\epsfig{figure=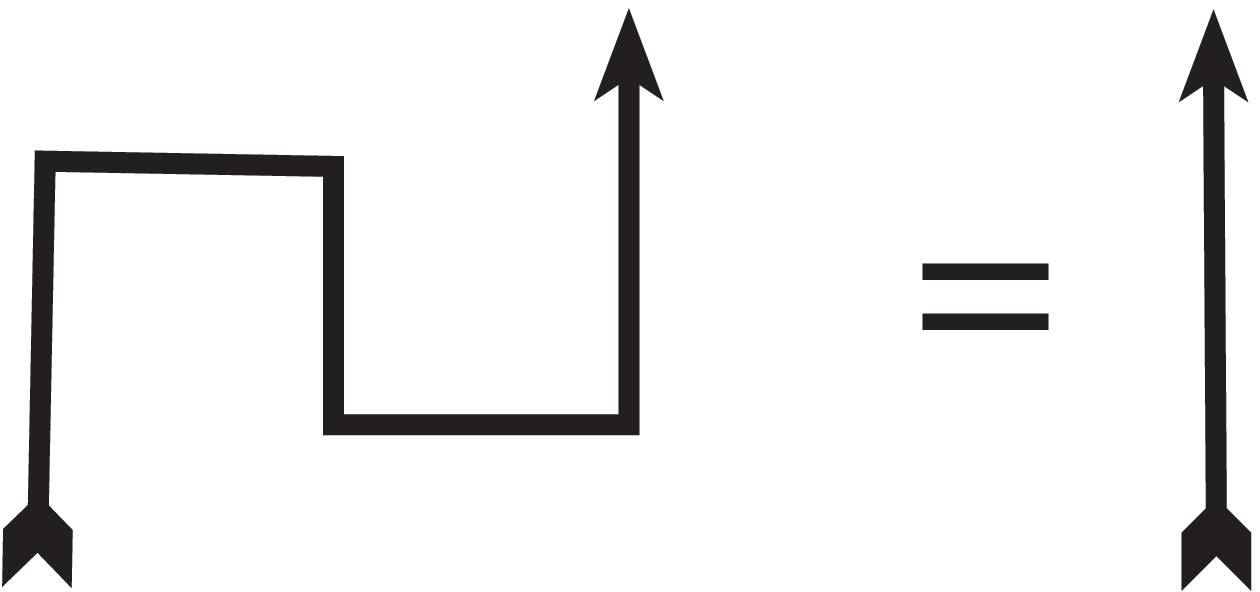,width=130pt}}     
\end{minipage}

\noindent
This equation and its diagrammatic counterpart should of course be compared to equation (\ref{triangeq2}), and equation (\ref{triangeq}) and its accompanying diagram, in Section~3--- they are one and the same, subject to minor differences in diagrammatic conventions.

This intuitive graphical calculus is an important benefit of the categorical axiomatics.  Other advantages can be found in \cite{AbrCoe2,HLM}.

\subsection{Quantum non-logic vs.~quantum hyper-logic}

The term \em quantum logic \em is usually understood in connection with the 1936 Birkhoff-von Neumann proposal \cite{BvN,Redei} to consider the (closed) linear subspaces of a Hilbert space  ordered by inclusion as the formal expression of the logical distinction between quantum and classical physics.  While in classical logic we have deduction, the linear subspaces of a Hilbert space form a non-distributive lattice and hence there is no obvious notion of implication or deduction.  Quantum logic was therefore always seen as logically  very weak, or even as a non-logic.  In addition, it has never given a satisfactory account of compound systems and entanglement. 

On the other hand, \em compact closed logic \em in a sense goes beyond ordinary logic in the principles it admits. Indeed, while in ordinary categorical logic ``logical deduction'' implies that \em morphisms internalize  as elements \em (which above we referred to above as \emph{states}) i.e.
\[
B\rTo^{f}C
\ \ \ \ \stackrel{\simeq}{\longleftrightarrow}\ \ \ \
I\rTo^{\name{f}} B\!\Rightarrow\! C
\]
(where $I$ is the tensor unit), in \em compact closed logic \em they internalize \emph{both} as states \emph{and} as costates, \ie
\[
A\otimes B^*\!\rTo^{\coname{f}} I
\ \ \ \stackrel{\simeq}{\longleftrightarrow}\ \ \
A\rTo^{f}B
\ \ \ \stackrel{\simeq}{\longleftrightarrow}\ \ \
I \rTo^{\name{f}} A^*\!\otimes B
\]
where  we introduce the following notation:
\[
\uu f\uuu= (1_{A^*}\otimes f)\circ\eta_A: I \to A^*\otimes B
\quad \dd f\ddd=\epsilon_{B}\circ(f\otimes 1_{B^*}):A\otimes B^*\to I .
\]
It is exactly this dual internalization which allows  the \em straightening axiom  \em in picture {\bf(1)} to be expressed. In the graphical calculus this  is witnessed by the fact that we can  define both a state and a costate

\bigskip\noindent
\begin{minipage}[b]{1\linewidth}
\centering{\fbox{\epsfig{figure=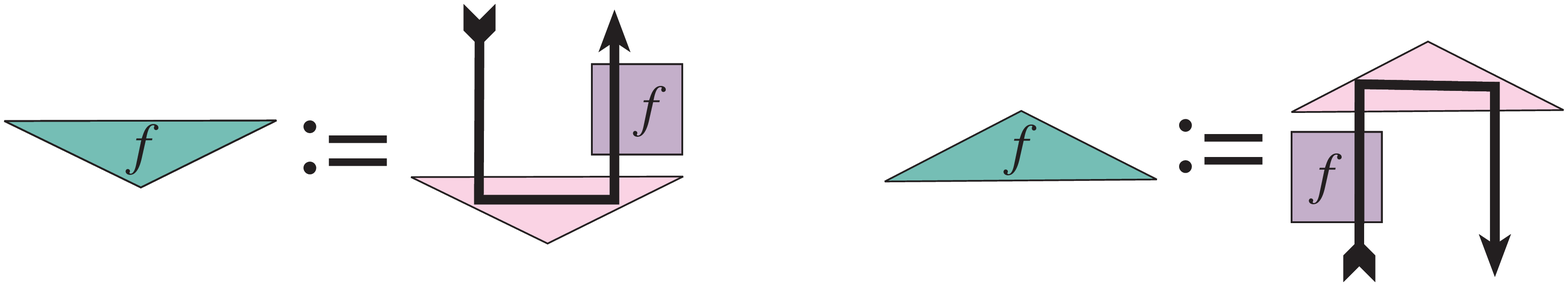,width=285pt}\hfill{\bf (2)}}}     
\end{minipage}

\bigskip\noindent
for each operation $f$. Physically, costates form the (destructive parts of) \emph{projectors}, i.e. branches of projective measurements.

\subsubsection{Compositionality.} The semantics is obviously compositional,  both with respect to sequential composition of operations and parallel composition of types and operations, allowing the description of systems to be built up from smaller components.
But we also have something more specific in mind: a form of compositionality with direct applications to the analysis of compound entangled systems. Since we have:

\noindent
\begin{minipage}[b]{1\linewidth}
\centering{\epsfig{figure=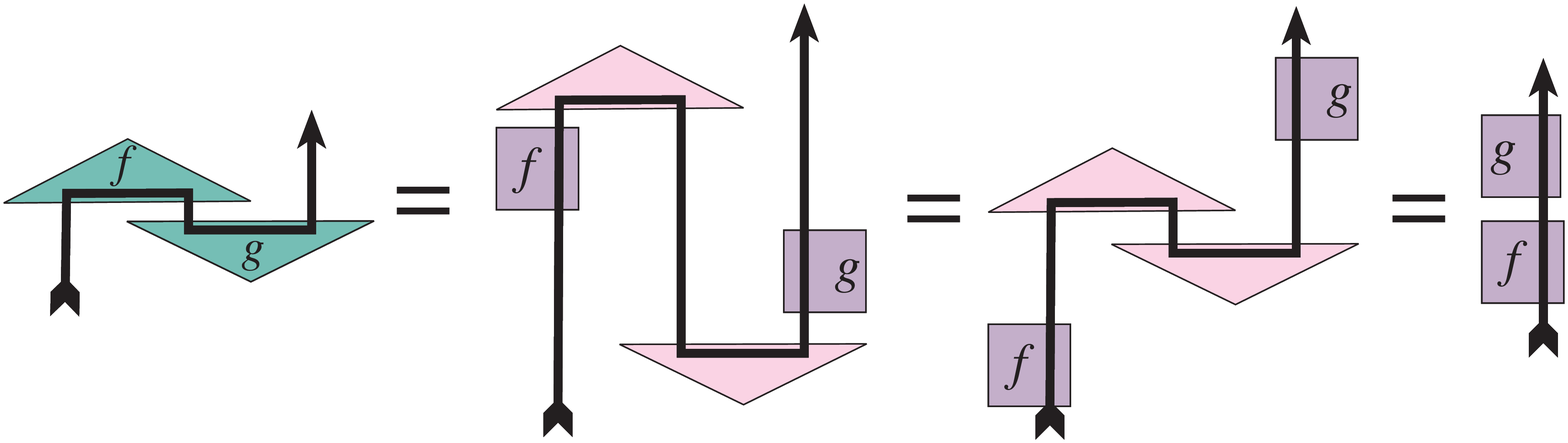,width=300pt}}     
\end{minipage}

\noindent
we obtain:

\medskip\noindent
\begin{minipage}[b]{1\linewidth}
\centering{\fbox{\qquad\epsfig{figure=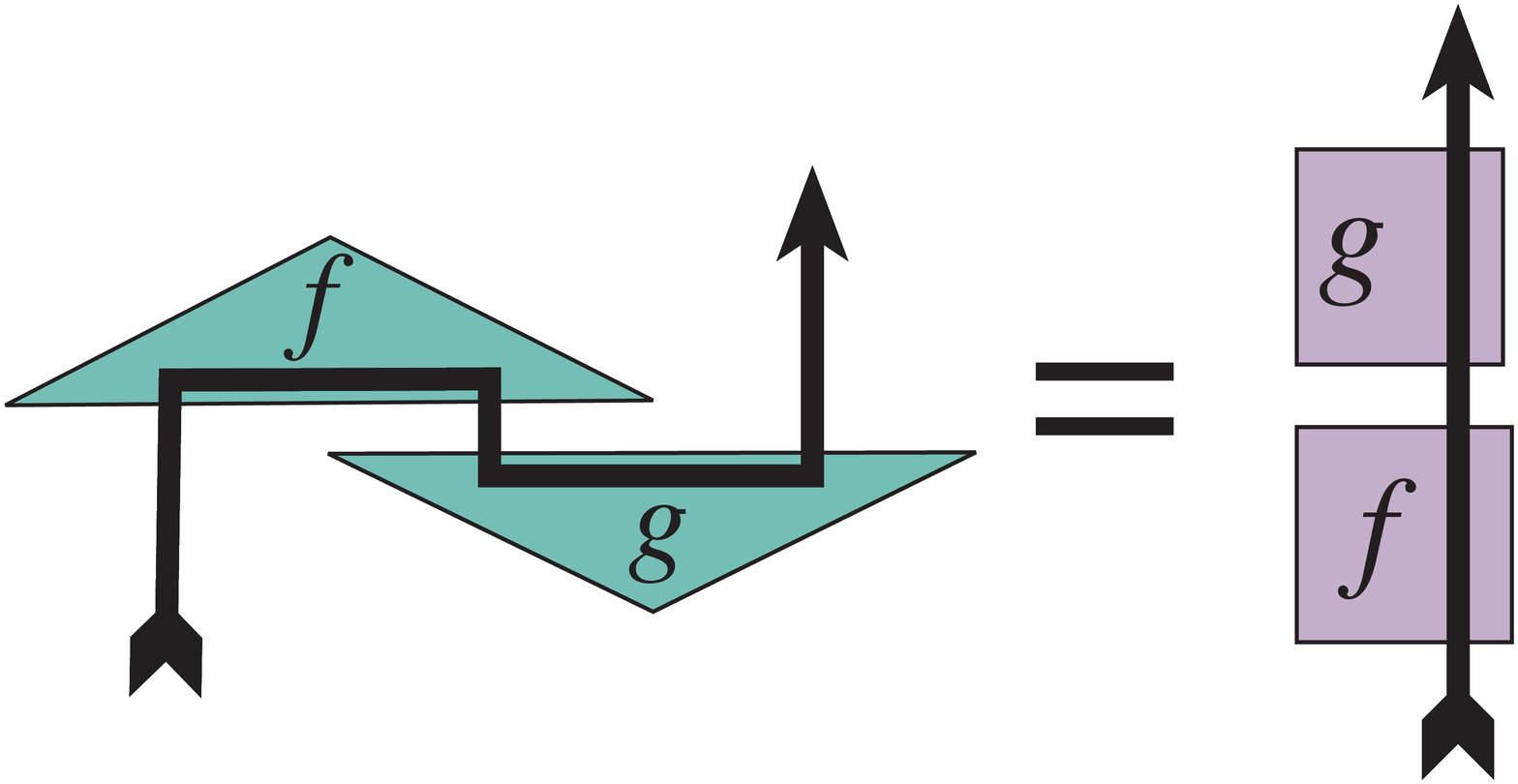,width=140pt}\qquad\hfill{\bf (3)}}}     
\medskip\end{minipage}

\noindent
i.e.~composition of operations can be \em internalized \em in the behavior of entangled states and costates. Note in particular the interesting phenomenon of ``apparant reversal of the causal order'' which is the source of many quite mystical interpretations of quantum teleportation in terms of ``traveling backward in time'' --- cf.~\cite{Laflamme}.  Indeed, while on the left, physically, we first prepare the state labeled $g$ and then apply the costate labeled $f$, the global effect is {\em as if} we first applied $f$ itself first, and only then $g$.

\subsubsection{Derivation of quantum teleportation.} This is the most basic application of compositionality in action.  Immediately from picture {\bf(1)} we can read the quantum mechanical potential for teleportation:

\bigskip\noindent
\begin{minipage}[b]{1\linewidth}
\centering{$\!\!$\epsfig{figure=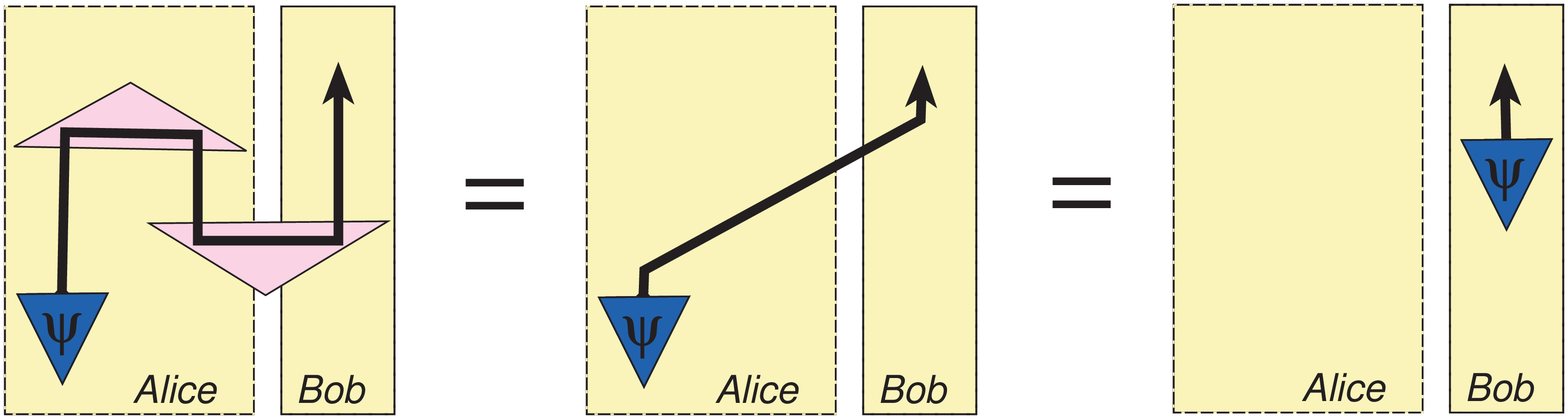,width=300pt}}     
\end{minipage}

\smallskip\noindent
This is not quite the whole story, because of  the non-deterministic nature of measurements.
But it suffices to introduce a unitary correction. Using picture {\bf(3)}
the full description of teleportation becomes:

\medskip\noindent
\begin{minipage}[b]{1\linewidth}
\centering{\epsfig{figure=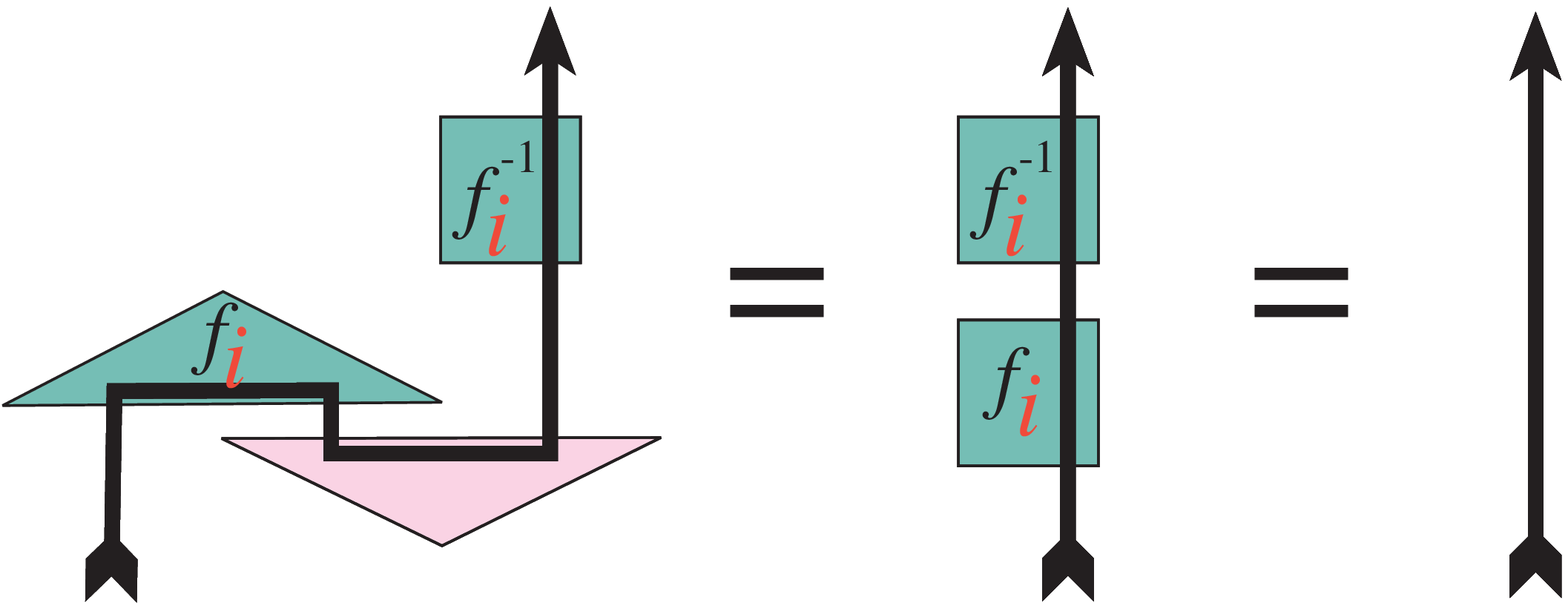,width=220pt}}
\end{minipage}

\smallskip\noindent
where the \em classical communication \em is now implicit in the fact that the index $i$ is both present in the costate (= measurement-branch) and the correction, and hence needs to be sent from Alice to Bob.

The classical communication can be made \emph{explicit} as a fully fledged part of the formalism, using \emph{additional types}: biproducts in \cite{AbrCoe2}, and  ``classical objects'' in \cite{CoePav}. This allows entire protocols, including the interplay between quantum and classical information which is often their most subtle ingredient, to be captured and reasoned about rigorously in a single formal framework.

\subsection{Remarks}
We close this Section with some remarks. We have seen that the categorical and diagrammatic setting for Quantum Mechanics developed by Abramsky and Coecke is strikingly close to that in which the Temperley-Lieb category lives. The main \emph{difference} is the free recourse to symmetry allowed in the Abramsky-Coecke setting (and in the main intended models for that setting, namely finite-dimensional Hilbert spaces with linear or completely positive maps). However, it is interesting to note that in the various protocols and constructions in Quantum Information and Computation which have been modelled in that setting to date \cite{AbrCoe2}, the symmetry has not played an essential role. The example of teleportation given above serves as an example.

This raises some natural questions:
\begin{center}
\textbf{How much of QM/QIC lives in the plane?}
\end{center}

More precisely:
\begin{itemize}
\item Which protocols make essential use of symmetry?
\item How much computational or information-processing power does the non-symmetric calculus have?
\item Does \emph{braiding} have some computational significance? (Remembering that between \emph{pivotal} and \emph{symmetric} we have \emph{braided} strongly compact closed categories) \cite{FY}.
\end{itemize}

\section{Planar Geometry of Interaction and the Temperley-Lieb Algebra}

We now address the issue of giving what, so far as I know, is the first \emph{direct}---or ``fully abstract''---description of the Temperley-Lieb category. Since the category $\TL$ is directly and simply described as the free $R$-linear category generated by $\DD$, we focus on the direct description of $\DD$.

\vsa
\textbf{Previous descriptions:}
\begin{itemize}
\item Algebraic, by generators and relations - whether ``locally'', of the Temperley-Lieb algebras $\TLA{\tau}{n}$, as in Jones' presentation, or ``globally'', by a description of $\DD$ as the free pivotal category, as in Proposition \ref{freepivot}.
\item Kauffman's topological description: diagrams ``up to planar isotopy''.
\end{itemize}

In fact, it is well known (see e.g. \cite{KauDiag}) that the diagrams are completely characterized by how the dots are joined up --- \ie by discrete relations on finite sets. This leaves us with the problem of how to capture
\begin{enumerate}
\item Planarity
\item The multiplication of diagrams --- \ie composition in $\DD$
\end{enumerate}
purely in terms of the data given by these relations.

The answers to these questions exhibit the connections that exist between the Temperley-Lieb category and what is commonly known as the \emph{``Geometry of Interaction''}. This is a dynamical/geometrical interpretation of proofs and Cut Elimination initiated by Girard \cite{GoI} as an off-shoot of Linear Logic \cite{LL}. The general setting for these notions is now known to be that of \emph{traced monoidal} and \emph{compact closed} categories --- in particular, in the free construction of compact closed categories over traced monoidal categories \cite{Retrac,AHS}. In fact, this general construction was first clearly described in \cite{JSV}, where one of the leading motivations was the knot-theoretic context.

Our results in this Section establish a two-way connection. In one direction, we shall use ideas from Geometry of Interaction to answer Question~2 above: that is, to define path composition (including the formation of loops) purely in terms of the discrete relations tabulating how the dots are joined up.
In the other direction, our answer to Question~1 will allow us to consider a natural \emph{planar variant} of the Geometry of Interaction.

\subsection{Some preliminary notions}

\subsubsection{Partial Orders}
We use the notation $P = (|P|, \leq_{P})$ for partial orders. Thus $|P|$ is the underlying set, and $\leq_{P}$ is the order relation (reflexive, transitive and antisymmetric) on this set. An order relation is \emph{linear} if for all $x, y \in |P|$, $x \leq_{P} y$ or $y \leq_{P} x$.

Given a natural number $n$, we define
$[n] := \{ 1 < \cdots < n \}$,
the linear order of length $n$. We define several constructions on partial orders. Given partial orders $P$, $Q$, we define:
\begin{itemize}
\item  The disjoint sum $P \oplus Q$, where $|P \oplus Q| = |P| + |Q|$,  the disjoint union of $|P|$ and $|Q|$, and 
\[ x \leq_{P\oplus Q} y \;\; \Longleftrightarrow \;\; (x \leq_P y) \; \vee \; (x \leq_Q y) . \]
\item The concatenation $P \lhd Q$,  where $|P \lhd Q| = |P| + |Q|$, with the following order:
\[ x \leq_{P\lhd Q} y \;\; \Longleftrightarrow \;\; (x \leq_P y) \; \vee \; (x \leq_Q y) \; \vee \; (x \in P \; \wedge \; y \in Q). \]
\item $P^{\opp} = (|P|, \geq_{P})$.
\end{itemize}
Given elements $x$, $y$ of a partial order $P$, we define:
\[ \begin{array}{ccc}
x \consis y & \;\; \Leftrightarrow \;\; &  (x \leq_P y) \; \vee \; (y \leq_P x) \\
x \incomp y & \;\; \Leftrightarrow \;\; & \neg (x \consis y) .
\end{array}
\]

\subsubsection{Relations}
A relation on a set $X$ is a subset of the cartesian product: $R \subseteq X \times X$. Since relations are sets, they are closed under unions and intersections. We shall also use the following operations of relation algebra:
\[ \begin{array}{lccl}
\mbox{\textbf{Identity relation:}} & \Idrel{X} & := & \{ (x,x) \mid x \in X \} \\
\mbox{\textbf{Relation composition:}} & R;S & := & \{ (x, z) \mid \exists y. \, (x, y) \in R \; \wedge \; (y, z) \in S \} \\
\mbox{\textbf{Relational converse:}} & R^c & := & \{ (y,x) \mid (x, y) \in R \} \\
\mbox{\textbf{Transitive closure:}} & R^+ & :=  & \bigcup_{k \geq 1} R^k \\
\mbox{\textbf{Reflexive transitive closure:}} & R^{\ast} & :=  & \bigcup_{k \geq 0} R^k
\end{array}
\]
Here $R^k$ is defined inductively: $R^0 := \Idrel{X}$, $R^1 := R$, $R^{k+1} := R;R^{k}$.
A relation $R$ is \emph{single-valued} or a \emph{partial function} if $R^c ; R \subseteq \Idrel{X}$. It is \emph{total} if $R;R^c \supseteq \Idrel{X}$. A \emph{function} $f : X \rightarrow X$ is a single-valued, total relation.

These notions extend naturally to relations $R \subseteq X \times Y$.

\subsubsection{Involutions}
A \emph{fixed-point free involution} on a set $X$ is a function $f : X \rightarrow X$ such that
\[ f^2 = 1_X , \qquad \qquad f \; \cap \; 1_X = \varnothing . \]
Thus for such a function $f(x) = y \; \Leftrightarrow \; x = f(y)$ and $f(x) \neq x$. We write $\Inv{X}$ for the set of fixed-point free involutions on a set $X$. Note that $\Inv{X}$ is \emph{not} closed under function composition; nor does it contain the identity function. We must look elsewhere for suitable notions of composition and identity.

An involution is equivalently described as a parition of $X$ into 2-element subsets:
\beq
X = \bigcup E , \qquad \qquad \mbox{where} \; E = \{ \{ x, y \} \mid f(x) = y \} . 
\eeq
This defines an undirected graph $G_f = (X, E)$. Clearly $G_f$ is 1-regular \cite{Die}: each vertex has exactly one incident edge. Conversely, every  graph $G = (X, E)$ with this property determines a unique $f \in \Inv{X}$ with $G_f = G$.
Note that a finite set can only carry such a structure of its cardinality is even.

\subsection{Formalizing diagrams}
From our previous discussion, it is fairly clear how we will proceed to formalize morphisms $\nn \rarr \mm$ in $\DD$.
Given $n, m \in \Nat$, we define $\Node{n}{m} = [n] \oplus [m]$.
We visualize this partial order as

\vsa
\begin{center}
\psset{unit=1in,cornersize=absolute}%
\begin{pspicture}(0,-0.962721)(1.101717,0.037279)
\newgray{fillval}{0.3}
\pscircle[fillstyle=solid,fillcolor=fillval](0.022034,0){0.027569}
\pscircle[fillstyle=solid,fillcolor=fillval](0.286446,0){0.027569}
\rput(0.550859,0){$\cdots$}
\pscircle[fillstyle=solid,fillcolor=fillval](0.815271,0){0.027569}
\pscircle[fillstyle=solid,fillcolor=fillval](0.022034,-0.925442){0.027569}
\pscircle[fillstyle=solid,fillcolor=fillval](0.286446,-0.925442){0.027569}
\rput(0.550859,-0.925442){$\cdots$}
\rput(0.815271,-0.925442){$\cdots$}
\pscircle[fillstyle=solid,fillcolor=fillval](1.079683,-0.925442){0.027569}
\uput{0.5ex}[u](0.022034,0.022034){$1$}
\uput{0.5ex}[u](0.286446,0.022034){$2$}
\uput{0.5ex}[u](0.815271,0.022034){$n$}
\uput{0.5ex}[d](0.022034,-0.947477){$1'$}
\uput{0.5ex}[d](0.286446,-0.947477){$2'$}
\uput{0.5ex}[d](1.079683,-0.947477){$m'$}
\end{pspicture}%
\end{center}

\vsa
We use the notation $i'$ to distinguish the elements of $[m]$ in this disjoint union from those of $[n]$, which are unprimed.
Note that the order on $\Node{n}{m}$ has an immediate spatial interpretation in the diagrammatic representation: $i < j$ just in case $i$ lies to the left of $j$ on either the top or bottom line of dots, corresponding to $[n]$ and $[m]$ respectively.

A diagram connecting up dots pairwise will be formalized as a map $f \in \Inv{|\Node{n}{m}|}$.
Such a map can be visualized by drawing \emph{undirected arcs} between the pairs of nodes $i$, $j$ such that $f(i) = j$.
\subsubsection{Example}
The map $f \in \Inv{|\Node{4}{2}|}$ such that
\[ f : 1 \leftrightarrow 2', \qquad 2 \leftrightarrow 4, \qquad 3 \leftrightarrow 1' \]
is depicted thus:
\begin{center}
\psset{unit=1in,cornersize=absolute}%
\begin{pspicture}(0,-0.962721)(0.837305,0.037279)
\newgray{fillval}{0.3}
\pscircle[fillstyle=solid,fillcolor=fillval](0.022034,0){0.027569}
\pscircle[fillstyle=solid,fillcolor=fillval](0.286446,0){0.027569}
\pscircle[fillstyle=solid,fillcolor=fillval](0.550859,0){0.027569}
\pscircle[fillstyle=solid,fillcolor=fillval](0.815271,0){0.027569}
\pscircle[fillstyle=solid,fillcolor=fillval](0.022034,-0.925442){0.027569}
\pscircle[fillstyle=solid,fillcolor=fillval](0.286446,-0.925442){0.027569}
\uput{0.5ex}[u](0.022034,0.022034){$1$}
\uput{0.5ex}[u](0.286446,0.022034){$2$}
\uput{0.5ex}[u](0.550859,0.022034){$3$}
\uput{0.5ex}[u](0.815271,0.022034){$4$}
\uput{0.5ex}[d](0.022034,-0.947477){$1'$}
\uput{0.5ex}[d](0.286446,-0.947477){$2'$}
\psline(0.022034,-0.022034)(0.286446,-0.903408)
\psline(0.550859,-0.022034)(0.022034,-0.903408)
\psarc(0.550859,-0.022034){0.264412}{180}{360}
\end{pspicture}%
\end{center}

Our task is now is \emph{characterize those involutions which are planar}. The key idea is that this can be done using just the order relations we have introduced.

\subsection{Characterizing Planarity}
A map $f \in \Inv{|\Node{n}{m}|}$ will be called \emph{planar} if it satisfies the following two conditions, for all $i, j \in \Node{n}{m}$:
\[
\begin{array}{lccc}
(\mbox{PL1}) & i < j < f(i) & \; \Longrightarrow \; & i < f(j) < f(i) \\
(\mbox{PL2}) & f(i) \incomp i < j \incomp f(j) & \; \Longrightarrow \; & f(i) < f(j) .
\end{array}
\]
It is instructive to see which possibilities are \emph{excluded} by these conditions.

\subsubsection{First condition}
\[
\begin{array}{lccc}
(\mbox{PL1}) & i < j < f(i) & \; \Longrightarrow \; & i < f(j) < f(i) 
\end{array}
\]
(PL1) rules out
\begin{center}
\psset{unit=1in,cornersize=absolute,dimen=middle}%
\begin{pspicture}(0,-0.962721)(0.837305,0.037279)%
\rput(0.022034,0){$\cdots$}
\newgray{fillval}{0.3}
\pscircle[fillstyle=solid,fillcolor=fillval](0.286446,0){0.022034}
\pscircle[fillstyle=solid,fillcolor=fillval](0.550859,0){0.022034}
\pscircle[fillstyle=solid,fillcolor=fillval](0.815271,0){0.022034}
\pscircle[fillstyle=solid,fillcolor=fillval](0.022034,-0.925442){0.022034}
\rput(0.286446,-0.925442){$\cdots$}
\uput{0.5ex}[u](0.286446,0.022034){$i$}
\uput{0.5ex}[u](0.550859,0.022034){$j$}
\uput{0.5ex}[u](0.815271,0.022034){$f(i)$}
\uput{0.5ex}[d](0.022034,-0.947477){$f(j)$}
\psline(0.550859,-0.022034)(0.022034,-0.903408)
\psarc(0.550859,-0.022034){0.264412}{180}{360}
\end{pspicture}%
\end{center}

\vsa
\noindent where $f(j) \incomp f(i)$, and also

\vsa
\begin{center}
\psset{unit=1in,cornersize=absolute}%
\begin{pspicture}(0,-0.309696)(0.905264,0.040304)
\newgray{fillval}{0.3}
\pscircle[fillstyle=solid,fillcolor=fillval](0.023823,0){0.029358}
\pscircle[fillstyle=solid,fillcolor=fillval](0.309696,0){0.029358}
\pscircle[fillstyle=solid,fillcolor=fillval](0.595568,0){0.029358}
\pscircle[fillstyle=solid,fillcolor=fillval](0.881441,0){0.029358}
\uput{0.5ex}[u](0.023823,0.023823){$i$}
\uput{0.5ex}[u](0.309696,0.023823){$j$}
\uput{0.5ex}[u](0.595568,0.023823){$f(i)$}
\uput{0.5ex}[u](0.881441,0.023823){$f(j)$}
\psarc(0.595568,-0.023823){0.285873}{180}{360}
\psarc(0.309696,-0.023823){0.285873}{180}{360}
\end{pspicture}%
\end{center}
where $f(i) < f(j)$.

\subsubsection{Second condition}

\vsn
\[
\begin{array}{lccc}
(\mbox{PL2}) & f(i) \incomp i < j \incomp f(j) & \; \Longrightarrow \; & f(i) < f(j) .
\end{array}
\]
Similarly, (PL2) rules out
\begin{center}
\psset{unit=1in,cornersize=absolute}%
\begin{pspicture}(0,-0.962721)(0.837305,0.037279)
\rput(0.022034,0){$\cdots$}
\newgray{fillval}{0.3}
\pscircle[fillstyle=solid,fillcolor=fillval](0.286446,0){0.027569}
\rput(0.550859,0){$\cdots$}
\pscircle[fillstyle=solid,fillcolor=fillval](0.815271,0){0.027569}
\pscircle[fillstyle=solid,fillcolor=fillval](0.022034,-0.925442){0.027569}
\rput(0.286446,-0.925442){$\cdots$}
\pscircle[fillstyle=solid,fillcolor=fillval](0.550859,-0.925442){0.027569}
\uput{0.5ex}[u](0.286446,0.022034){$i$}
\uput{0.5ex}[u](0.815271,0.022034){$j$}
\uput{0.5ex}[d](0.022034,-0.947477){$f(j)$}
\uput{0.5ex}[d](0.550859,-0.947477){$f(i)$}
\psline(0.286446,-0.022034)(0.550859,-0.903408)
\psline(0.815271,-0.022034)(0.022034,-0.903408)
\end{pspicture}%
\end{center}

\vsb
\noindent We write $\PP (n, m)$ for the set of planar maps in $\Inv{|\Node{n}{m}|}$.

\begin{proposition}
\label{planarprop}
\begin{enumerate}
\item  Every planar diagram satisfies the two conditions.
\item Every involution satisfying the two conditions can be drawn as a planar diagram.
\end{enumerate}

\end{proposition}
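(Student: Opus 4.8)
The plan is to pin down a precise correspondence between planar diagrams (smooth arcs in the framing rectangle, up to isotopy) and fixed-point-free involutions on $|\Node{n}{m}|$, and then show that under this correspondence the topological condition of ``non-crossing'' is equivalent to the conjunction of (PL1) and (PL2). Throughout I will use the spatial reading of the order on $\Node{n}{m}$ already noted in the text: $i < j$ iff $i$ lies to the left of $j$ along the top row (for unprimed elements) or along the bottom row (for primed elements), with every top-row node below every\ldots\ actually, the concatenation is $\oplus$, so top and bottom rows are simply incomparable to each other. The two comparability classes are exactly ``both on the top row'' and ``both on the bottom row'', and incomparability $i \incomp j$ means ``one is on top, one is on the bottom''. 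This is the dictionary that translates the geometry into the order-theoretic statements.

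\textbf{Part (1): planar diagrams satisfy (PL1) and (PL2).} Here I would argue by contraposition, showing that each forbidden order-configuration forces a crossing. For (PL1): the hypothesis $i < j < f(i)$ with all three on the same row (say the top) means the arc $\{i, f(i)\}$ is a cup whose endpoints straddle $j$. A standard Jordan-curve argument shows that the arc out of $j$ must have its other endpoint $f(j)$ inside the region cut off by the arc $\{i,f(i)\}$ together with the top edge of the rectangle — hence either $i < f(j) < f(i)$ (the permitted case) or a crossing occurs; the two excluded pictures in the text (where $f(j)$ is on the bottom row so $f(j) \incomp f(i)$, and where $f(j)$ is on the top row to the right of $f(i)$) are exactly the crossing cases. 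For (PL2): the hypothesis $f(i) \incomp i < j \incomp f(j)$ says $i, j$ are on the top row with $i$ left of $j$, and $f(i), f(j)$ are on the bottom row; these are two through-lines, and planarity (order-preservation of through-lines, which one proves by the same Jordan-curve reasoning) forces $f(i) < f(j)$. The one routine-but-real point to get right is the Jordan-curve/isotopy argument that an arc separates the rectangle and that the separation class of an endpoint is an isotopy invariant; I would state this as a lemma and cite \cite{Kau, KauDiag} rather than reprove planar isotopy theory from scratch.

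\textbf{Part (2): involutions satisfying (PL1), (PL2) are drawable planarly.} Here I would argue by induction on $n+m$, or equivalently on the number of arcs, peeling off a ``visible'' arc — an arc $\{i, f(i)\}$ that is \emph{innermost} in a suitable sense. Concretely: among all cups $\{i, f(i)\}$ on the top row choose one minimizing $f(i) - i$; condition (PL1) then forces $f(i) = i+1$ (adjacent), since any $j$ strictly between would, by (PL1), have its partner also strictly between, contradicting minimality — so this is a ``small cup'' connecting neighbours, which can be drawn as a tiny planar arc disjoint from everything else, and removing its two endpoints (renumbering) yields a smaller involution still satisfying (PL1)–(PL2). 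If there are no cups, do the symmetric thing with caps on the bottom row. If there are neither cups nor caps, every arc is a through-line, and (PL2) says $f$ is an order-isomorphism $[n] \to [m]$ (so $n = m$ and $f(i) = i'$), which is drawn as $n$ parallel vertical lines. The induction then assembles a planar diagram. The part I expect to fight with is the bookkeeping when cups, caps \emph{and} through-lines coexist: after deleting an innermost cup one must check the residual involution is still a well-defined fixed-point-free involution on the renumbered node set and still satisfies both conditions — (PL2) in particular needs a short check that deleting top-row nodes cannot create a violating pair, which follows because the conditions are ``downward stable'' under deleting a consistent adjacent pair, but this monotonicity is the crux and deserves an explicit sublemma.

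\textbf{Main obstacle.} The genuine difficulty is not either direction's combinatorics in isolation but the interface: making the phrase ``can be drawn as a planar diagram'' rigorous enough that Part (2)'s inductive construction actually produces an isotopy class, and that Part (1)'s contrapositive genuinely rules out \emph{all} crossings rather than the two depicted ones. I would handle this by fixing, once and for all, a normal form for diagrams (arcs drawn in the upper/lower half-strips as in Kauffman's presentation) so that ``planar'' becomes a checkable combinatorial property of the arc system, reducing the topology to the lemma cited above; then both halves become finite case analyses over the order relations on $\Node{n}{m}$.
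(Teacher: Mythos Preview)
Your proposal is correct, but the paper takes a genuinely different route. Rather than arguing directly on the two-row configuration, the paper first reduces to the special case of arrows $I \rarr \nn$ (caps-only diagrams, i.e.\ \emph{names} $\uu f\uuu$). For such diagrams the order is linear, (PL2) is vacuous, and (PL1) is visibly equivalent to the balanced-parentheses condition, so the proposition is immediate (with the same implicit appeal to Jordan curves you make). The substance of the paper's argument is then an order-theoretic lemma: $f$ satisfies (PL1)$+$(PL2) with respect to $[n]\oplus[m]$ iff $f$ satisfies (PL1) alone with respect to the linearization $[n]^{\opp}\lhd[m]$ obtained by rotating the top row down beside the bottom row. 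This lemma is proved by a short case analysis and transports the planarity characterization from the caps-only case back to the general case.

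What each approach buys: your induction-by-peeling-innermost-arcs is self-contained, gives an explicit drawing algorithm, and avoids any categorical machinery; its cost is the bookkeeping sublemma you correctly flag about (PL1)--(PL2) being stable under deleting an adjacent pair. The paper's reduction exploits the pivotal structure already in play (the name construction), explains \emph{why} there are two conditions rather than one---(PL2) is exactly the residue of (PL1) when the linear order is unfolded into two incomparable rows---and replaces your inductive construction by the classical fact that non-crossing pair-partitions of a line are the balanced bracketings. Both are sound; the paper's is shorter and more structural, yours is more constructive.
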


Rather than proving this directly, it is simpler, and also instructive, to reduce it to a special case.
We consider arrows in $\DD$ of the special form $I \rarr \nn$. Such arrows consist only of caps. They correspond to \emph{points}, or \emph{states} in the terminology of Section~5. 

Since the top row of dots is empty, in this case we have a linear order, and the premise of condition (PL2) can never arise. Hence planarity for such arrows is just the simple condition (PL1) --- which can be seen to be equivalent to saying that, if we write a left parenthesis for each left end of a cap, and a right parenthesis for each right end, we get a well-formed string of parentheses. Thus
\begin{center}
\psset{unit=1in,cornersize=absolute,dimen=middle}%
\begin{pspicture}(0,-0.0125)(0.775,0.2375)%
\psset{linewidth=1pt}%
\newgray{fillval}{0.3}
\pscircle[fillstyle=solid,fillcolor=fillval](0.0125,0){0.0125}
\pscircle[fillstyle=solid,fillcolor=fillval](0.1625,0){0.0125}
\pscircle[fillstyle=solid,fillcolor=fillval](0.3125,0){0.0125}
\pscircle[fillstyle=solid,fillcolor=fillval](0.4625,0){0.0125}
\pscircle[fillstyle=solid,fillcolor=fillval](0.6125,0){0.0125}
\pscircle[fillstyle=solid,fillcolor=fillval](0.7625,0){0.0125}
\psarcn(0.5375,0.0125){0.225}{-180}{-360}
\psarcn(0.0875,0.0125){0.075}{179.999999}{0.000001}
\psarcn(0.5375,0.0125){0.075}{179.999999}{0.000001}
\end{pspicture}%
\end{center}
corresponds to
\[ () (()) . \]
(Of course, exactly similar comments apply to arrows of the form $\nn \rarr I$, \ie \emph{costates}.)
It is also clear\footnote{With an implicit appeal to the Jordan Curve Theorem!} that Proposition~\ref{planarprop} holds for such arrows.

Now we recall that quite generally, in any pivotal category we have the Hom-Tensor adjunction
\[
A\otimes B^*\!\rTo^{\coname{f}} I
\ \ \ \stackrel{\simeq}{\longleftrightarrow}\ \ \
A\rTo^{f}B
\ \ \ \stackrel{\simeq}{\longleftrightarrow}\ \ \
I \rTo^{\name{f}} A^*\!\otimes B
\]
\[
\uu f\uuu= (1_{A^*}\otimes f)\circ\eta_A: I \to A^*\otimes B
\quad \dd f\ddd=\epsilon_{B}\circ(f\otimes 1_{B^*}):A\otimes B^*\to I .
\]
We call $\uu f\uuu$ the \emph{name} of $f$, and $ \dd f\ddd$ the \emph{coname}.
The inverse to the map $f \mapsto \uu f \uuu$ is defined by
\[ g : I \rarr A^{*} \otimes B \;\; \mapsto \;\; (\epsilon_{A} \otimes 1_{B}) \circ (1_{A} \otimes g) : A \rarr B . \]
For example, we compute the name of the left wave:
\begin{center}
\psset{unit=1in,cornersize=absolute,dimen=middle}%
\begin{pspicture}(-0,-0.526961)(2.5,0.379902)%
\psset{linewidth=1pt}%
\newgray{fillval}{0.3}
\pscircle[fillstyle=solid,fillcolor=fillval](0.012255,0){0.012255}
\pscircle[fillstyle=solid,fillcolor=fillval](0.159314,0){0.012255}
\pscircle[fillstyle=solid,fillcolor=fillval](0.306373,0){0.012255}
\pscircle[fillstyle=solid,fillcolor=fillval](0.453431,0){0.012255}
\pscircle[fillstyle=solid,fillcolor=fillval](0.60049,0){0.012255}
\pscircle[fillstyle=solid,fillcolor=fillval](0.747549,0){0.012255}
\pscircle[fillstyle=solid,fillcolor=fillval](0.012255,-0.514706){0.012255}
\pscircle[fillstyle=solid,fillcolor=fillval](0.159314,-0.514706){0.012255}
\pscircle[fillstyle=solid,fillcolor=fillval](0.306373,-0.514706){0.012255}
\pscircle[fillstyle=solid,fillcolor=fillval](0.453431,-0.514706){0.012255}
\pscircle[fillstyle=solid,fillcolor=fillval](0.60049,-0.514706){0.012255}
\pscircle[fillstyle=solid,fillcolor=fillval](0.747549,-0.514706){0.012255}
\psline(0.012255,-0.012255)(0.012255,-0.502451)
\psline(0.159314,-0.012255)(0.159314,-0.502451)
\psline(0.306373,-0.012255)(0.306373,-0.502451)
\psarcn(0.379902,0.012255){0.073529}{-180}{-360}
\psarcn(0.379902,0.012255){0.220588}{-180}{-360}
\psarcn(0.379902,0.012255){0.367647}{-180}{-360}
\psarc(0.67402,-0.012255){0.073529}{-179.999998}{-0.000002}
\psarcn(0.526961,-0.502451){0.073529}{179.999999}{0.000001}
\psbezier(0.453431,-0.012255)(0.453431,-0.03268)(0.453431,-0.053105)(0.453431,-0.073529)
(0.453431,-0.114379)(0.502451,-0.175654)(0.60049,-0.257353)
(0.698529,-0.339052)(0.747549,-0.400327)(0.747549,-0.441176)
(0.747549,-0.461601)(0.747549,-0.482026)(0.747549,-0.502451)
\rput(1.25,-0.269608){$=$}
\pscircle[fillstyle=solid,fillcolor=fillval](1.752451,-0.514706){0.012255}
\pscircle[fillstyle=solid,fillcolor=fillval](1.89951,-0.514706){0.012255}
\pscircle[fillstyle=solid,fillcolor=fillval](2.046569,-0.514706){0.012255}
\pscircle[fillstyle=solid,fillcolor=fillval](2.193627,-0.514706){0.012255}
\pscircle[fillstyle=solid,fillcolor=fillval](2.340686,-0.514706){0.012255}
\pscircle[fillstyle=solid,fillcolor=fillval](2.487745,-0.514706){0.012255}
\psarcn(2.267157,-0.502451){0.220588}{180}{0}
\psarcn(1.82598,-0.502451){0.073529}{179.999998}{0.000002}
\psarcn(2.267157,-0.502451){0.073529}{180}{0}
\end{pspicture}%
\end{center}
Applying the inverse transformation:
\begin{center}
\psset{unit=1in,cornersize=absolute,dimen=middle}%
\begin{pspicture}(0,-0.64951)(2.5,0.012255)%
\psset{linewidth=1pt}%
\newgray{fillval}{0.3}
\pscircle[fillstyle=solid,fillcolor=fillval](0.012255,0){0.012255}
\pscircle[fillstyle=solid,fillcolor=fillval](0.159314,0){0.012255}
\pscircle[fillstyle=solid,fillcolor=fillval](0.306373,0){0.012255}
\pscircle[fillstyle=solid,fillcolor=fillval](2.120098,0){0.012255}
\pscircle[fillstyle=solid,fillcolor=fillval](2.267157,0){0.012255}
\pscircle[fillstyle=solid,fillcolor=fillval](2.414216,0){0.012255}
\pscircle[fillstyle=solid,fillcolor=fillval](0.012255,-0.269608){0.012255}
\pscircle[fillstyle=solid,fillcolor=fillval](0.159314,-0.269608){0.012255}
\pscircle[fillstyle=solid,fillcolor=fillval](0.306373,-0.269608){0.012255}
\pscircle[fillstyle=solid,fillcolor=fillval](0.453431,-0.269608){0.012255}
\pscircle[fillstyle=solid,fillcolor=fillval](0.60049,-0.269608){0.012255}
\pscircle[fillstyle=solid,fillcolor=fillval](0.747549,-0.269608){0.012255}
\pscircle[fillstyle=solid,fillcolor=fillval](0.894608,-0.269608){0.012255}
\pscircle[fillstyle=solid,fillcolor=fillval](1.041667,-0.269608){0.012255}
\pscircle[fillstyle=solid,fillcolor=fillval](1.188725,-0.269608){0.012255}
\pscircle[fillstyle=solid,fillcolor=fillval](0.894608,-0.539216){0.012255}
\pscircle[fillstyle=solid,fillcolor=fillval](1.041667,-0.539216){0.012255}
\pscircle[fillstyle=solid,fillcolor=fillval](1.188725,-0.539216){0.012255}
\pscircle[fillstyle=solid,fillcolor=fillval](2.193627,-0.539216){0.012255}
\pscircle[fillstyle=solid,fillcolor=fillval](2.340686,-0.539216){0.012255}
\pscircle[fillstyle=solid,fillcolor=fillval](2.487745,-0.539216){0.012255}
\psline(0.012255,-0.012255)(0.012255,-0.257353)
\psline(0.159314,-0.012255)(0.159314,-0.257353)
\psline(0.306373,-0.012255)(0.306373,-0.257353)
\psline(0.894608,-0.281863)(0.894608,-0.526961)
\psline(1.041667,-0.281863)(1.041667,-0.526961)
\psline(1.188725,-0.281863)(1.188725,-0.526961)
\psarcn(0.526961,-0.257353){0.073529}{180}{0}
\psarcn(0.968137,-0.257353){0.220588}{180}{0}
\psarcn(0.968137,-0.257353){0.073529}{179.999998}{0.000002}
\psarc(0.379902,-0.281863){0.367647}{180}{360}
\psarc(0.379902,-0.281863){0.220588}{180}{360}
\psarc(0.379902,-0.281863){0.073529}{180}{360}
\psarc(2.340686,-0.012255){0.073529}{180}{360}
\psarcn(2.267157,-0.526961){0.073529}{180}{0}
\psbezier(2.120098,-0.012255)(2.120098,-0.03268)(2.120098,-0.053105)(2.120098,-0.073529)
(2.120098,-0.114379)(2.181373,-0.179739)(2.303922,-0.269608)
(2.426471,-0.359477)(2.487745,-0.424837)(2.487745,-0.465686)
(2.487745,-0.486111)(2.487745,-0.506536)(2.487745,-0.526961)
\rput(1.691176,-0.269608){$=$}
\end{pspicture}%
\end{center}
Note also that \emph{the unit is the name of the identity}: $\eta_{\nn} = \uu 1_{\nn} \uuu$, and similarly $\epsilon_{\nn} = \dd 1_{\nn} \ddd$.

Thus we see that diagrammatically, the process of forming the name of an arrow involves reversing the left-right order of the top row of dots by rotating them concentrically, and sliding them down to lie parallel with, and to the left of, the bottom row. In this process, cups are turned into caps, while through lines are stretched out and turned to also form caps.

This transposition of the top row of dots can be described order-theoretically, as replacing the partial order $[n] \oplus [m]$ by the linear order $[n]^{\opp} \lhd [m]$. Note that the underlying sets of these two partial orders are the same: $|[n] \oplus [m]| = |[n]^{\opp} \lhd [m]|$. Thus $\uu f \uuu$ is essentially \emph{the same function} as $f$.

\begin{proposition}
For $f \in \Inv{|\Node{n}{m}|}$, the following are equivalent:
\begin{enumerate}
\item $f$ satisfies (PL1) and (PL2) with respect to $[n] \oplus [m]$.
\item $f$ satisfies (PL1) with respect to $[n]^{\opp} \lhd [m]$. 
\end{enumerate}
\end{proposition}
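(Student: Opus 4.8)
The plan is to exploit the observation made just before the statement: the partial orders $[n]\oplus[m]$ and $[n]^{\opp}\lhd[m]$ have the \emph{same} underlying set $|\Node{n}{m}|$, so $f$ is literally the same involution in both settings; only the ordering --- and hence the content of the premises and conclusions of (PL1) and (PL2) --- changes. I would first isolate an elementary lemma about linear orders: for a \emph{linear} order $L$, condition (PL1) for $f\in\Inv{|L|}$ is equivalent to the assertion that the arcs $\{x,f(x)\}$, viewed as $L$-intervals, are pairwise \emph{non-crossing} (any two are either nested or disjoint, never interleaved as $a<_L b<_L f(a)<_L f(b)$). This is a short unpacking using only that $f$ is a fixed-point-free involution: a crossing immediately contradicts (PL1) applied with $i=a$, $j=b$; conversely, if $i<_L j<_L f(i)$ but $f(j)\notin (i,f(i))$, then either $f(j)<_L i$ or $f(j)>_L f(i)$, and in either case one reads off a crossing of $\{i,f(i)\}$ with $\{j,f(j)\}$. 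Applied to $L=[n]^{\opp}\lhd[m]$, this reduces condition (2) to the statement that the arcs of $f$ are non-crossing in $[n]^{\opp}\lhd[m]$.

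The heart of the matter is then a finite case analysis comparing ``$f$ is non-crossing in $[n]^{\opp}\lhd[m]$'' with ``$f$ satisfies (PL1) and (PL2) in $[n]\oplus[m]$'', by classifying a pair of arcs $\{i,f(i)\}$, $\{j,f(j)\}$ according to how their endpoints split between the top copy $[n]$ and the bottom copy $[m]$: a cup (both endpoints unprimed), a cap (both primed), or a through line (one of each). The one thing to keep under control is that forming $[n]^{\opp}\lhd[m]$ reverses the order on the unprimed block and then places all of it below the (unchanged) primed block. Order reversal preserves the nesting/crossing relation, so cup/cup and cap/cap pairs cross in $[n]^{\opp}\lhd[m]$ exactly when they cross within their common row of $[n]\oplus[m]$, which is precisely what the corresponding instances of (PL1) in $[n]\oplus[m]$ forbid; and a cup together with a cap occupies disjoint $L$-intervals, so never crosses, matching the fact that no instance of (PL1) or (PL2) in $[n]\oplus[m]$ relates a cup arc to a cap arc. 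The substantive cases are the mixed ones. A cup (resp.\ cap) and a through line cross in $[n]^{\opp}\lhd[m]$ exactly when the two endpoints of the cup (cap) straddle the through line's endpoint in the original row --- and this is exactly the configuration excluded by (PL1) in $[n]\oplus[m]$, once one notes that there the conclusion $i<f(j)<f(i)$ fails because $f(j)$ lies in the opposite row, making the conclusion false. Two through lines cross in $[n]^{\opp}\lhd[m]$ exactly when their top endpoints and their bottom endpoints are oppositely ordered, which is precisely a failure of (PL2) in $[n]\oplus[m]$.

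Collecting these biconditionals over all pairs of arcs yields ``$f$ non-crossing in $[n]^{\opp}\lhd[m]$'' $\iff$ ``(PL1) and (PL2) hold in $[n]\oplus[m]$'', and together with the linear-order lemma this is the claimed equivalence; the argument is entirely order-theoretic, and in particular does not presuppose the planarity of diagrams --- which is the point, since Proposition~\ref{planarprop} is to be deduced from this statement. I expect the only real difficulty to be bookkeeping: keeping ``crossing in $[n]^{\opp}\lhd[m]$'' from being conflated with ``crossing within a single row of $[n]\oplus[m]$'', and handling the reversal of the unprimed block cleanly. A convenient device is to fix explicit coordinates for $[n]^{\opp}\lhd[m]$ --- the unprimed $k$ at position $n+1-k$ and the primed $k'$ at position $n+k$ --- so that every case reduces to a one-line numerical inequality check; the degenerate situations (e.g.\ states $I\rarr\nn$, where there are no unprimed elements and (PL2) is vacuous) then fall out automatically.
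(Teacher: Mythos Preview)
Your proposal is correct and complete; the main substantive cases (cup/through, cap/through, through/through) are exactly the ones that carry content, and your computations for each match what is needed.

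The paper's own proof takes the same underlying route---a finite case analysis on which row each relevant point lies in---but organizes it differently: it argues each implication separately, assuming (2) and deriving (PL2) by translating the hypothesis into the linear order and applying (PL1) there, then assuming (1) and deriving (PL1) in $[n]^{\opp}\lhd[m]$ by cases on the rows of $i$, $j$, $f(i)$, $f(j)$. Several ``easy'' cases (e.g.\ that (PL1) within a single row transfers under order reversal, or the cases where $i$ and $f(i)$ lie in the same row) are left implicit. Your device of first recasting (PL1) on a linear order as the symmetric \emph{non-crossing} condition, and then checking a biconditional for each pair of arc types, buys you both directions at once and makes the ``easy'' cases (cup/cup, cap/cap, cup/cap) visibly trivial rather than silently omitted. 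The explicit coordinates $k\mapsto n{+}1{-}k$, $k'\mapsto n{+}k$ are a sensible way to keep the reversal bookkeeping honest; the paper does without them, at the cost of slightly denser prose.
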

\begin{proof}
Firstly, assume (2), and suppose $f(i) \incomp i < j \incomp f(j)$. If $i < j$ in the bottom  row, then $f(j) < i < j$ in $[n]^{\opp} \lhd [m]$, so by (PL1), $f(j) < f(i) < j$, \ie $f(i) < f(j)$ in $[n] \oplus [m]$, as required.
Now suppose $i < j$ in the top row. Then $j < i < f(j)$ in $[n]^{\opp} \lhd [m]$, so by (PL1), $j < f(i) < f(j)$, and in particular $f(i) < f(j)$.

Now assume (1), and suppose that $i < j < f(i)$ in $[n]^{\opp} \lhd [m]$.
The interesting case is where $i$ is in the top row and $f(i)$ in the bottom row.
We need to do some case analysis. Suppose firstly that $j$ is in the top row.
If $f(j)$ is in the bottom row, then $f(j) \incomp j < i \incomp f(i)$ in $[n] \oplus [m]$, and we can apply (PL2) to conclude that $f(j) < f(i)$, and hence $i < f(j) < f(i)$ in $[n]^{\opp} \lhd [m]$.
If $f(j)$ is in the top row, we must have $f(j) < i$ by (PL1) for  $[n] \oplus [m]$, and hence $i < f(j) < f(i)$ in $[n]^{\opp} \lhd [m]$.

Now suppose that $j$ is in the bottom row. If $f(j)$ is in the bottom row, we must have $f(j) < f(i)$ by (PL1). If $f(j)$ is in the top row, then we have $f(j)  \incomp j < f(i) \incomp i$
in $[n] \oplus [m]$, so by (PL2) we have $f(j) < i$, and hence
$i < f(j) < f(i)$  in $[n]^{\opp} \lhd [m]$.
\end{proof}
Since (PL1) characterizes planarity for $\uu f \uuu$, it  follows that (PL1) and (PL2) characterize planarity for $f$.

\subsection{The Temperley-Lieb Category}

Our aim is now to define a category $\TLC$, which will yield the desired description of the diagrammatic category $\DD$.
The \emph{objects} of $\TLC$ are the natural numbers.
The homset $\TLC (\nn,\mm)$ is defined to be the cartesian product $\Nat \times \PP (n, m)$. Thus a morphism $\nn \rightarrow \mm$ in $\TLC$ consists of a pair $(k, f)$, where $k$ is a natural number, and $f \in \PP (n, m)$ is a planar map in $\Inv{|\Node{n}{m}|}$. The idea is that $k$ is a counter for the number of loops, so such an arrow can be written $\delta^{k}\cdot f$ in the notation used previously.

It remains to define the composition and identities in this category. Clearly (even leaving aside the natural number components of morphisms)  composition cannot be defined as ordinary function composition.  This does not even make sense --- the codomain of an involution $f \in \PP(n, m)$ does not match the domain of an involution $g \in \PP(m, p)$ --- let alone yield a function with the  necessary properties to be a morphism in the category.

\subsubsection{Composition: The ``Execution Formula''}
Consider a map $f : [n] + [m] \longrightarrow [n] + [m]$. Each input lies in \emph{either} $[n]$ \emph{or} $[m]$ (exclusive or), and similarly for the corresponding output. This leads to a decomposition of $f$ into four \emph{disjoint partial maps}:
\[ \begin{array}{ll}
f_{n,n} : [n] \longrightarrow [n]  & \qquad f_{n,m} : [n] \longrightarrow [m] \\
f_{m,n} : [m] \longrightarrow [n]  & \qquad f_{m,m} : [m] \longrightarrow [m] 
\end{array} \]
so that $f$ can be recovered as the disjoint union of these four maps. 
If $f$ is an involution, then these maps will be partial involutions.

Note that these components have a natural diagrammatic reading: $f_{n,n}$ describes the \emph{cups} of $f$, $f_{m,m}$ the \emph{caps}, and $f_{n,m} = f_{m,n}^{c}$ the \emph{through lines}.

Now suppose we have maps $f : [n] + [m] \rightarrow [n] + [m]$ and $g : [m] + [p] \rightarrow [m]+[p]$.
We write the decompositions of $f$ and $g$ as above in matrix form:
\[ f = \left( \begin{array}{cc}
 f_{n,n} & f_{n, m} \\
f_{m, n} & f_{m, m}
\end{array} \right)
\qquad \qquad
g =   \left( \begin{array}{cc}
g_{m,m} & g_{m, p} \\
g_{p, m} & g_{p, p}
\end{array} \right)
\]

\noindent We can view these maps as \emph{binary relations} on $[n] + [m]$ and $[m] + [p]$ respectively, and use relational algebra (union $R \cup S$, relational composition $R ; S$ and reflexive transitive closure $R^{\ast}$) to define a \emph{new relation} $\theta$ on $[n] + [p]$. If we write
\[ \theta =     \left( \begin{array}{cc} 
\theta_{n,n} &  \theta_{n, p} \\
\theta_{p, n} & \theta_{p, p}
\end{array} \right)
\]
so that $\theta$ is the disjoint union of these four components, then we can define it component-wise as follows:
\[ \begin{array}{lcl}
\theta_{n,n} & \;\; = \;\; & 
f_{n,n} \; \cup \;\; f_{n,m} ; g_{m,m} ; (f_{m,m} ; g_{m,m})^{\ast} ; f_{m,n} \\
\theta_{n,p} & \;\; = \;\; & f_{n,m}; (g_{m,m} ; f_{m,m})^{\ast} ; g_{m,p} \\
\theta_{p,n} & \;\; = \;\; & g_{p,m}; (f_{m,m} ; g_{m,m})^{\ast} ; f_{m,n} \\
\theta_{p,p} & \;\; = \;\; & g_{p,p} \; \cup \;\; g_{p,m} ; f_{m,m} ; (g_{m,m} ; f_{m,m})^{\ast} ; g_{m,p} .
\end{array}
\]
We can give clear intuitive readings for how these formulas express composition of  paths in diagrams in terms of relational algebra:
\begin{itemize}
\item The component $\theta_{n,n}$ describes the \emph{cups} of the diagram resulting from the composition. These are the union of the cups of $f$ ($f_{n,n}$), together with paths that start from the top row with a through line of $f$, given by $f_{n,m}$, then go through an alternating odd-length sequence of cups of $g$ ($g_{m,m}$) and caps of $f$ ($f_{m,m}$), and finally return to the top row by a through line of $f$ ($f_{m,n}$).
\begin{center}
\psset{unit=1in,cornersize=absolute,dimen=middle}%
\begin{pspicture}(0,-0.674419)(2,0.023256)%
\psset{linewidth=1pt}%
\newgray{fillval}{0.3}
\pscircle[fillstyle=solid,fillcolor=fillval](0.023256,0){0.023256}
\pscircle[fillstyle=solid,fillcolor=fillval](0.023256,-0.511628){0.023256}
\pscircle[fillstyle=solid,fillcolor=fillval](0.302326,-0.511628){0.023256}
\pscircle[fillstyle=solid,fillcolor=fillval](0.581395,-0.511628){0.023256}
\pscircle[fillstyle=solid,fillcolor=fillval](0.860465,-0.511628){0.023256}
\pscircle[fillstyle=solid,fillcolor=fillval](1.418605,-0.511628){0.023256}
\pscircle[fillstyle=solid,fillcolor=fillval](1.697674,-0.511628){0.023256}
\pscircle[fillstyle=solid,fillcolor=fillval](1.976744,-0.511628){0.023256}
\pscircle[fillstyle=solid,fillcolor=fillval](1.976744,0){0.023256}
\psarcn(0.44186,-0.488372){0.139535}{180}{0}
\psarcn(1.55814,-0.488372){0.139535}{179.999998}{0.000002}
\psarc(0.162791,-0.534884){0.139535}{-179.999999}{-0.000001}
\psarc(0.72093,-0.534884){0.139535}{180}{360}
\psarc(1.837209,-0.534884){0.139535}{-179.999998}{-0.000002}
\psline(0.023256,-0.023256)(0.023256,-0.488372)
\psline(1.976744,-0.023256)(1.976744,-0.488372)
\rput(1.139535,-0.511628){$\cdots$}
\end{pspicture}%
\end{center}
\item Similarly, $\theta_{p,p}$ describes the caps of the composition.
\item $\theta_{n,p} = \theta_{p,n}^{c}$ describe the through lines. Thus $\theta_{n,p}$ describes paths which start with a through line of $f$ from $n$ to $m$, continue with an alternating even-length (and possibly empty) sequence of cups of $g$ and caps of $f$, and finish with a through line of $g$ from $m$ to $p$. 
\begin{center}
\psset{unit=1in,cornersize=absolute,dimen=middle}%
\begin{pspicture}(0,-1.046512)(2,0.023256)%
\psset{linewidth=1pt}%
\newgray{fillval}{0.3}
\pscircle[fillstyle=solid,fillcolor=fillval](0.023256,0){0.023256}
\pscircle[fillstyle=solid,fillcolor=fillval](0.023256,-0.511628){0.023256}
\pscircle[fillstyle=solid,fillcolor=fillval](0.302326,-0.511628){0.023256}
\pscircle[fillstyle=solid,fillcolor=fillval](0.581395,-0.511628){0.023256}
\pscircle[fillstyle=solid,fillcolor=fillval](0.860465,-0.511628){0.023256}
\pscircle[fillstyle=solid,fillcolor=fillval](1.418605,-0.511628){0.023256}
\pscircle[fillstyle=solid,fillcolor=fillval](1.697674,-0.511628){0.023256}
\pscircle[fillstyle=solid,fillcolor=fillval](1.976744,-0.511628){0.023256}
\pscircle[fillstyle=solid,fillcolor=fillval](1.976744,-1.023256){0.023256}
\psarcn(0.44186,-0.488372){0.139535}{180}{0}
\psarcn(1.837209,-0.488372){0.139535}{179.999998}{0.000002}
\psarc(0.162791,-0.534884){0.139535}{-179.999999}{-0.000001}
\psarc(0.72093,-0.534884){0.139535}{180}{360}
\psarc(1.55814,-0.534884){0.139535}{-179.999998}{-0.000002}
\psline(0.023256,-0.023256)(0.023256,-0.488372)
\psline(1.976744,-1)(1.976744,-0.534884)
\rput(1.139535,-0.511628){$\cdots$}
\end{pspicture}%
\end{center}
All through lines from $n$ to $p$ must have this form.
\end{itemize}
This formula corresponds to the interpretation of Cut-Elimination  in the Geometry of Interaction interpretation of proofs in Linear Logic (and by extension in related logics and type theories) \cite{GoI}. A more abstract and general perspective on how this construction arises can be given in the setting of traced monoidal categories \cite{JSV,Retrac}.

\begin{proposition}
If $f$ and $g$ are planar, so is $\theta$.
\end{proposition}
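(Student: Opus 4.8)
My strategy is to bounce off the geometric characterization already established. By Proposition~\ref{planarprop}(2), a planar $f \in \PP(n,m)$ is exactly (the involution underlying) some genuine planar diagram $D_f : \nn \rarr \mm$, drawn in a rectangle with the $n$ top dots on the upper edge in increasing order and the $m$ bottom dots on the lower edge in increasing order; likewise the planar $g$ gives a planar diagram $D_g : \mm \rarr \pp$. Gluing the lower edge of $D_f$'s rectangle to the upper edge of $D_g$'s rectangle, identifying the two copies of the $m$ middle dots in order, produces a diagram $D : \nn \rarr \pp$. The claim then splits into: (i) $D$ is again a planar diagram, and (ii) its underlying boundary pairing (and its loop count) is precisely $\theta$ (and $k$) as computed by the Execution Formula. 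Granting both, Proposition~\ref{planarprop}(1) applied to $D$ yields (PL1) and (PL2) for $\theta$, i.e.\ $\theta \in \PP(n,p)$.

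\textbf{Steps.} First, fix $D_f$ and $D_g$ as above via Proposition~\ref{planarprop}(2). Second, form the glued picture $D$: arcs of $D_f$ and $D_g$ meeting at a common middle dot join into longer arcs; appealing to the Jordan Curve Theorem, as elsewhere in this section, the resulting curves are still simple and pairwise disjoint, some closed (these are the loops --- those of $D_f$, those of $D_g$, and those newly formed in the glued region), the rest arcs joining the $n+p$ boundary dots pairwise; hence $D$ is a bona fide planar diagram $\nn \rarr \pp$. Third --- the substantive bookkeeping step --- verify that the pairing read off from $D$ is $\theta$: a boundary arc of $D$ alternately traverses arcs of $D_f$ and $D_g$ through middle dots, the through-lines and cups of $D_f$ supplying the segments $f_{n,m}, f_{m,n}, f_{m,m}$ and those of $D_g$ the segments $g_{m,p}, g_{p,m}, g_{m,m}$; the relational composition $;$ concatenates consecutive segments across a middle dot, and the reflexive--transitive closure $(\,\cdot\,)^{\ast}$ accounts for an arbitrary number of bounces between the caps of $D_f$ and the cups of $D_g$ --- exactly what the four displayed formulas for $\theta_{n,n}, \theta_{n,p}, \theta_{p,n}, \theta_{p,p}$ record, while the closed curves of $D$ are exactly the cycles of $f_{m,m}; g_{m,m}$, i.e.\ the loops counted by $k$. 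Fourth, invoke Proposition~\ref{planarprop}(1) for $D$.

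\textbf{Main obstacle.} The only real content is the third step --- identifying the Execution Formula with path-tracing in $D$ --- which must be done with care about the \emph{multiset} of loops and about arcs that begin and end on the same side of the rectangle (these lie entirely in one of the two boxes and are recorded by the ``$f_{n,n}$'' and ``$g_{p,p}$'' summands). That bookkeeping, together with the implicit Jordan-Curve-Theorem appeal in the gluing step, is where all the work sits; everything else is immediate from Proposition~\ref{planarprop}.

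\textbf{Alternative route.} If one prefers to stay entirely within the discrete setting and avoid drawing $D$, one verifies (PL1) and (PL2) for $\theta$ directly. Two distinct $\theta$-pairs $\{a,\theta(a)\}$ and $\{b,\theta(b)\}$ are realized by two vertex-disjoint alternating $f/g$-paths $P, Q$ on $[n]+[m]+[p]$, and one shows by induction on the total length of $P$ and $Q$ that, because $f$ satisfies (PL1)+(PL2) on $[n]\oplus[m]$ and $g$ satisfies them on $[m]\oplus[p]$ --- both using the \emph{same} linear order on the shared $[m]$ --- the paths $P$ and $Q$ cannot interleave across the middle row; the required nesting of $\{a,\theta(a)\}$ and $\{b,\theta(b)\}$ in $[n]\oplus[p]$ then follows by case analysis on whether each of the four endpoints lies in $[n]$ or in $[p]$. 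This is self-contained but noticeably more laborious, and its crux --- the non-interleaving lemma --- is really ``$D$ is planar'' in combinatorial dress.
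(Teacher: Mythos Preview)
The paper does not actually give a proof of this proposition: it is stated and immediately followed by the definition of $\comp{f}{g}$, with no \texttt{proof} environment. What the paper \emph{does} provide is the paragraph immediately preceding the proposition, giving the ``clear intuitive readings'' of $\theta_{n,n}$, $\theta_{n,p}$, $\theta_{p,n}$, $\theta_{p,p}$ as the cups, through-lines, and caps of the diagram obtained by stacking $D_f$ on $D_g$ --- which is exactly the content of your third step. So your main approach (go to diagrams via Proposition~\ref{planarprop}(2), compose geometrically, identify the result with $\theta$ via the Execution-Formula reading, then come back via Proposition~\ref{planarprop}(1)) is the argument the author is visibly gesturing at, and your write-up makes explicit what the paper leaves implicit.

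Your alternative direct combinatorial route is genuinely different from anything in the paper and would make the result self-contained without the appeal to geometric diagrams and the Jordan Curve Theorem; as you note, it is more laborious, and the paper shows no sign of pursuing it.
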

We write $\theta = \comp{f}{g} \in \PP (n, p)$

\subsubsection{Cycles}
Given $f \in \PP (n, m)$,  $g \in \PP (m, p)$, we define $\Cyc{f}{g} := f_{m,m} ; g_{m,m}$. Note that $\Cyc{f}{g}^c = (g_{m,m} ; f_{m,m})$, and 
\[ \Cyc{f}{g} ; \Cyc{f}{g}^c \subseteq \Idrel{[m]} , \qquad\qquad
\Cyc{f}{g}^c ; \Cyc{f}{g} \subseteq \Idrel{[m]} . \] 
Thus $\Cyc{f}{g}$ is a \emph{partial bijection}. However, in general it is neither an involution, nor  fixpoint-free.
The \emph{cyclic elements} of $\Cyc{f}{g}$ are those elements of $[m]$ which lie in the intersection 
\[ \Cyc{f}{g}^{+} \; \cap \; \Idrel{[m]}.
\]
\begin{center}
\psset{unit=1in,cornersize=absolute,dimen=middle}%
\begin{pspicture}(0,-1)(2,0.142857)%
\psset{linewidth=1pt}%
\newgray{fillval}{0.3}
\pscircle[fillstyle=solid,fillcolor=fillval](0.020408,0){0.020408}
\pscircle[fillstyle=solid,fillcolor=fillval](0.265306,0){0.020408}
\pscircle[fillstyle=solid,fillcolor=fillval](0.510204,0){0.020408}
\pscircle[fillstyle=solid,fillcolor=fillval](0.755102,0){0.020408}
\pscircle[fillstyle=solid,fillcolor=fillval](1.244898,0){0.020408}
\pscircle[fillstyle=solid,fillcolor=fillval](1.489796,0){0.020408}
\pscircle[fillstyle=solid,fillcolor=fillval](1.734694,0){0.020408}
\pscircle[fillstyle=solid,fillcolor=fillval](1.979592,0){0.020408}
\psarcn(0.142857,0.020408){0.122449}{179.999999}{0.000001}
\psarcn(1.367347,0.020408){0.122449}{179.999998}{0.000002}
\psarcn(1.857143,0.020408){0.122449}{179.999998}{0.000002}
\psarc(0.387755,-0.020408){0.122449}{180}{360}
\psarcn(0.632653,0.020408){0.122449}{-180}{-360}
\psarc(1.612245,-0.020408){0.122449}{-179.999998}{-0.000002}
\psarc(1,-0.020408){0.979592}{180}{360}
\rput(1,0){$\cdots$}
\end{pspicture}%
\end{center}
Thus if $i$ is a cyclic element, there is a least $k > 0$ such that
$\Cyc{f}{g}^{k}(i) = i$. The corresponding \emph{cycle} is
\[ \{ i, \, \Cyc{f}{g}(i), \, \ldots , \, \Cyc{f}{g}^{k-1}(i) \} . \]
Distinct cycles are disjoint.
We write $Z(f, g)$ for the number of distinct cycles of $\Cyc{f}{g}$.

\subsubsection{Composition and Identities}
Finally, we define the composition of morphisms in $\TLC$. Given $(s, f) : \nn \rightarrow \mm$ and $(t, g) : \mm \rightarrow \pp$:
\[ (t, g) \circ (s, f) = (s+t+Z(f,g), \comp{f}{g}). \]

The identity morphism $\id_{\nn} : \nn \rightarrow \nn$ is defined to be the pair $(0, \tau_{n,n})$, where $\tau_{n,n}$ is the \emph{twist map} on $[n] + [n]$; i.e. the involution
$i \leftrightarrow i'$. Diagrammatically, this is just
\begin{center}
\psset{unit=1in,cornersize=absolute,dimen=middle}%
\begin{pspicture}(0,-0.722041)(0.627979,0.027959)%
\newgray{fillval}{0.3}
\pscircle[fillstyle=solid,fillcolor=fillval](0.016526,0){0.016526}
\pscircle[fillstyle=solid,fillcolor=fillval](0.214835,0){0.016526}
\rput(0.413144,0){$\cdots$}
\pscircle[fillstyle=solid,fillcolor=fillval](0.611453,0){0.016526}
\pscircle[fillstyle=solid,fillcolor=fillval](0.016526,-0.694082){0.016526}
\pscircle[fillstyle=solid,fillcolor=fillval](0.214835,-0.694082){0.016526}
\rput(0.413144,-0.694082){$\cdots$}
\pscircle[fillstyle=solid,fillcolor=fillval](0.611453,-0.694082){0.016526}
\uput{0.5ex}[u](0.016526,0.016526){$1$}
\uput{0.5ex}[u](0.214835,0.016526){$2$}
\uput{0.5ex}[u](0.611453,0.016526){$n$}
\uput{0.5ex}[d](0.016526,-0.710608){$1'$}
\uput{0.5ex}[d](0.214835,-0.710608){$2'$}
\uput{0.5ex}[d](0.611453,-0.710608){$n'$}
\psline(0.016526,-0.016526)(0.016526,-0.677556)
\psline(0.214835,-0.016526)(0.214835,-0.677556)
\psline(0.611453,-0.016526)(0.611453,-0.677556)
\end{pspicture}%
\end{center}

\vsa
\noindent Note that this is \emph{not} the identity map on $[n] + [n]$ --- indeed it is (necessarily) fixpoint free!

\begin{proposition}
$\TLC$ with composition and identities as defined above is a category.
\end{proposition}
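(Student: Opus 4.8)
The plan is to check the two category axioms --- the unit laws and associativity --- and in each equation to verify separately the planar-involution component and the natural-number (loop counter) component. The unit laws are a short direct calculation. The twist map $\tau_{m,m}$ on $[m]+[m]$ has empty cup- and cap-components and identity through-line components, so substituting $g=\tau_{m,m}$ into the Execution Formula and using $\varnothing^{\ast}=\Idrel{[m]}$ collapses every starred subexpression to an identity and one reads off $\comp{f}{\tau_{m,m}}=f$; moreover $\Cyc{f}{\tau_{m,m}}=f_{m,m};\varnothing=\varnothing$ has no cyclic elements, so $Z(f,\tau_{m,m})=0$ and hence $\id_{\mm}\circ(s,f)=(s+0+0,\,f)=(s,f)$. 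The law $(s,f)\circ\id_{\nn}=(s,f)$ is symmetric, now with $\tau_{n,n}$ in the first slot, using $\varnothing^{\ast}=\Idrel{[n]}$ and $\Cyc{\tau_{n,n}}{f}=\varnothing;f_{n,n}=\varnothing$.

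For associativity, fix $f\in\PP(n,m)$, $g\in\PP(m,p)$, $h\in\PP(p,q)$. Unwinding the definition of $\circ$ for the two bracketings, what has to be shown is (i) $\comp{(\comp{f}{g})}{h}=\comp{f}{(\comp{g}{h})}$ in $\PP(n,q)$, and (ii) $Z(f,g)+Z(\comp{f}{g},\,h)=Z(g,h)+Z(f,\,\comp{g}{h})$. For (i) I would invoke the identification --- as anticipated in the text --- of the Execution Formula with composition in the Geometry of Interaction construction, \ie the free construction of a compact closed category over a traced monoidal category \cite{JSV,Retrac,AHS}, applied here to $\mathbf{Rel}$ with disjoint union as tensor, whose trace on a relation $R:X\sqcup U\rarr Y\sqcup U$ is $R_{X,Y}\cup R_{X,U};R_{U,U}^{\ast};R_{U,Y}$. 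Associativity of $\comp{-}{-}$ is then the instance of the general fact that this construction yields a category, restricted along the inclusion of the planar involutions, which is closed under $\comp{-}{-}$ by the preceding Proposition (that $\comp{f}{g}$ is planar whenever $f$ and $g$ are), applied twice. A self-contained alternative is to expand both triple composites via the four component formulas and show that each equals a single Execution Formula carried out over the combined internal carrier $[m]\sqcup[p]$ at once; the re-bracketing of the iterated feedback is then pure relation algebra, driven by the Kleene-star identities $(R\cup S)^{\ast}=R^{\ast}(SR^{\ast})^{\ast}=(R^{\ast}S)^{\ast}R^{\ast}$ and $(R;S)^{\ast};R=R;(S;R)^{\ast}$.

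For (ii), the idea is that both sides count --- each closed path with the same multiplicity, fixed by the convention for grouping cycles --- the loops of the figure obtained by stacking all three diagrams $f,g,h$ and restricting to the internal rows $[m]\sqcup[p]$, a quantity that visibly does not depend on the bracketing. Concretely, the component $\theta_{p,p}=g_{p,p}\cup g_{p,m};f_{m,m};(g_{m,m};f_{m,m})^{\ast};g_{m,p}$ of $\comp{f}{g}$ is exactly the relation of alternating $[p]$-to-$[p]$ paths that run through the caps of $f$ and the cups of $g$, so a cycle of $\Cyc{\comp{f}{g}}{h}=\theta_{p,p};h_{p,p}$ unfolds uniquely to a closed path of the stacked triple meeting $[p]$, whereas the cycles of $\Cyc{f}{g}$ are exactly the closed paths contained in $[m]$; thus $Z(f,g)+Z(\comp{f}{g},h)$ enumerates all the loops of the triple. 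Contracting the $[p]$-row first instead gives the mirror-image analysis, exhibiting $Z(g,h)+Z(f,\comp{g}{h})$ as the same enumeration, which is (ii).

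The main obstacle is the bookkeeping in (ii): making the ``unfolds uniquely'' step precise --- that cycles of $\Cyc{\comp{f}{g}}{h}$ stand in multiplicity-preserving bijection with the $[p]$-meeting loops of the stacked triple --- and checking that the constant of proportionality coincides on the two sides; together with getting the Kleene-star re-bracketing of (i) exactly right, these are the only points that are not mere substitution into the formulas. One also uses, as recorded in the preceding discussion, that $\comp{f}{g}$ is a genuine element of $\PP(n,p)$ --- in particular a fixed-point-free involution --- so that $\circ$ is well-defined; this rests on the fact that the Execution Formula carries the $1$-regular incidence structure of $f$ and $g$ to a $1$-regular incidence structure on $[n]\sqcup[p]$.
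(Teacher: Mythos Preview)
The paper states this proposition without proof, so there is no argument to compare yours against directly. Your plan is sound and is in fact the argument the paper is implicitly gesturing at: the references it gives to \cite{JSV,Retrac,AHS} for the Execution Formula and the free compact-closed-over-traced construction are precisely the machinery you invoke for associativity of the $\comp{-}{-}$ component, and the paper's explicit observation that $\Cyc{f}{g}$ is a partial bijection on $[m]$ is what underwrites your cycle-counting argument for the loop component.

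Your treatment of the unit laws is correct and complete as stated. For associativity, both routes you sketch for (i) are viable; the appeal to the $\mathsf{Int}/\mathcal{G}$ construction on $(\mathbf{Rel},\sqcup)$ is the cleaner one and matches the paper's own framing. For (ii), your informal argument is the right picture, and you correctly flag the one genuine subtlety: that the correspondence between cycles of $\Cyc{\comp{f}{g}}{h}$ and closed orbits of the stacked triple that meet $[p]$ is a bijection respecting the counting convention (each orbit contributes exactly one cycle, not one per element). This follows because the relations involved are partial bijections --- as the paper notes --- so orbits are honest cycles with no branching, and the unfolding is determined. With that in hand, the two bracketings partition the same set of closed orbits into ``contained in $[m]$'' versus ``meets $[p]$'' on one side and ``contained in $[p]$'' versus ``meets $[m]$'' on the other, and both partitions count the same total.
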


\subsubsection{$\TLC$ as a pivotal category}
The monoidal structure of $\TLC$ is straightforward. If $(k, f) : \nn \rarr \mm$ and $(l, g) : \pp \rarr \qq$, then $(k+l, f+g) : \nn+\pp \rarr \mm+\qq$, where $f+g \in \PP(n+p, m+q)$ is the evident disjoint union of the involutions $f$ and $g$.

The unit $\eta_{\nn} : I \rarr \nn+\nn$ is given by
\[ i \leftrightarrow i' \quad (1 \leq i \leq n) , \]
and similarly for the counit. Note that identities, units and counits are all essentially the same maps, but with distinct \emph{types}, which partition their arguments between inputs and outputs differently.

We describe the dual, adjoint and conjugate of an arrow $(k, f) : \nn \rarr \mm$.
Let $\tau_{n,m} : [n]+[m] \isoarrow [m] + [n]$ be the symmetry isomorphism of the disjoint union, and 
\[ \rho_{n} :  [n] \isoarrow [n] :: i \mapsto n-i+1 \]
be the order-reversal isomorphism. Note that
\[ \tau_{n,m}^{-1} = \tau_{m, n}, \quad \rho_{n}^{-1} = \rho_{n}, \quad \tau_{n,m} \circ (\rho_{n}+\rho_{m}) = (\rho_{m}+\rho_{n}) \circ \tau_{n, m} . \]
Then we have $(k, f)^{\bullet} = (k, f^{\bullet})$, where:
\[ \begin{array}{rcl}
f^{\dagger} & = & \tau_{n.m} \circ f \circ \tau_{n,m}^{-1} \\
f_{*} & = & (\rho_{n}+\rho_{m}) \circ f \circ (\rho_{n}+\rho_{m})^{-1} \\
f^{*} & = & (f^{\dagger})_{*} .
\end{array}
\]

\subsubsection{The Main Result}
\begin{theorem}
$\TLC$ is isomorphic as a strict, pivotal dagger category to $\DD$.
\end{theorem}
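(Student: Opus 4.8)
The plan is to build an explicit functor $F : \DD \rarr \TLC$ which is the identity on objects (both categories have object set $\Nat$) and which sends a diagram $d : \nn \rarr \mm$ carrying $k$ loops to the pair $(k, f_d) \in \Nat \times \PP(n,m)$, where $f_d \in \Inv{|\Node{n}{m}|}$ is the fixed-point-free involution recording which pairs of dots $d$ joins. First I would check that $F$ is well-defined and bijective on each homset. Well-definedness uses the standard fact (see \cite{KauDiag}) that the planar-isotopy class of such a diagram is completely determined by its dot-pairing together with its loop count; that the dot-pairing of an actual planar diagram satisfies (PL1) and (PL2), hence lies in $\PP(n,m)$, is Proposition~\ref{planarprop}(1). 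Injectivity is the same classification fact, and surjectivity is Proposition~\ref{planarprop}(2): every planar involution is realized by some diagram. Thus $F$ is a bijection $\DD(\nn,\mm) \isoarrow \TLC(\nn,\mm)$ for all $\nn,\mm$, and it remains to see it is a structure-preserving functor.

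The heart of the argument is functoriality, i.e.\ that $F$ carries diagrammatic composition to Execution-Formula composition in $\TLC$. Given $d : \nn \rarr \mm$ and $e : \mm \rarr \pp$ with dot-pairings $f,g$ and loop counts $s,t$, form the stacked diagram. This is naturally a graph on $[n] \sqcup [m] \sqcup [p]$ in which every node of $[n]$ and of $[p]$ is incident to exactly one arc, and every node of the identified middle row $[m]$ is incident to exactly two --- one germ coming from $d$ (a cap of $d$, or a through line of $d$) and one from $e$ (a cup of $e$, or a through line of $e$). Hence every connected component is either a simple path joining two boundary nodes, or a cycle lying entirely inside $[m]$. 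Following the intuitive readings given after the Execution Formula, I would argue that a path joins $i$ to $j$ precisely when $(i,j)$ lies in the corresponding component of $\theta$: after a possible initial through line, such a path bounces inside the middle row alternately along cups of $e$ (the relation $g_{m,m}$) and caps of $d$ (the relation $f_{m,m}$), which is exactly the content of the relational expressions $f_{n,m};(g_{m,m};f_{m,m})^{\ast};g_{m,p}$ and their siblings --- this is the point where the Geometry of Interaction ``composition by feedback'' enters. The cycle components are exactly the orbits of the partial bijection $\Cyc{f}{g} = f_{m,m};g_{m,m}$, so there are $Z(f,g)$ of them, and the loops of the composite diagram are those of $d$, those of $e$, and these. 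This gives loop count $s+t+Z(f,g)$ and dot-pairing $\comp{f}{g}$, i.e.\ exactly $(t,g)\circ(s,f)$ in $\TLC$. That $F(\id_\nn) = (0,\tau_{n,n}) = \id_\nn$ is immediate, since the identity diagram is $n$ through lines with no loops, and strict monoidality is routine: juxtaposing diagrams side by side is disjoint union of involutions and addition of loop counts.

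Finally I would verify that $F$ preserves the pivotal dagger structure. The object duality is trivial on both sides ($A = A^{*}$). Units and counits match by inspection: $\eta_\nn$ in $\DD$ is the nested-caps diagram, whose dot-pairing is $i \leftrightarrow i'$ as an arrow $I \rarr \nn \otimes \nn$, and dually for $\epsilon_\nn$. For the three operations one compares the diagrammatic descriptions from Section~3 with the formulas in the subsection ``$\TLC$ as a pivotal category'': $180^{\circ}$ rotation of a diagram reverses the order within each row, which is conjugation by $\rho_n + \rho_m$, so $F(d_*) = F(d)_*$; reflection in the $x$-axis swaps the two rows, which is conjugation by the type-swap $\tau_{n,m}$, so $F(d^{\dagger}) = F(d)^{\dagger}$; and $(-)^{*}$ is the composite of the two on each side. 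Since $F$ is a bijection on objects, fully faithful, and preserves $\otimes$, $I$, $(-)^{*}$, $\eta$, $\epsilon$ and $(-)^{\dagger}$, it is an isomorphism of strict pivotal dagger categories.

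The main obstacle is the functoriality step, and specifically making rigorous the claim that the four relational-algebra components of $\theta$ together with the cycle count $Z(f,g)$ reproduce \emph{exactly} the path composition and loop formation of stacked diagrams --- that no path or loop is missed or double-counted, and that $\comp{f}{g}$ is again planar (for which one invokes the proposition asserting $\comp{f}{g}\in\PP(n,p)$, ultimately resting on the Jordan Curve Theorem). Everything else is bookkeeping.
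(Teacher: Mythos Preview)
The paper does not actually supply a proof of this theorem: it is stated as ``The Main Result'' at the end of Section~6, immediately followed by its corollary, with no proof environment in between. What the paper \emph{does} provide is exactly the machinery you invoke --- the combinatorial description of diagrams by planar involutions (Proposition~\ref{planarprop}), the Execution-Formula definition of composition together with the cycle count $Z(f,g)$, the proposition that $\comp{f}{g}$ is again planar, the verification that $\TLC$ is a category, and the explicit formulas for $\otimes$, $\eta$, $\epsilon$, $(-)^{\dagger}$, $(-)_{*}$, $(-)^{*}$ in $\TLC$. Your proposal is the natural way to assemble these pieces into a proof, and the argument you sketch for functoriality (analysing connected components of the stacked graph on $[n]\sqcup[m]\sqcup[p]$ as either boundary-to-boundary paths or internal cycles) is exactly the intended reading of the Execution Formula spelled out informally in the paper.

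One small slip: you write that ``$180^{\circ}$ rotation of a diagram reverses the order within each row, which is conjugation by $\rho_n + \rho_m$, so $F(d_*) = F(d)_*$''. In the paper's conventions, $180^{\circ}$ rotation is the \emph{dual} $(-)^{*}$, which swaps the two rows \emph{and} reverses order; conjugation by $\rho_n+\rho_m$ alone is reflection in the $y$-axis, i.e.\ the \emph{conjugate} $(-)_{*}$. Your formula and conclusion are right, only the geometric label is off. This does not affect the argument, since you correctly recover $(-)^{*}$ as the composite of $(-)^{\dagger}$ and $(-)_{*}$.
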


\noindent As an immediate Corollary of this result and Proposition~\ref{freepivotdagprop}, we have:
\begin{theorem}
$\TLC$ is the free strict, pivotal dagger category on one self-dual generator.
\end{theorem}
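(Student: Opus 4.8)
The final statement follows at once from the Main Result: an isomorphism of strict pivotal dagger categories transports the relevant universal property, so $\TLC$, being isomorphic to $\DD$, is the free strict pivotal dagger category on one self-dual generator exactly because $\DD$ is, by Proposition~\ref{freepivotdagprop}; the generating self-dual object of $\TLC$ is the object $\mathbf{1}$, the image of the generator of $\DD$. So the real work is the Main Result, and the plan is to exhibit an explicit isomorphism $F : \DD \rarr \TLC$. On objects $F$ is the identity. On an arrow $\nn \rarr \mm$ of $\DD$ --- a planar diagram up to isotopy together with a loop count $k$ --- let $F$ return $(k, f)$, where $f \in \Inv{|\Node{n}{m}|}$ records which pairs of the $n + m$ boundary dots the diagram joins. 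This is well defined: planar isotopy leaves the combinatorial joining data unchanged (the ``well-known'' classification of planar matchings), and $f$ is planar by Proposition~\ref{planarprop}(1), so $f \in \PP(n,m)$. Conversely, $G : \TLC \rarr \DD$ sends $(k, f)$ to the planar diagram realizing $f$, which exists by Proposition~\ref{planarprop}(2), decorated with $k$ loops; here well-definedness up to isotopy rests on the topological fact --- flagged via the Jordan Curve Theorem --- that any two planar drawings of the same matching are planar-isotopic. Then $G \circ F$ returns an isotopic diagram, hence is the identity on $\DD$, while $F \circ G$ is the identity on $\TLC$; so $F$ is bijective on homsets.

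Next I would verify that $F$ is a functor preserving all the structure. The identity diagram (all through lines) has no loops and induces the twist involution $i \leftrightarrow i'$, which is precisely $\id_{\nn} = (0, \tau_{n,n})$ in $\TLC$. Horizontal juxtaposition of diagrams corresponds to disjoint union of involutions with addition of loop counts, \ie the tensor of $\TLC$. The adjoint in $\DD$ is reflection in the horizontal axis, which swaps the two rows and hence conjugates the involution by $\tau_{n,m}$; the conjugate is reflection in the vertical axis, conjugating by $\rho_{n} + \rho_{m}$; the dual is the $180^{\circ}$ rotation, their composite --- matching the definitions of $f^{\dagger}$, $f_{*}$, $f^{*}$ in $\TLC$ verbatim. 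The unit and counit diagrams (nested caps and cups) induce the involutions $i \leftrightarrow i'$ of the stated types, the object-level duality is trivial on both sides ($\nn = \nn^{*}$), and the triangular and coherence equations then transport automatically since both categories are already known to be pivotal dagger categories.

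The one genuinely substantive step --- and the main obstacle --- is showing that $F$ preserves composition: that stacking two planar diagrams and reading off the resulting matching and loop count yields exactly $(s + t + Z(f,g),\ \comp{f}{g})$. This requires a careful combinatorial analysis of the paths formed when the bottom row of the first diagram is glued to the top row of the second. Each such path either stays within one diagram (contributing a cup of $f$ or a cap of $g$) or bounces between the caps of $f$ and the cups of $g$ through an alternating sequence of middle nodes, either reaching the outer boundary through a through line of $f$ and then of $g$, or closing into a loop. One shows --- by induction on path length, matching each alternating bounce to one application of $f_{m,m} ; g_{m,m}$ --- that the outer-boundary paths are enumerated precisely by the relational expressions $\theta_{n,n}, \theta_{n,p}, \theta_{p,n}, \theta_{p,p}$, and that the closed loops through the middle row biject with the cyclic elements of $\Cyc{f}{g} = f_{m,m} ; g_{m,m}$, hence number exactly $Z(f,g)$ (care with orientation is needed so that each geometric loop contributes one cycle, not two). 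This same argument simultaneously discharges the earlier unproved claims that $\theta = \comp{f}{g}$ is planar, that $\TLC$ is a category, and that the stated dual, adjoint and conjugate operations on $\TLC$ are well defined --- so it is really the heart of the section, and the point at which the ``planar Geometry of Interaction'' bookkeeping has to be carried out with genuine care.
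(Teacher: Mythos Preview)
Your proposal is correct and matches the paper's approach exactly: the paper states this theorem as an immediate corollary of the Main Result (that $\TLC \cong \DD$ as strict pivotal dagger categories) together with Proposition~\ref{freepivotdagprop}, which is precisely your first sentence. The remainder of your proposal --- the explicit isomorphism $F$, the verification that it preserves the pivotal dagger structure, and the path-composition analysis underlying $\comp{f}{g}$ and $Z(f,g)$ --- goes beyond what the paper supplies for this statement (the Main Result itself is asserted without proof in the text), but your sketch is sound and indeed fills in what the paper leaves implicit.
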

This is in the same spirit as the characterizations of free compact and dagger compact categories in \cite{KL,Abr05}.

These results can easily be extended to descriptions of the free pivotal dagger category over an arbitrary generating category, leading to \emph{oriented Temperley-Lieb algebras} with \emph{primitive (physical) operations}. We refer to \cite{Abr05} for a more detailed presentation (in the symmetric case).

\section{Planar $\lambda$-Calculus}

Our aim in this section is to show how a restricted form of $\lambda$-calculus can be interpreted in the Temperley-Lieb category, and how $\beta$-reduction of $\lambda$-terms, which is an important foundational paradigm for computation, is then reflected diagrammatically as geometric simplification, \ie ``yanking lines straight''. We can give only a brief indication of what is in fact a rich topic in its own right.
See \cite{AD04,AHS,KnoLog} for discussions of related matters.

\subsection{The $\lambda$-Calculus}
We begin with a (very) brief review of the $\lambda$-calculus \cite{Church,Bar84}, which is an important foundational paradigm in Logic and Computation, and in particular forms the basis for all modern functional programming languages.

The syntax of the $\lambda$-calculus is beguilingly simple. Given a set of variables $x$, $y$, $z$, \ldots we define the set of terms as follows:
\[ t \;\; ::= \;\; x \; \mid \; \underbrace{tu}_{\mbox{application}}
\; \mid \;
  \underbrace{\lambda x . \, t}_{\mbox{abstraction}} \]
 \noindent \textbf{Notational Convention:} We write
 \[ t_{1}t_{2}\cdots t_{k} \equiv (\cdots (t_{1}t_{2})\cdots )t_{k} . \]
Some examples of terms:
\begin{center}
\begin{tabular}{ll}
$\lambda x . \, x$ & identity function \\
$ \lambda f . \, \lambda x. \, fx$ & application \\
$\lambda f. \, \lambda x. \, f(fx)$ & double application \\
$\lambda f. \, \lambda g. \, \lambda x. \, g(f(x))$ & composition $g
\circ f$ \\
\end{tabular}
\end{center}
The basic equation governing this calculus is
\emph{$\beta$-conversion}:
\[ (\lambda x. \, t)u = t[u/x] \]
E.g. (assuming some arithmetic operations are given)
\[ (\lambda f. \, \lambda x. \, f(f x))(\lambda x. \, x+1) 0 \; = \; (0+1)+1 \; = \;
2. \]
By orienting this equation, we get a `dynamics' ---
\emph{$\beta$-reduction}
\[ (\lambda x. \, t)u \rightarrow t[u/x] \]
Despite its sparse syntax, $\lambda$-calculus is very expressive---it is in fact a universal model of computation,  equivalent to Turing machines.

\subsection{Types}
One important way of constraining the $\lambda$-calculus is to  introduce Types.
\begin{center}
\fbox{{\textbf{Types are there to stop you doing (bad) things}}}
\end{center}
Types are in fact one of the most fruitful \emph{positive} ideas in Computer Science!

We shall introduce a (highly restrictive) type system, such that the typable terms can be interpreted in the Temperley-Lieb category (in fact, in any pivotal category).

Firstly, assuming some set of \emph{basic types} $B$, we define a syntax of general types:
\[ T \;\; ::= \;\; B \; \mid \; T \rightarrow T . \]
Intuitively, $T \rarr U$ represents the type of functions which take inputs of type $T$ to outputs of type $U$.

\noindent \textbf{Notational Convention:} We write
\[ T_{1} \rarr T_{2} \rarr \cdots T_{k} \rarr T_{k+1} \quad \equiv \quad T_{1} \rarr (T_{2} \rarr  \cdots (T_{k} \rarr T_{k+1})\cdots ) . \]

\noindent \textbf{Examples:}
\[ A \rightarrow A \rightarrow A \qquad \mbox{first-order 
  function type} \]
\[ (A \rightarrow A ) \rightarrow A \qquad
\mbox{second-order function type} \]

We now introduce a formal system  for deriving \emph{typing judgements}, of the form:
\[ x_1 : T_1 , \ldots x_k : T_k \vdash t : T . \]
Such a judgement asserts that the term $t$ has type $T$ \emph{under the assumption} (or: \emph{in
  the context}) that the variable $x_1$ has type $T_1$, \ldots ,
$x_k$ has type $T_k$. All the variables $x_{i}$ appearing in the context must be distinct --- and in our setting, the order in which the variables appear in the list is significant.

\noindent There is one basic form of axiom, for typing variables:

\textbf{Variable}
\[ \infer{x:T \vdash  x:T}{} \]
and two inference rules, for typing abstractions and applications respectively:

\textbf{Function}
\[ \infer{\Gamma \vdash \lambda x. \, t: U \rightarrow T}{\Gamma , x:U \vdash t:T}
\qquad
\infer{\Gamma, \Delta \vdash tu : T}{\Gamma \vdash t : U \rightarrow T \qquad \Delta
  \vdash u:U}
\]
Note that $\Gamma, \Delta$ represents the concatenation of the lists $\Gamma$, $\Delta$. This implies that the variables appearing in $\Gamma$ and $\Delta$ are distinct---an important \emph{linearity constraint} in the sense of Linear Logic \cite{LL}.

\subsection{Interpretation in Pivotal Categories}
We now show how terms typable in our system can be interpreted in a pivotal category $\CC$. We assume firstly that the basic types $B$ have been interpreted as objects $\lsem B \rsem$ of $\CC$. We then extend this to general types by:
\[ \lsem T \rarr U \rsem = \lsem U \rsem \otimes \lsem T \rsem^{*} . \]

Now we show how, for each typing judgement $\Gamma \vdash t : T$, to assign an arrow
\[ \lsem \Gamma \rsem \lrarr \lsem T \rsem , \]
where if $\Gamma = x_1 : T_1 , \ldots x_k : T_k$, 
\[ \lsem \Gamma \rsem = \lsem T_{1} \rsem \otimes \cdots \otimes \lsem T_{k} \rsem . \]
This assignment is defined by induction on the derivation of the typing judgement in the formal system.

\noindent \textbf{Variable}
\[ \infer{x : T \vdash  x:T}{} 
\qquad \qquad \qquad
\infer{1_{\lsem T \rsem} : \lsem T \rsem \longrightarrow \lsem T \rsem}{}
\]

\noindent \textbf{Abstraction}

To interpret $\lambda$-abstraction, we use the adjunction
\[ \Curryr : \CC(A \otimes B, C) \simeq \CC(A, C \otimes B^{*}) \]
\[ \Curryr(f) = \begin{diagram}
A & \rTo^{1_{A} \otimes \eta_{B*}} & A \otimes B \otimes B^{*} & \rTo^{f \otimes 1_{B^{*}}} & C \otimes B^{*}
\end{diagram}
\]
We can then define:
\[ \infer{\Gamma \vdash \lambda x:U. \, t: U \rightarrow T}{\Gamma , x:U \vdash t:T}
\qquad \qquad
\infer{\Curryr(\lsem t \rsem) : \lsem \Gamma  \rsem \longrightarrow \lsem T \rsem \otimes \lsem U \rsem^{*}}{\lsem t \rsem :
 \lsem   \Gamma \rsem \otimes \lsem U \rsem \longrightarrow \lsem T \rsem}
\]

\noindent \textbf{Application}

We use the following operation of \emph{right application}:
\[ \Rapp : \CC(C, B \otimes A^{*}) \times \CC(D, A) \lrarr \CC(C \otimes D, B) \]
\[ \Rapp(f, g) = \begin{diagram}
C \otimes D & \rTo^{f \otimes g} &  B \otimes A^{*} \otimes A & \rTo^{1_{B} \otimes \epsilon_{B^{*}}} & B .
\end{diagram}
\]
We can then define:
\[ \infer{\Gamma, \Delta \vdash tu : T}{\Gamma \vdash t : U \rightarrow T \qquad \Delta
  \vdash u:U}
\qquad 
\infer{\Rapp( \lsem t \rsem, \lsem u \rsem ) : \lsem  \Gamma \rsem \otimes \lsem \Delta \rsem \longrightarrow 
  \lsem T \rsem}{\lsem t \rsem :  \lsem \Gamma \rsem \longrightarrow \lsem T \rsem \otimes \lsem U \rsem^{*} \qquad \lsem u \rsem :  \lsem \Delta \rsem 
  \longrightarrow \lsem U \rsem}
\]

\noindent It can be proved that this interpretation \emph{is sound for $\beta$-conversion}, \ie
\[ \lsem (\lambda x. \, t)u \rsem = \lsem t[u/x] \rsem \]
in any pivotal category.

\subsection{An Example}
We now discuss an example to show how all this works diagrammatically in $\TLC$.
We shall consider the \emph{bracketing combinator}
\[ \Bcomb \equiv \lambda x. \lambda y. \lambda z. \, x(yz) . \]
This is characterized by the equation
\[ \Bcomb a b c = a (bc) . \]
Firstly, we derive a typing judgement for this term:
\begin{prooftree}
\def\fCenter{\mbox{\ $\vdash$\ }}
\AxiomC{$x:B \rarr C \vdash x:B \rarr C$}
\AxiomC{$y:A \rarr B \vdash y:A \rarr B$}
\AxiomC{$z:A  \vdash z:A$}
\BinaryInfC{$y:A \rarr B, z:A \fCenter yz : B$}
\BinaryInfC{$x:B \rarr C, y:A \rarr B, z:A \fCenter x(yz) : C$}
\UnaryInfC{$x:B \rarr C, y:A \rarr B  \fCenter \lambda z. \, x(yz) : A \rarr C$}
\UnaryInfC{$x:B \rarr C \fCenter \lambda y. \lambda z. \, x(yz) : (A \rarr B) \rarr (A \rarr C)$}
\UnaryInfC{$\fCenter \lambda x. \lambda y. \lambda z. \, x(yz) : (B \rarr C) \rarr (A \rarr B) \rarr (A \rarr C)$}
\end{prooftree}
Now we take $A = B = C = \mathbf{1}$ in $\TLC$. The interpretation of the open term 
\[  x:B \rarr C, y:A \rarr B, z:A \vdash x(yz) : C \]
is as follows:
\begin{center}
\psset{unit=1in,cornersize=absolute,dimen=middle}%
\begin{pspicture}(-0,-0.873837)(1,0.033837)%
\psset{linewidth=1pt}%
\newgray{fillval}{0.3}
\pscircle[fillstyle=solid,fillcolor=fillval](0.02,0){0.02}
\pscircle[fillstyle=solid,fillcolor=fillval](0.26,0){0.02}
\pscircle[fillstyle=solid,fillcolor=fillval](0.5,0){0.02}
\pscircle[fillstyle=solid,fillcolor=fillval](0.74,0){0.02}
\pscircle[fillstyle=solid,fillcolor=fillval](0.98,0){0.02}
\pscircle[fillstyle=solid,fillcolor=fillval](0.02,-0.84){0.02}
\psarc(0.38,-0.02){0.12}{180}{360}
\psarc(0.86,-0.02){0.12}{-179.999999}{-0.000001}
\psline(0.02,-0.02)(0.02,-0.82)
\uput{0.5ex}[u](0.02,0.02){$x^+$}
\uput{0.5ex}[u](0.26,0.02){$x^-$}
\uput{0.5ex}[u](0.5,0.02){$y^+$}
\uput{0.5ex}[u](0.74,0.02){$y^-$}
\uput{0.5ex}[u](0.98,0.02){$z^+$}
\uput{0.5ex}[d](0.02,-0.86){$o$}
\end{pspicture}%
\end{center}
Here $x^{+}$ is the output of $x$, and $x^{-}$ the input, and similarly for $y$. The output  of the whole expression is $o$.
When we abstract the variables, we obtain the following  caps-only diagram:
\begin{center}
\psset{unit=1in,cornersize=absolute,dimen=middle}%
\begin{pspicture}(0,-0.06761)(1,0.5)%
\psset{linewidth=1pt}%
\newgray{fillval}{0.3}
\pscircle[fillstyle=solid,fillcolor=fillval](0.016129,0){0.016129}
\pscircle[fillstyle=solid,fillcolor=fillval](0.209677,0){0.016129}
\pscircle[fillstyle=solid,fillcolor=fillval](0.403226,0){0.016129}
\pscircle[fillstyle=solid,fillcolor=fillval](0.596774,0){0.016129}
\pscircle[fillstyle=solid,fillcolor=fillval](0.790323,0){0.016129}
\pscircle[fillstyle=solid,fillcolor=fillval](0.983871,0){0.016129}
\psarcn(0.306452,0.016129){0.096774}{-180}{-360}
\psarcn(0.693548,0.016129){0.096774}{179.999999}{0.000001}
\psarcn(0.5,0.016129){0.483871}{-180}{-360}
\uput{0.5ex}[d](0.983871,-0.016129){$x^+$}
\uput{0.5ex}[d](0.790323,-0.016129){$x^-$}
\uput{0.5ex}[d](0.596774,-0.016129){$y^+$}
\uput{0.5ex}[d](0.403226,-0.016129){$y^-$}
\uput{0.5ex}[d](0.209677,-0.016129){$z^+$}
\uput{0.5ex}[d](0.016129,-0.056452){$o$}
\end{pspicture}%
\end{center}

\vsa
\noindent Now we consider an application $\Bcomb a b c$:
\begin{center}
\psset{unit=1in,cornersize=absolute,dimen=middle}%
\begin{pspicture}(0,-0.97561)(4,0.432056)%
\psset{linewidth=1pt}%
\newgray{fillval}{0.3}
\pscircle[fillstyle=solid,fillcolor=fillval](0.013937,0){0.013937}
\pscircle[fillstyle=solid,fillcolor=fillval](0.181185,0){0.013937}
\pscircle[fillstyle=solid,fillcolor=fillval](0.348432,0){0.013937}
\pscircle[fillstyle=solid,fillcolor=fillval](0.515679,0){0.013937}
\pscircle[fillstyle=solid,fillcolor=fillval](0.682927,0){0.013937}
\pscircle[fillstyle=solid,fillcolor=fillval](0.850174,0){0.013937}
\pscircle[fillstyle=solid,fillcolor=fillval](1.43554,0){0.013937}
\pscircle[fillstyle=solid,fillcolor=fillval](1.602787,0){0.013937}
\pscircle[fillstyle=solid,fillcolor=fillval](1.770035,0){0.013937}
\pscircle[fillstyle=solid,fillcolor=fillval](1.937282,0){0.013937}
\pscircle[fillstyle=solid,fillcolor=fillval](2.10453,0){0.013937}
\psarcn(0.264808,0.013937){0.083624}{-180}{-360}
\psarcn(0.599303,0.013937){0.083624}{179.999999}{0.000001}
\psarcn(0.432056,0.013937){0.418118}{-180}{-360}
\psarc(1.142857,-0.013937){0.292683}{180}{360}
\psarc(1.142857,-0.013937){0.45993}{180}{360}
\psarc(1.142857,-0.013937){0.627178}{180}{360}
\psarc(1.142857,-0.013937){0.794425}{180}{360}
\psarc(1.142857,-0.013937){0.961672}{180}{360}
\uput{0.5ex}[d](0.850174,-0.013937){$x^+$}
\uput{0.5ex}[d](0.682927,-0.013937){$x^-$}
\uput{0.5ex}[d](0.515679,-0.013937){$y^+$}
\uput{0.5ex}[d](0.348432,-0.013937){$y^-$}
\uput{0.5ex}[d](0.181185,-0.013937){$z^+$}
\uput{0.5ex}[d](0.013937,-0.04878){$o$}
\psframe(1.43554,0.069686)(1.602787,0.209059)
\rput(1.519164,0.139373){$a$}
\psframe(1.770035,0.069686)(1.937282,0.209059)
\rput(1.853659,0.139373){$b$}
\psframe(2.04878,0.069686)(2.160279,0.209059)
\rput(2.10453,0.139373){$c$}
\psframe(3.275261,0.069686)(3.442509,0.209059)
\rput(3.358885,0.139373){$a$}
\psframe(3.581882,0.069686)(3.749129,0.209059)
\rput(3.665505,0.139373){$b$}
\psframe(3.888502,0.069686)(4,0.209059)
\rput(3.944251,0.139373){$c$}
\psline(3.317073,0.069686)(3.317073,-0.069686)
\psarc(3.512195,0.069686){0.111498}{180}{360}
\psarc(3.811847,0.069686){0.10453}{180}{360}
\uput{0.5ex}[d](3.317073,-0.069686){$o$}
\rput(2.71777,0.139373){$=$}
\end{pspicture}%
\end{center}

\subsection{Discussion}
The typed $\lambda$-calculus we have used here is in fact a fragment of the \emph{Lambek calculus} \cite{Lam58}, a basic non-commutative logic and $\lambda$-calculus, which has found extensive applications in computational linguistics \cite{Busz,Moort}. The Lambek calculus can be interpreted in any monoidal biclosed category, and has notions of \emph{left abstraction and application}, as well as the right-handed versions we have described here. Pivotal categories have stronger properties than monoidal biclosure; for example, duality and adjoints allow the left- and right-handed versions of abstraction and application to be defined in terms of each other. Moreover, the duality means that the corresponding logic has a \emph{classical} format, with an involutive negation.
Thus there is much more to this topic than we have had the time to discuss here. We merely hope to have given an impression of how the geometric ideas expressed in the Temperley-Lieb category have natural connections to a central topic in Logic and Computation.

\section{Further Directions}

We hope to have given an indication of the rich and suggestive connections which exist between ideas stemming from knot theory, topology and mathematical physics, on the one hand, and logic and computation on the other, with the Temperley-Lieb category serving as an intuitive and compelling meeting point. We hope that further investigation will uncover deeper links and interplays, leading to new insights in both directions.

We conclude with a few specific directions for future work:

\begin{itemize}
\item The symmetric case,  where we drop the planarity constraint, is also interesting. The algebraic object corresponding to the Temperley-Lieb algebra in this case is  the \emph{Brauer algebra} \cite{Brauer}, important in the representation theory of the Orthogonal group (Schur-Weyl duality).
Indeed, there are now a family of various kinds of diagram algebras: partition algebras, rook algebras etc., arising in quantum statistical mechanics, and studied in Representation Theory \cite{HalvRam}. 
\item The categorical perspective suggests \emph{oriented} versions of the Temperley-Lieb algebra and related structures, where we no longer have $A = A^{*}$. This is also natural from the point of view of Quantum Mechanics, where this non-trivial duality on objects distinguishes complex from  real Hilbert spaces.
\item We can ask how expressive planar Geometry of Interaction is; and what r\^ole may be played by braiding or other geometric information.
\item Again, it would be interesting to understand the scope and limits of planar Quantum Mechanics and Quantum information processing.
\end{itemize}

\bibliographystyle{plain}
\bibliography{tam}

\begin{thebibliography}{10}

\bibitem{Retrac}
S.~Abramsky.
\newblock Retracing some paths in process algebra.
\newblock In U.~Montanari and V.~Sassone, editors, {\em Proceedings of {CONCUR}
  `96}, volume 1119 of {\em Springer Lecture Notes in Computer Science}, pages
  1--17. Springer-Verlag, 1996.

\bibitem{HLM}
S.~Abramsky.
\newblock High-level methods for quantum computation and information.
\newblock In {\em Proceedings of the 19th Annual IEEE Symposium on Logic in
  Computer Science}, pages 410--414. IEEE Computer Science Press, 2004.

\bibitem{Abr05}
S.~Abramsky.
\newblock Abstract scalars, loops, and free traced and strongly compact closed
  categories.
\newblock In J.~Fiadeiro, editor, {\em Proceedings of {CALCO} 2005}, volume
  3629 of {\em Springer Lecture Notes in Computer Science}, pages 1--31.
  Springer-Verlag, 2005.

\bibitem{AbrCoe2}
S.~Abramsky and B.~Coecke.
\newblock A categorical semantics of quantum protocols.
\newblock In {\em Proceedings of the 19th Annual IEEE Symposium on Logic in
  Computer Science}, pages 415--425. IEEE Computer Science Press.
  quant-ph/0402130, 2004.

\bibitem{AbrCoe3}
S.~Abramsky and B.~Coecke.
\newblock Abstract physical traces.
\newblock {\em Theory and Applications of Categories}, 14:111--124, 2005.

\bibitem{AD04}
S.~Abramsky and R.~W. Duncan.
\newblock A categorical quantum logic.
\newblock {\em Mathematical Structures in Computer Science}, 16:469--489, 2006.

\bibitem{AHS}
S.~Abramsky, E.~Haghverdi, and P.~J. Scott.
\newblock Geometry of interaction and linear combinatory algebras.
\newblock {\em Mathematical Structures in Computer Science}, 12:625--665, 2002.

\bibitem{NFGoI}
S.~Abramsky and R.~Jagadeesan.
\newblock New foundations for the geometry of interaction.
\newblock {\em Information and Computation}, 111:53--119, 1994.

\bibitem{Bar84}
H.~P. Barendregt.
\newblock {\em The Lambda Calculus}, volume 103 of {\em Studies in Logic}.
\newblock North-Holland, 1984.

\bibitem{BBC}
C.~H. Bennet, G.~Brassard, C.~Cr\'epeau, R.~Jozsa, A.~Peres, and W.~K. Wooters.
\newblock Teleporting an unknown quantum state via dual classical and
  {E}instein-{P}odolsky-{R}osen channels.
\newblock {\em Physical Review Letters}, 70:1895--1899, 1993.

\bibitem{BvN}
G.~Birkhoff and J.~von Neumann.
\newblock The logic of quantum mechanics.
\newblock {\em Annals of Mathematics}, 37:823--843, 1936.

\bibitem{Brauer}
R.~Brauer.
\newblock On algebras which are connected with the semisimple continuous
  groups.
\newblock {\em Ann. Math.}, 38:854--872, 1937.

\bibitem{Busz}
W.~Buszkowski.
\newblock Mathematical linguistics and proof theory.
\newblock In J.~van Benthem and A.~ter Meulen, editors, {\em Handbook of Logic
  and Language}, chapter~12, pages 683--736. Elsevier, 1997.

\bibitem{Church}
A.~Church.
\newblock {\em The Calculi of Lambda Conversion}.
\newblock Princeton University Press, 1941.

\bibitem{deLL}
B.~Coecke.
\newblock De-linearizing linearity: projective quantum axiomatics from strong
  compact closure.
\newblock {\em Electronic Notes in Theoretical Computer Science}, 2006.
\newblock To appear.

\bibitem{Paquette}
B.~Coecke and E.~O. Paquette.
\newblock Generalized measurements and {N}aimark's theorem without sums.
\newblock To appear in Fourth Workshop on Quantum Programming Languages, 2006.

\bibitem{CoePav}
B.~Coecke and D.~Pavlovic.
\newblock Quantum measurements without sums.
\newblock In G.~Chen, L.~Kauffman, and S.~Lamonaco, editors, {\em Mathematics
  of Quantum Computing and Technology}. Taylor and Francis, 2006.
\newblock To appear.

\bibitem{Die}
J.~Diestel.
\newblock {\em Graph Theory}.
\newblock Springer-Verlag, 1997.

\bibitem{Dirac}
P.~A.~M. Dirac.
\newblock {\em The Principles of Quantum Mechanics (third edition)}.
\newblock Oxford University Press, 1947.

\bibitem{Dos}
K.~Dosen and Z.~Petric.
\newblock Self-adjunctions and matrices.
\newblock {\em Journal of Pure and Applied Algebra}, 184:7--39, 2003.

\bibitem{FY}
P.~Freyd and D.~Yetter.
\newblock Braided compact closed categories with applications to
  low-dimensional topology.
\newblock {\em Advances in Mathematics}, 77:156--182, 1989.

\bibitem{LL}
J.-Y. Girard.
\newblock Linear {L}ogic.
\newblock {\em Theoretical Computer Science}, 50(1):1--102, 1987.

\bibitem{GoI}
J.-Y. Girard.
\newblock Geometry of {I}nteraction {I}: Interpretation of {S}ystem {F}.
\newblock In R.~Ferro, editor, {\em Logic Colloquium '88}, pages 221--260.
  North-Holland, 1989.

\bibitem{Gottesman}
D.~Gottesman and I.~L. Chuang.
\newblock Quantum teleportation is a universal computational primitive.
\newblock {\em Nature}, 402:390--393, 1999.

\bibitem{HalvRam}
T.~Halvorson and A.~Ram.
\newblock Partition algebras.
\newblock {\em European J. of Combinatorics}, 26(1):869--921, 2005.

\bibitem{Jon}
V.~F.~R. Jones.
\newblock A polynomial invariant for links via von {N}eumann algebras.
\newblock {\em Bulletin of the Amer. Math. Soc.}, 129:103--112, 1985.

\bibitem{JSV}
A.~Joyal, R.~Street, and D.~Verity.
\newblock Traced monoidal categories.
\newblock {\em Mathematical Proceedings of the Cambridge Philosophical
  Society}, 119:447--468, 1996.

\bibitem{Kassel}
C.~Kassel.
\newblock {\em Quantum Groups}.
\newblock Springer, 1995.

\bibitem{Kau}
L.~H. Kauffman.
\newblock An invariant of regular isotopy.
\newblock {\em Trans. Amer. Math. Soc.}, 318(2):417--471, 1990.

\bibitem{Kauffmanbook}
L.~H. Kauffman.
\newblock {\em Knots in Physics}.
\newblock World Scientific Press, 1994.

\bibitem{KnoLog}
L.~H. Kauffman.
\newblock Knot {L}ogic.
\newblock In L.~H. Kauffman, editor, {\em Knots and Applications}, pages
  1--110. World Scientific Press, 1995.

\bibitem{KauII}
L.~H. Kauffman.
\newblock Biologic {II}.
\newblock In N.~Tongring and R.~C. Penner, editors, {\em Woods Hole
  Mathematics}, pages 94--132. World Scientific Press, 2004.

\bibitem{KauDiag}
L.~H. Kauffman.
\newblock Knot diagrammatics.
\newblock In W.~W. Menasco and M.~Thistlethwaite, editors, {\em Handbook of
  Knot Theory}. Elsevier, 2005.

\bibitem{KL}
G.~M. Kelly and M.~L. Laplaza.
\newblock Coherence for compact closed categories.
\newblock {\em Journal of Pure and Applied Algebra}, 19:193--213, 1980.

\bibitem{Laflamme}
M.~Laforest, R.~Laflamme, and J.~Baugh.
\newblock Time-reversal formalism applied to maximal bipartite entanglement:
  Theoretical and experimental exploration. quant-ph/0510048.
\newblock Unpublished.

\bibitem{Lam58}
J.~Lambek.
\newblock The mathematics of sentence structure.
\newblock {\em Amer. Math. Monthly}, 65:154--170, 1958.

\bibitem{Fullabs}
R.~Milner.
\newblock Fully abstract models of typed lambda-calculus.
\newblock {\em Theoretical Computer Science}, 4:1--22, 1977.

\bibitem{Moort}
M.~Moortgat.
\newblock Categorial type logic.
\newblock In J.~van Benthem and A.~ter Meulen, editors, {\em Handbook of Logic
  and Language}, chapter~2, pages 93--177. Elsevier, 1997.

\bibitem{Briegel}
R.~Raussendorf and H.-J. Briegel.
\newblock A one-way quantum computer.
\newblock {\em Physical Review Letters}, 86:5188, 2001.

\bibitem{Briegel2}
R.~Raussendorf, D.~Browne, and H.-J. Briegel.
\newblock Measurement-based quantum computation on cluster states.
\newblock {\em Physical Review A}, 68:022312, 2003.

\bibitem{Redei}
M.~R\'edei.
\newblock Why {J}ohn von {N}eumann did not like the {H}ilbert space formalism
  of quantum mechanics (and what he liked instead).
\newblock {\em Studies in History and Philosophy of Modern Physics},
  27:493--510, 1997.

\bibitem{Selinger}
P.~Selinger.
\newblock Dagger compact closed categories and completely positive maps.
\newblock {\em Electronic Notes in Theoretical Computer Science}, 2006.
\newblock To appear.

\bibitem{TemLie}
H.~N.~V. Temperley and E.~H. Lieb.
\newblock Relations between the `percolation' and `coloring' problem and other
  graph-theoretical problems associated with regular planar lattices: some
  exact results for the `percolation' problem.
\newblock {\em Proc. Roy. Soc. Lond. A}, 322:251--280, 1971.

\bibitem{Wit}
E.~Witten.
\newblock Topological quantum field theory.
\newblock {\em Communications in Mathematical Physics}, 117:353---386, 1988.

\bibitem{Swap}
M.~Zukowski, A.~Zeilinger, M.~A. Horne, and A.~K. Ekert.
\newblock `{E}vent-ready-detectors' {B}ell experiment via entanglement
  swapping.
\newblock {\em Physical Review Letters}, 71:4287--4290, 1993.

\end{thebibliography}

\end{document}